\documentclass[conference]{IEEEtran}

\usepackage[T1]{fontenc}
\usepackage[utf8]{inputenc}

\usepackage{hyperref}
\usepackage{cite}

\usepackage[svgnames]{xcolor}
  \definecolor{diffstart}{named}{Grey}
  \definecolor{diffincl}{named}{Green}
  \definecolor{diffrem}{named}{OrangeRed}

\usepackage{listings}
  \lstdefinelanguage{diff}{
    basicstyle=\ttfamily\small,
    morecomment=[f][\color{diffstart}]{@@},
    morecomment=[f][\color{diffincl}]{+\ },
    morecomment=[f][\color{diffrem}]{-\ },
  }

\usepackage{amsmath,amssymb,amsfonts} 
\usepackage[noend]{algpseudocode}
\usepackage{algorithmicx, algorithm}
\usepackage{graphicx} 
\usepackage{textcomp}

\usepackage{tikz}
\usetikzlibrary{arrows,decorations.markings, positioning, shapes}
\usepackage{tabu}

\usepackage{caption}
\usepackage{subcaption} 

\hyphenation{con-straints}
\pagestyle{plain}

\usepackage{mathcommands}

\usepackage{scheduling}

\usepackage[normalem]{ulem}

\usepackage{amsthm}

\newtheorem{thm}{Theorem}
\newtheorem{prop}[thm]{Proposition}
\newtheorem{lem}[thm]{Lemma}
\newtheorem{cor}[thm]{Corollary}

\theoremstyle{definition}
\newtheorem{defn}[thm]{Definition}

\theoremstyle{remark}

\newtheorem{exmpl}[thm]{Example}



\newcommand{\figurestretchvert}{0.47} 
\newcommand{\Rtilde}{\widetilde{R}}
\newcommand{\Btilde}{\widetilde{B}}

\newcommand{\Jbb}{\mathbb{J}}

\DeclareMathOperator{\TFP}{TFP}
\DeclareMathOperator{\EL}{EL}

\begin{document}
	
	\title{EDF-Like Scheduling for Self-Suspending Real-Time Tasks}
	
	\author{\IEEEauthorblockN{Mario~G\"unzel}
		\IEEEauthorblockA{\textit{TU Dortmund University}\\
			mario.guenzel@tu-dortmund.de}
		\and
		\IEEEauthorblockN{Kuan-Hsun Chen}
		\IEEEauthorblockA{\textit{University of Twente}\\
			k.h.chen@utwente.nl}
		\and
		\IEEEauthorblockN{Jian-Jia~Chen}
		\IEEEauthorblockA{\textit{TU Dortmund University}\\
			jian-jia.chen@cs.tu-dortmund.de}
	}

	\maketitle

	\begin{abstract}
In real-time systems, schedulability tests are utilized to
provide timing guarantees.  However, for self-suspending task
sets, current suspension-aware schedulability tests are limited to Task-Level Fixed-Priority~(TFP) scheduling or Earliest-Deadline-First~(EDF) with
constrained-deadline task systems.  In this work 
we provide a unifying schedulability test for the uniprocessor version of Global EDF-Like (GEL) schedulers and arbitrary-deadline task sets.
A large body of
existing scheduling algorithms can be considered as EDF-Like, such as EDF,
First-In-First-Out~(FIFO), Earliest-Quasi-Deadline-First~(EQDF) and Suspension-Aware EDF~(SAEDF). Therefore, the unifying schedulability test is
applicable to those algorithms. 
Moreover, the schedulability test can be applied to TFP scheduling as well.

Our analysis is the first suspension-aware
schedulability test applicable to arbitrary-deadline sporadic
real-time task systems under Job-Level Fixed-Priority (JFP) scheduling, such as EDF.
Moreover, it is the first unifying suspension-aware
schedulability test framework that covers a wide range of
scheduling algorithms.
Through numerical simulations, we show
that the schedulability test outperforms the state of the art
for EDF under constrained-deadline scenarios. Moreover, we
demonstrate the performance of different configurations under
EQDF and SAEDF. 
	\end{abstract} 
	
	\section{Introduction}
	\label{sec:introduction}

	In real-time systems, jobs (task instances) are released
    recurrently by a real-time task, which has to satisfy its timing
    constraints.  More specifically, each job has to finish no
    later than its absolute deadline, which is its relative deadline
    plus its release time.  To derive a schedule from the jobs
    released by tasks, a \emph{scheduling algorithm} is utilized.  To
    ensure timing correctness, schedulability tests for scheduling
    algorithms have to be provided which guarantee that all deadlines
    are met.  For \emph{self-suspending} task sets, a job may release
    its occupation of the processor before being completed, and may
    wait, for instance due to computation offloading or hardware
    acceleration, until the requested service is completed.  When
    considering self-suspension, providing timing guarantees becomes
    more complex. The main reason is that the classical worst-case
    response time and schedulability analyses, such as the critical
    instant theorem~\cite{liu73scheduling}, Time-Demand Analysis (TDA)~\cite{joseph86responsetimes,lehoczky-1989}, or the demand bound
    function~\cite{baruah_sporadic}, are typically based on an
    assumption that a job, after it is released, is either executed or
    waiting to be executed in the ready queue until it finishes.
    Extending such classical analyses to self-suspending task systems
    is non-trivial, and has been demonstrated to be prone to flaw. In
    the literature, a large number of results analyzing
    self-suspending behavior has recently been reported to be
    flawed, c.f.,~\cite{suspension-review-jj, DBLP:journals/rts/GunzelC20, gunzel2021note}.
	
	In the literature, there are two self-suspension models that are most studied, namely the \emph{segmented} self-suspension model ~\cite{RTCSA-BletsasA05, RTSS-ChenL14, Huang:multiseg,DBLP:conf/rtcsa/PengF16,WC16-suspend-DATE, Kim2016,DBLP:conf/ecrts/ChenHHMB19,ecrts15nelissen} and the \emph{dynamic} self-suspension model~\cite{ECRTS-AudsleyB04, huangpass:dac2015, LiuChen:rtss2014,ChenECRTS2016-suspension,DBLP:conf/ecrts/Devi03,guenzel2020sched_test_edf}.
	In the segmented self-suspension model, the sequence of execution and suspension behavior of all jobs is predefined for each task.
	More specifically, for each segment a suspension or execution upper bound is given and the number of segments is fixed.
	In the dynamic self-suspension model, the segmented structure is
    not predefined. 	The jobs of one task may suspend as often and
    as long as the maximum suspension time is not exceeded. Detailed
    discussions of these two models can be found in the survey paper
    by Chen~et~al.~\cite{suspension-review-jj}. A hybrid self-suspension model was
    proposed by von~der~Br\"uggen~et~al.~\cite{vdBrueggen-RTCSA2017}, which can improve the modelling accuracy of the dynamic
    self-suspension model and increase the flexibility of the segmented self-suspension model.
	In this work, we focus on dynamic self-suspending tasks on a single processor, whereas the scheduling algorithm and (sufficient) schedulability test can be used for segmented self-suspension model as well.
	
	In Task-Level \emph{Fixed-Priority} (TFP) scheduling algorithms, the priority is assigned to tasks, i.e., if one task has a higher priority than another task, then all of its jobs are favored to be executed.
	In \cite{huangpass:dac2015,ChenECRTS2016-suspension,RTCSA-KimCPKH95,ECRTS-AudsleyB04} and \cite[Page 162]{Liu:2000:RS:518501} the problem of finding schedulability tests for dynamic self-suspending tasks under preemptive task-level fixed-priority scheduling has been examined.
	Specifically, in \cite{ChenECRTS2016-suspension} a dominating schedulability test for this scenario has been derived. We note that the classical critical instant theorem does not hold anymore when tasks may suspend and the earlier results in \cite{RTCSA-KimCPKH95,ECRTS-AudsleyB04} have been disproved, c.f.,~\cite{suspension-review-jj}.

	In task-level dynamic-priority scheduling algorithms, the priority of the jobs of one task may differ at different time instants. The study of suspension-aware schedulability tests for task-level dynamic-priority scheduling algorithms has been limited to the Earliest-Deadline-First (EDF) algorithm, in which the priority of a job is specified by its absolute deadline. Devi~\cite{DBLP:conf/ecrts/Devi03} provided a schedulability test for EDF without a proof and has been recently disproved by G\"unzel~and~Chen~\cite{DBLP:journals/rts/GunzelC20}. Liu~and~Anderson~\cite{DBLP:conf/ecrts/LiuA13} and Dong~and~Liu~\cite{DBLP:conf/rtss/DongL16}  studied  global EDF on multiprocessor systems and provided schedulability tests, which are applicable for uniprocessor systems by setting the number of processors to one.  The only dedicated analysis for EDF on uniprocessor systems was provided by G\"unzel~et~al.~\cite{guenzel2020sched_test_edf}. They  provided a schedulability test which significantly improves the previous schedulability tests for uniprocessor systems in \cite{DBLP:conf/ecrts/LiuA13} and \cite{DBLP:conf/rtss/DongL16}. Furthermore, Chen~\cite{RTSS2016-suspension} proved that TFP, EDF, Least-Laxity-First (LLF), and Earliest-Deadline-Zero-Laxity~(EDZL) scheduling algorithms do not have constant speedup factors when the suspension cannot be sped up.
	
	The category of \emph{window-constrained} schedulers, where at each time job priorities are assigned according to a \emph{priority point (PP)}, has been proposed in the literature~\cite{DBLP:conf/rtss/LeontyevA07} originally for multiprocessor scheduling to provide general tardiness bounds.
	Recent results~\cite{DBLP:conf/ecrts/EricksonA12} consider \emph{Global EDF-Like (GEL)} scheduling algorithms, where the priority point of the window-constrained scheduler is induced by the job release and a task specific \emph{relative priority point}.
	The popular task-level dynamic-priority algorithms, such as EDF, First-In-First-Out~(FIFO) and EQDF~\cite{DBLP:conf/rtas/BackCS12} fall into this category.
	In~\cite{DBLP:conf/rtss/LeontyevCA09} a schedulability test for GEL scheduling is provided.
	However, suspension-aware schedulability tests have been
    limited to TFP scheduling and EDF
    scheduling for constrained-deadline sporadic real-time tasks as
    detailed in~\cite{suspension-review-jj}.
    In this work, we provide the first
    unifying suspension-aware schedulability test for uniprocessor EDF-Like (EL) scheduling that
    can be applied to a set of widely used scheduling algorithms and
    arbitrary-deadline task systems. 
  
	\noindent\textbf{\underline{Contributions}:} 
	\begin{itemize}
    \item In Section~\ref{sec:capabilities_and_limitations}, we demonstrate how EDF-Like (EL) scheduling algorithms can
    be configured to behave as EDF, FIFO, EQDF,
    suspension-aware EDF~(SAEDF) and TFP scheduling algorithms.
    \item In Section~\ref{sec:test}, we introduce a unifying schedulability test for uniprocessor EL
      scheduling algorithms, that is applicable to
      arbitrary-deadline task systems. To the best of our knowledge,
      this is the first result that can handle arbitrary-deadline task
      sets under Job-Level Fixed-Priority (JFP) scheduling and cover a wide range of scheduling algorithms in one
      analysis framework for self-suspending task
      systems. 
    \item We present the procedure to implement EL scheduling in
      RTEMS~\cite{rtems} and
      {$\text{LITMUS}^{\text{RT}}$}~\cite{litmusrt} in
      Section~\ref{sec:remark+realization}, followed by numerical
      evaluations in Section~\ref{sec:evaluation}.
      Our evaluation
      results show that our schedulability test outperforms the state
      of the art for EDF and is slightly worse than the schedulability
      test by Chen~et~al.~\cite{ChenECRTS2016-suspension} for
      Deadline-Monotonic (DM) scheduling under constrained-deadline
      scenarios.  Moreover, we demonstrate the performance of
      different configurations under EQDF and SAEDF.
	\end{itemize}

\section{System Model}
\label{sec:system_model}

In this work, we consider a set $\Tbb=\{\tau_1, \dots, \tau_n\}$ of $n$ independent self-suspending sporadic real-time tasks, in a uniprocessor system.
Each task $\tau_i,~i \in \{1,\dots, n\}$ is described by a $4$-tuple $\tau_i = (C_i, S_i, D_i, T_i)$, composed of worst-case execution time (WCET) $C_i \in [0,D_i]$, maximum suspension time $S_i\geq 0$, relative deadline $D_i\geq0$ and minimum inter-arrival time $T_i>0$.
The task $\tau_i$ releases infinitely many jobs, denoted by $\tau_{i,j},~j \in \Nbb$, at time $r_{i,j},~j \in \Nbb$.
Job releases are separated by at least $T_i$ time units, i.e., $r_{i,j+1} \geq r_{i,j} + T_i$.
Each job $\tau_{i,j}$ has to be executed for a certain amount of time $c_{i,j} \in [0,C_i]$ until its absolute deadline $d_{i,j} = r_{i,j} + D_i$.
In addition, each job suspends itself dynamically, i.e., it may suspend itself as often as desired without exceeding the maximum suspension time $S_i$.
We denote by $U_i := \frac{C_i}{T_i}$ the utilization of $\tau_i$ and by $U := \sum_{i=1}^{n} U_i$ the total utilization of $\Tbb$.
A task set $\Tbb$ is a \emph{constrained-deadline} task system if
$D_i \leq T_i$ is ensured for every $\tau_i \in \Tbb$. Otherwise, it is an
\emph{arbitrary-deadline} task system.
We assume a model where time is continuous. However, our results can be applied for discretized time as well.

A scheduling algorithm $\Acal$ specifies the execution behavior of jobs on the processor.
More specifically, they determine at each time, which of the jobs in the ready queue is scheduled by the processor.
In a certain schedule we denote for each job $\tau_{i,j}$ start $s_{i,j}$ and finish $f_{i,j}$ of its execution.
We say that a job $\tau_{i,j}$ is \emph{finished by} time $t$, if $f_{i,j}\leq t$. 
The length of the time interval from release to finish of a job $\tau_{i,j}$ is called the {response time} $R_{i,j} = f_{i,j}-r_{i,j}$.
Of special interest is the \emph{worst-case response time} $R_i = \sup_j R_{i,j}$ of a task $\tau_i$.
A schedule is \emph{feasible}, if all jobs finish before or at their absolute deadline, i.e., $f_{i,j} \leq d_{i,j}$ for all $\tau_{i,j}$ or $R_i \leq D_i$ for all $\tau_i$.
A task set is \emph{schedulable} by a scheduling algorithm $\Acal$ if for each job sequence released by $\Tbb$, $\Acal$ creates a feasible schedule.
In this work, we only consider preemptive, work-conserving scheduling, where job execution may be preempted to execute another job, and the processor executes a job whenever there is one in the ready queue.
In the following, we denote by $\Nbb$, $\Nbb_0$, and $\Rbb$ the sets of natural numbers, non-negative integers, and real numbers.


In EDF-Like (EL) scheduling, the priority of each job $\tau_{i,j}$ is based on a job-specific \emph{priority point} (PP) $\pi_{i,j} \in \Rbb$.
More specifically, a job $\tau_{i,j}$ has higher priority than $\tau_{i',j'}$ if  $\pi_{i,j}<\pi_{i',j'}$.
The priority point is induced by the release of the job and a task specific paramater $\Pi_i$ denoted by \emph{relative priority point}, i.e., $\pi_{i,j} = r_{i,j} + \Pi_i$.
As a result, smaller $\Pi_i$ in comparison to the other relative priority points favor the jobs of $\tau_i$ to be scheduled first.
The left hand side of Figure~\ref{fig:exmample_notation} depicts the notation used throughout this work.

\begin{defn}[Priority assignment in EL scheduling]
	\label{defn:priority_rp}
	Let $\tau_{i,j}$ and $\tau_{i',j'}$ be two different jobs obtained by tasks $\tau_i$ and $\tau_{i'}$ in~$\Tbb$.
	Furthermore, let $\Pi_i$ and $\Pi_{i'}$ be the relative priority points of $\tau_i$ and $\tau_{i'}$.
	The job $\tau_{i,j}$ has \emph{higher priority} than $\tau_{i',j'}$ if
	\begin{equation}
		\pi_{i,j} 
		< \pi_{i',j'} 
	\end{equation}
	for the priority points $\pi_{i,j} = r_{i,j} + \Pi_i$ and $\pi_{i',j'} = r_{i',j'} + \Pi_{i'}$.
	If the priority points coincide, i.e., $\pi_{i,j} = \pi_{i',j'}$, then the 
	tie is broken arbitrarily.
\end{defn}

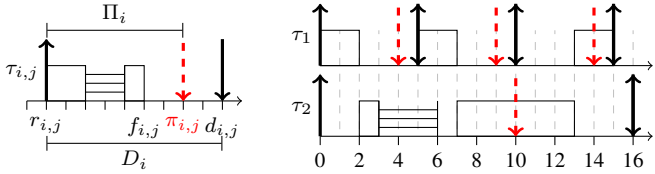
\begin{figure}
	\centering
	\begin{tikzpicture}[yscale=\figurestretchvert, xscale=0.26]
		\footnotesize{}   
		
		\begin{scope}[shift={(-14,1)}]
			\taskname{$\tau_{i,j}$}
			
			\timeline{-1}{10}{}
			\grid{0}{10}{1}{1}
			
			\releases{0}
			\deadlines{9}
			\glprio{7}
			
			\exec{0}{2}
			\exec{4}{5}
			
			\susp{2}{4}
			
			\node[below] at (0,-.25) {$r_{i,j}$};
			\node[below] at (5,-.25) {$f_{i,j}$};
			\node[below] at (7,-.25) {\color{red} $\pi_{i,j}$};
			\node[below] at (9,-.25) {$d_{i,j}$};
			
			\draw[|-|] (0,2) -- node[above] {$\Pi_i$} (7,2);
			\draw[|-|] (0,-1.2) -- node[below] {$D_i$} (9,-1.2);
		\end{scope}
		
		\foreach \x in {0,1,...,16}{
			\draw[-,very thin,lightgray, dashed](\x,0.2) -- (\x,3);
		}   
		
		\begin{scope}[shift={(0,2)}] 
			\taskname{$\tau_1$}
			
			\timeline{0}{17}{}
			
			\releases{0,5,10,15}
			\deadlines{5,10,15}
			
			\exec{0}{2}
			\exec{5}{7}
			\exec{13}{15}
			
			\glprio{4,9,14}
			
		\end{scope}
		
		\begin{scope}[shift={(0,0)}] 
			\taskname{$\tau_2$}
			
			\timeline{0}{17}{}
			\labelling{0}{17}{2}{0}
			
			\releases{0,16}
			\deadlines{16}
			
			\exec{2}{3}
			\exec{7}{13}
			
			\susp{3}{6}
			
			\glprio{10}
		\end{scope}
	\end{tikzpicture}
	\caption{Left: Presentation of our Notation. Right: Jobs of two tasks scheduled by EL scheduling. The schedule is feasible and the job priority is given by $\pi_{1,1}<\pi_{1,2} < \pi_{2,1} < \pi_{1,3}$.}\label{fig:exmample_notation}
\end{figure}

Under an assignment of relative priority points $(\Pi_1, \dots, \Pi_n)$, when a job is added to the ready queue, the new highest-priority job is determined and executed.
Whenever a job finishes or suspends itself, it is removed from the ready queue and a new highest-priority job has to be determined and is executed.

\begin{exmpl}
	The right hand side of Figure~\ref{fig:exmample_notation} shows an example schedule obtained by the task set $\Tbb=\{\tau_1,\tau_2\}$, with $\tau_1=(C_1=2, S_1=0, D_1=5, T_1=5)$ and $\tau_2=(C_2=7, S_2=3, D_2=16, T_2=16)$, when using EL scheduling with relative priority points $(\Pi_1=4,\Pi_2=10)$.
\end{exmpl}

We consider that the jobs of a task $\tau_i$ must be executed one
after another in a FIFO manner. If a job $\tau_{i,j}$ of task
$\tau_i$ does not finish execution before the next job
$\tau_{i,j+1}$ of $\tau_i$ is released, job $\tau_{i,j+1}$
cannot be executed even if the processor idles. That is, the
EL scheduling algorithm has to handle this situation as
follows: 1) It sets the priority of the job $\tau_{i,j+1}$ to
$r_{i,j+1} + \Pi_i$ when it arrives at $r_{i,j+1}$, but 2) this
job is only eligible for execution (i.e., ready to be executed) after
all jobs of $\tau_i$ released prior to $\tau_{i,j+1}$ finish execution.

\section{Capabilities and Limitations of EL Scheduling}
\label{sec:capabilities_and_limitations}

EDF-Like (EL) scheduling algorithms are Job-Level Fixed-Priority (JFP) scheduling algorithms, i.e., if one job has higher priority than another job, then it has higher priority at all times.
This is due to the priority definition by comparison of priority points.
Since the priority point of a job does not change once it is assigned, the priorities of jobs are fixed after they are released.
However, EL scheduling covers many of those JFP scheduling algorithms.
These are for example: 
\begin{itemize}
	\item Earliest-Deadline-First (EDF)~\cite{liu73scheduling} ($\Pi_i = D_i$)
	\item Earliest-Quasi-Deadline-First (EQDF)~\cite{DBLP:conf/rtas/BackCS12} \\($\Pi_i = D_i + \lambda C_i$ for some predefined $\lambda \in \Rbb$) 
	\item Suspension-aware variations of EDF (SAEDF)\\($\Pi_i = D_i + \lambda S_i$ for some predefined $\lambda \in \Rbb$) 
	\item First-In-First-Out (FIFO) ($\Pi_i = 0$)
\end{itemize}
As a result, the schedulability test that we present in Section~\ref{sec:test} is applicable to all these scheduling algorithms by configuring the relative priority points $\Pi_i$ accordingly.

Moreover, Task-Level Fixed-Priority (TFP) algorithms can be treated as EL scheduling algorithms as well.
For this purpose, we assume that the tasks are ordered by their priorities, i.e., $\tau_k$ has higher priority than $\tau_{k'}$ if and only if $k<k'$, and $\tau_1$ has the highest priority.
Furthermore, we assume that a worst-case response time upper bound $K_j$ for each task either for the schedule under EL scheduling or under TFP scheduling is given.
If we set the relative priority point of each task $\tau_i$ to $\Pi_i = \sum_{j=1}^{i} K_j$, then EL and TFP coincide.
Please note that the response time 
of tasks may be unbounded and that for such cases a TFP algorithm cannot be treated as EL scheduling algorithm.
However, these cases do not apply in practical scenarios since in real-time systems it is required that the jobs of all tasks finish until their deadline.

\begin{prop}[TFP as EL.]\label{prop:TFPasEL}
	Let $K_1, \dots, K_n \in \Rbb_{\geq 0}$ and $\Pi_i := \sum_{j=1}^{i} K_j$ for $i=1,\dots, n$.
	If for all $j=1,\dots, n$ the value $K_j$ is an upper bound on the worst-case response time of $\tau_j$ in EL or in TFP, then the schedule of $\Tbb$ under EL and the schedule of $\Tbb$ under TFP coincide.
\end{prop}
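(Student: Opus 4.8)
The plan is to show that the EL schedule with $\Pi_i=\sum_{l=1}^{i}K_l$ and the TFP schedule never diverge. I would let $t^\star$ be the infimum of the times at which the two schedules execute different jobs (or one idles while the other runs a job), assume $t^\star<\infty$ for contradiction, and exploit that the two runs agree on $[0,t^\star)$. On this common prefix every job has received the same cumulative execution, so the two runs agree on which jobs have finished before $t^\star$; together with the FIFO-within-task rule (at most one job per task is eligible) and the fact that a job's suspension pattern is part of the released job sequence, this forces the \emph{eligible} (ready) sets at time $t^\star$ to coincide. Hence it suffices to show that, facing the same eligible set, the two schedulers pick the same job.

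This reduces to a priority claim that I would prove for every time $t$ at which the schedules agree on $[0,t)$: if $\tau_{i,j}$ is the eligible job of the smallest-index (highest TFP priority) task $\tau_i$, then $\pi_{i,j}<\pi_{i',j'}$ for every other eligible job $\tau_{i',j'}$, so EL selects $\tau_{i,j}$ as well. Every other eligible job has $i'>i$. As $\tau_{i,j}$ is eligible, it is released, so $r_{i,j}\le t$ and $\pi_{i,j}=r_{i,j}+\sum_{l=1}^{i}K_l\le t+\sum_{l=1}^{i}K_l$. As $\tau_{i',j'}$ is eligible, it has not finished by $t$ in either schedule (the finished-by-$t$ status is fixed by the common prefix), so in whichever of EL/TFP the bound $K_{i'}$ is valid we get $t<f_{i',j'}\le r_{i',j'}+K_{i'}$, hence $r_{i',j'}>t-K_{i'}$. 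Then $\pi_{i',j'}=r_{i',j'}+\sum_{l=1}^{i'}K_l>t+\sum_{l=1}^{i'-1}K_l\ge t+\sum_{l=1}^{i}K_l\ge\pi_{i,j}$, using $i'-1\ge i$ and $K_l\ge 0$. Thus $\tau_{i,j}$ is the \emph{strict} minimizer of the priority point among eligible jobs, so EL picks it irrespective of the tie-breaking rule; and if no task has an eligible job at $t$, both schedulers idle.

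Applying the claim at $t=t^\star$ shows the two schedulers also agree there, and a standard argument then extends the agreement to $[0,t^\star+\varepsilon)$ for some $\varepsilon>0$ (job releases are locally finite because inter-arrival times are positive, and the next finish or suspension event after $t^\star$ is determined by the now-common prefix), contradicting the choice of $t^\star$. I expect the main difficulty to lie not in the priority inequality but in this continuous-time/self-suspension bookkeeping: making precise that agreement on $[0,t^\star)$ forces the eligible sets to coincide at $t^\star$ (in particular handling jobs that finish exactly at $t^\star$ and the dynamic suspension decisions), and that both admissible readings of $K_j$ -- a worst-case response time bound in EL \emph{or} in TFP -- may be invoked. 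Both points follow from the single observation that, on the common prefix, the cumulative execution and hence the finished/unfinished status of every job is identical in the two schedules.
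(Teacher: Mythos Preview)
Your argument is correct and takes a different route from the paper. The paper argues by contradiction on the \emph{job} level: it picks the EL-highest-priority job $\tau_{k,\ell}$ whose execution differs under the two schedulers, considers the window $I=[r_{k,\ell}, r_{k,\ell}+K_k)$, partitions the potentially interfering jobs during $I$ according to whether their task index is $<k$, $=k$, or $>k$, and shows each part is identical under TFP and EL; hence $\tau_{k,\ell}$ itself must be scheduled identically, a contradiction. You instead work on the \emph{time} axis: take the first divergence instant $t^\star$, observe that the common prefix forces identical ready sets at $t^\star$, and prove that the minimal priority point among ready jobs sits at the eligible job of the smallest-index task. The decisive inequality is essentially the same in both proofs---use the bound $K_{i'}$ on an unfinished eligible job to get $r_{i',j'}>t-K_{i'}$ and hence $\pi_{i',j'}>t+\Pi_{i'-1}\ge t+\Pi_i\ge\pi_{i,j}$---but your operational framing is more direct and avoids the three-way case split, while the paper's job-based induction sidesteps the continuous-time bookkeeping you correctly flag (coincidence of ready sets at the limit point, determinism of suspension on the common prefix) since it only needs that all jobs of strictly higher EL priority already coincide.
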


\begin{proof}
	For an indirect proof we assume that the schedule of $\Tbb$ under EL and TFP does not coincide.
	Let $\tau_{k,\ell}$ be the job with the highest priority in the EL schedule, such that the schedule of $\tau_{k,\ell}$ does not coincide under EL and TFP.
	We define the interval 
	\begin{equation}
		I := [r_{k,\ell}, r_{k,\ell} + K_k).
	\end{equation}
	Let $\Jbb_{\TFP}$ and $\Jbb_{\EL}$ be the set of jobs with higher priority than $\tau_{k,\ell}$ under TFP and under EL that are executed during $I$.
	We will reach a contradiction by showing that $\Jbb_{\TFP}=\Jbb_{\EL}$ and that further the schedule of the jobs in $\Jbb_{\TFP}=\Jbb_{\TFP}$ coincides under TFP and under EL scheduling.
	For that purpose we partition the job sets into three subsets
	$\Jbb_{\TFP} = \coprod_{i \in \setof{+, -, 0}} \Jbb_{\TFP}^i$ and
	$\Jbb_{\EL} = \coprod_{i \in \setof{+, -, 0}} \Jbb_{\EL}^i$,
	where each of them denotes the subset of jobs released by tasks of $\Tbb_{+} = \setof{\tau_{k+1}, \dots, \tau_n}$, $\Tbb_{-} = \setof{\tau_1, \dots, \tau_{k-1}}$ or $\Tbb_{0} = \setof{\tau_k}$.
	In the following we show that $\Jbb_{\TFP}^i = \Jbb_{\EL}^i$ for all $i \in \setof{+,-,0}$.
	
	\paragraph{$\Jbb_{\TFP}^+ = \Jbb_{\EL}^+$}
	Under TFP, there are no jobs of $\Tbb_{+}$ with higher priority than $\tau_{k,\ell}$, i.e., $\Jbb_{\TFP}^+ = \emptyset$.
	Under EL, we choose any job $\tau_{i,j}$, with $\tau_i \in \Tbb_{+}$, that has higher priority than $\tau_{k,\ell}$, i.e., $\pi_{i,j}=r_{i,j} + \Pi_i \leq \pi_{k,\ell} = r_{k,\ell} + \Pi_k$ holds.
	By subtracting $\Pi_k$ we obtain
	\begin{equation}\label{eq:pf_prop_TFPasEL_1}
	 	r_{i,j} + K_i \leq r_{i,j} + (\Pi_i-\Pi_k) \leq r_{k,\ell}.
	\end{equation}
 	Since $\tau_{i,j}$ has higher priority than $\tau_{k,\ell}$, by assumption the schedule of $\tau_{i,j}$ coincides under EL and TFP, and we have $f_{i,j} \leq r_{i,j} + K_i$.
 	With Equation~\eqref{eq:pf_prop_TFPasEL_1} we conclude $f_{i,j} \leq r_{k,\ell}$.
 	In particular, the job $\tau_{i,j}$ is not executed during $I$.
 	Since $\tau_{i,j}$ was chosen arbitrarily, this means that $\Jbb_{\EL}^+ = \emptyset$ as well.
 	
 	\paragraph{$\Jbb_{\TFP}^- = \Jbb_{\EL}^-$}
 	Under TFP and under EL, jobs of the tasks in $\Tbb_{-}$ 
 	can only be executed during $I$ if they are released before $r_{k,\ell} + K_k$, 
 	i.e., let $\tilde{\Jbb}^-$ be the set of jobs released before $r_{k,\ell} + K_k$ by tasks of $\Tbb_{-}$ then 
 	$\Jbb_{\TFP}^-, \Jbb_{\EL}^- \subseteq \tilde{\Jbb}^-$.
 	Under TFP, all jobs in $\tilde{\Jbb}^-$ have higher priority than $\tau_{k,\ell}$ since they are released by the tasks of $\Tbb_{-}$.
 	Under EL, we show the same:
 	Let $\tau_{i,j} \in \tilde{\Jbb}^-$, i.e., $r_{i,j} < r_{k,\ell} + K_k$.
 	It directly follows that
 	\begin{equation}
 		\pi_{i,j} = r_{i,j} + \Pi_i 
 		< r_{k,\ell} + K_k + \Pi_i
 		\leq r_{k,\ell} + \Pi_k = \pi_{k,\ell}.
 	\end{equation}
	In particular, $\tau_{i,j}$ has higher priority than $\tau_{k,\ell}$.
	We have shown that all jobs in $\tilde{\Jbb}^-$ have higher priority than $\tau_{k,\ell}$ under EL and under TFP scheduling.
	By assumption the schedule of the jobs in $\tilde{\Jbb}^-$ coincides under TFP and EL.
	Therefore the same jobs of $\tilde{\Jbb}^-$ executed during $I$ under TFP and EL, i.e., $\Jbb_{\TFP}^- = \Jbb_{\EL}^-$.
	
	\paragraph{$\Jbb_{\TFP}^0 = \Jbb_{\EL}^0$}
	Under TFP and under EL, $\tilde{\Jbb}^0 := \setof{\tau_{k,1}, \dots, \tau_{k,\ell-1}}$ are the jobs of $\tau_k$ that have higher priority than $\tau_{k,\ell}$, i.e., $\Jbb_{\TFP}^0, \Jbb_{\EL}^0 \subseteq \tilde{\Jbb}^0$.
	By assumption, the schedule of the jobs in $\tilde{\Jbb}^0$ coincides.
	Therefore the same jobs of $\tilde{\Jbb}^0$ are executed during $I$, i.e., we have $\Jbb_{\TFP}^0 = \Jbb_{\EL}^0$.
	
	We have shown that $\Jbb_{\TFP}=\Jbb_{\EL}$ by the above discussion.
	Since the schedule of the jobs $\Jbb_{\TFP}=\Jbb_{\EL}$ coincides during $I$, $\tau_{k,\ell}$ is preempted/blocked during the same time intervals under TFP and EL scheduling during $I$.
	Hence, the schedule of $\tau_{k,\ell}$ during $I$ coincides.
	Furthermore, since by assumption $K_k$ is an upper bound on the response time under EL or TFP scheduling, the job $\tau_{k,\ell}$ finishes during $I$.
	This proves that the whole schedule of $\tau_{k,\ell}$ coincides.
\end{proof}

Even without knowledge about the worst-case response times, we can use a schedulability test based on EL scheduling for TFP scheduling by setting the relative priority points to $\Pi_i = \sum_{j=1}^i D_j$.
If the schedulability test assures that all jobs meet their deadline, then $D_j$ is an upper bound on the worst-case response time.
In this case, the schedule obtained by EL scheduling coincides with the TFP schedule and is feasible.

\begin{cor}
	If the task set $\Tbb$ is schedulable under EL with $\Pi_i := \sum_{j=1}^{i} D_j, i=1,\dots, n$, then $\Tbb$ is schedulable under TFP as well.
\end{cor}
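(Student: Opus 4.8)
The plan is to derive the corollary as a direct application of Proposition~\ref{prop:TFPasEL} with the specific choice $K_j := D_j$ for $j = 1, \dots, n$, which makes $\Pi_i = \sum_{j=1}^{i} K_j = \sum_{j=1}^{i} D_j$ coincide with the assignment in the statement. The one thing that needs checking is the hypothesis of the proposition, namely that each $K_j = D_j$ is an upper bound on the worst-case response time of $\tau_j$ in EL or in TFP; I would verify it on the EL side, which is exactly what schedulability under EL gives us.

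First I would fix an arbitrary job sequence released by $\Tbb$ and look at the schedule it induces under EL with relative priority points $\Pi_i := \sum_{j=1}^{i} D_j$. Since $\Tbb$ is assumed schedulable under EL, this schedule is feasible, i.e., $f_{j,\ell} \leq d_{j,\ell} = r_{j,\ell} + D_j$ for every job $\tau_{j,\ell}$, and therefore $R_{j,\ell} = f_{j,\ell} - r_{j,\ell} \leq D_j$. Taking the supremum over $\ell$ shows that $D_j$ is an upper bound on the worst-case response time of $\tau_j$ under EL, so the hypothesis of Proposition~\ref{prop:TFPasEL} holds for $K_j = D_j$ via its ``in EL'' alternative.

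Second, I would invoke Proposition~\ref{prop:TFPasEL}, which then states that for this job sequence the EL schedule and the TFP schedule coincide. Because the EL schedule is feasible and the two schedules are identical, the TFP schedule is feasible as well. Since the job sequence was chosen arbitrarily, $\Tbb$ is schedulable under TFP, which is the claim.

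The only subtlety, and the point I would be most careful about, is the order of quantification: schedulability is a property of \emph{all} job sequences, so the feasibility of the EL schedule — and hence the response-time bound $K_j = D_j$ — has to be established for each fixed job sequence \emph{before} Proposition~\ref{prop:TFPasEL} is applied to that same sequence. There is no circularity here, since the proposition's hypothesis is met purely on the EL side and makes no assumption about the TFP schedule; everything about TFP is a conclusion, not an input.
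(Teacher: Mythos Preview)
Your proposal is correct and follows essentially the same route as the paper: set $K_j := D_j$, use schedulability under EL to verify the response-time hypothesis of Proposition~\ref{prop:TFPasEL} on the EL side, conclude that the EL and TFP schedules coincide, and hence that TFP is schedulable. Your explicit handling of the quantification over job sequences is a bit more careful than the paper's presentation, but the underlying argument is the same.
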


\begin{proof}
	If $\Tbb$ is schedulable under EL with the given relative priority points $\Pi_i$, then $D_j$ is an upper bound on the worst-case response time of $\tau_j$ for all $\tau_j \in \Tbb$ under EL scheduling.
	In this case, by Proposition~\ref{prop:TFPasEL} the schedule under TFP and EL coincide.
	Therefore, $D_j$ is also an upper bound on the worst-case response time of $\tau_j$ for all $\tau_j \in \Tbb$ under TFP scheduling.
	Hence, $\Tbb$ is schedulable under TFP as well.
\end{proof}

\begin{figure}[t]
	\centering
	\large
	\begin{tikzpicture}[xscale=1, yscale=1]
		\node (a) {TFP};
		\node[right of=a] (b) {$\subseteq$};
		\node[right of=b] (c) {\textbf{EL}};
		\node[right of=c] (d) {$\subseteq$};
		\node[right of=d] (e) {JFP};
		\node[right of=e] (f) {$\subseteq$};
		\node[right of=f] (g) {TDP};
	\end{tikzpicture}
	\caption{Expressiveness of the scheduling policies Task-Level Fixed-Priority (TFP), EDF-Like (EL), Job-Level Fixed-Priority (JFP), and Task-Level Dynamic-Priority (TDP).}\label{fig:inclusions}
\end{figure}

By our discussion in this section, the expressiveness of EDF-Like (EL) scheduling algorithms is between Task-Level Fixed-Priority (TFP) and Job-Level Fixed-Priority (JFP) scheduling.
The inclusions are presented in Figure~\ref{fig:inclusions}.
For the sake of completeness, we include the category of Task-Level Dynamic-Priority (TDP) scheduling algorithms.

The assignment of relative priority points allows to mix different scheduling algorithms, as presented in the following example, or hierarchical scheduling algorithms.
Hence, a schedulability test for EL scheduling is able to handle such cases as well.
\begin{exmpl}
	We consider a task set $\Tbb = \{\tau_1, \dots, \tau_4\}$ of $4$ tasks.
	In the following we demonstrate how to assign priorities such that $\tau_1$ and $\tau_2$ are on one priority-level, and $\tau_3$ and $\tau_4$ are on another priority-level, and on each priority-level EDF is utilized.
	We assign the relative priority points $\Pi_1 = D_1$, $\Pi_2 = D_2$, $\Pi_3 = D_1 + D_2 + D_3$ and $\Pi_4 = D_1 + D_2 + D_4$.
	If $\Tbb$ is schedulable under EL scheduling with the given relative priority points, then EL produces the same schedule as the desired scheduling policy:
	Since $\Pi_i - \Pi_j \geq D_i$ for all $i=3,4$ and $j=1,2$, a job $J$ of $\tau_1$ or $\tau_2$ can only have higher priority than a job $J'$ of $\tau_3$ or $\tau_4$ if $J'$ is already finished by the release time of $J$.
	$\tau_1$ and $\tau_2$ are scheduled according to EDF, since their relative priority points are set to the deadline.
	The tasks $\tau_3$ and $\tau_4$ are also scheduled according to EDF, since the difference between the global priorities $\pi_{3,j}$ and $\pi_{4,j'}$ of each two jobs $\tau_{3,j}$ and $\tau_{4,j'}$ is the same as the difference between the absolute deadlines $r_{3,j} + D_3$ and $r_{4,j'} + D_4$.
  \end{exmpl}

\section{Schedulability Test for EL Scheduling Algorithms}
\label{sec:test}
	
\begin{figure}
	\footnotesize{}
	\centering{}
	\begin{tikzpicture}[yscale=0.6]
		\node[draw, align=left] (exam) at (0,-1) {Examination of Processor States};
		\node[draw, align=left] (backbone) at (0,-2) {Corollary~\ref{cor:backbone}};
		\node[draw, align=left] (bounds) at (3.5,-2.7) {Bounds for:\\
			$B_{k,j}$ $\to$ Lemma~\ref{lem:upper_bound_sum_Bkj} (B1)\\
			$B^i_{k,\ell}$ $\to$ Lemma~\ref{lem:total_bound_interference} (B2)};
		
		\node[draw, align=left] (thm1) at (-1.7,-5) {Fixed analysis window\\Theorem~\ref{thm:suff_sched_test_fixed}};
		\node[draw, align=left] (thm2) at (1.7,-5) {Variable analysis window\\Theorem~\ref{thm:suff_sched_test_var}};
		
		\coordinate (p) at (0,-4);
		
		\draw[->, line width=1pt] (exam) to (backbone);
		\draw[-, line width=1pt] (backbone) to (p);
		\draw[-, line width=1pt] (bounds) -| (p);

		\draw[->, line width=1pt] (p) -| node[above right] {$c\geq r_{k,\ell}$} (thm1);
		\draw[->, line width=1pt] (p) -| node[above left] {$c$ arbitrary} (thm2);
		
	\end{tikzpicture}
	\caption{Roadmap of the proof in Section~\ref{sec:test}.}\label{fig:roadmap}
\end{figure}
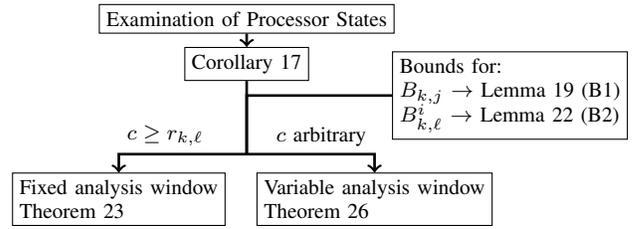
	
In this section, we derive a sufficient schedulability test, i.e., for an arbitrary-deadline task set $\Tbb = \{ \tau_1 , \dots, \tau_n\}$ and an assignment of relative priority points $(\Pi_1, \dots, \Pi_n)$ the test returns \emph{True} if $\Tbb$ is schedulable by the corresponding EL scheduling algorithm.
	
We generate response time bounds by bounding the interference from higher-priority jobs in a certain analysis interval.
In this regard, we start with an examination of different processor states in Section~\ref{sec:processor_states} and derive a response time upper bound in Corollary~\ref{cor:backbone}.
In Section~\ref{sec:estimation_B}, we transfer the response time bound to the case with EL scheduling by approximating two main terms $B_{k,j}$ and $B^i_{k,\ell}$.
We derive two schedulability tests in Theorem~\ref{thm:suff_sched_test_fixed} and Theorem~\ref{thm:suff_sched_test_var} considering two different approaches to define the analysis window.
Algorithm~\ref{alg:sched_test_fixed} and Algorithm~\ref{alg:sched_test_var} depict an implementation of the schedulability test using a simple search algorithm.
The roadmap of our analysis is depicted in Figure~\ref{fig:roadmap}.

\begin{defn}[Active and Current Job]
For a certain schedule, a job $\tau_{k,\ell}$ of a task $\tau_k$ is \emph{active} at time $t$,
if it is released but not already finished by time $t$, i.e., $t \in [r_{k,\ell}, f_{k,\ell})$.
When there are active jobs of task $\tau_k$ at a time instant $t$, then we call the active job of $\tau_k$ with the earliest release the
\emph{current}
job of $\tau_k$ at time $t$.
We call the task $\tau_k$ active at $t$, if there exists an active job of $\tau_k$ at $t$.
\end{defn}

Moreover, the following definitions are used to describe the states of the processor for our analysis.

\begin{defn}[Work, Suspend and Wait]
	~
	\begin{itemize}
		\item  The processor is \emph{working on} a job $\tau_{i,j}$ at time $t$, if $\tau_{i,j}$ is executed on the processor at $t$.
		It is \emph{suspended by} $\tau_{i,j}$ at time $t$, if $\tau_{i,j}$ is suspending itself at $t$, i.e., the remaining suspension time of $\tau_{i,j}$ is reduced.
		\item We say that the processor is \emph{working} at time $t$ if it is working on any job at $t$.
		It is \emph{suspended} at $t$, if it is suspended for at least one job but not working on any job at $t$.
		It is \emph{waiting} at $t$, if it is neither working nor suspended at $t$. The processor is \emph{idle} at $t$, if it is not working at $t$, i.e., if it is suspended or waiting.
	\end{itemize}
\end{defn}

For unambiguous partition of the processor to the different states, we use half-opened intervals, e.g., if the processor is working on a job $\tau_{i,j}$ from time $t_1$ to time $t_2$, then we say that \emph{the processor is working on $\tau_{i,j}$ during $[t_1,t_2)$}.

\subsection{Examination of Processor States}
\label{sec:processor_states}
	
	\begin{figure}
		\centering
		\begin{tikzpicture}[yscale=0.7]
			\footnotesize{}
			\node at (0,1) (root) {};
			
			\node[draw, rectangle, align=center] at (0,0) (a) {Is there an active\\job of $\tau_k$?};
			
			\node[draw, rectangle, align=center, anchor=center] at (2,-1.5) (b) {Is the processor working\\on a job with higher priority\\than the current job of $\tau_k$?};
			
			\node[draw, ellipse, align=center, anchor=center] at (-2,-1.) (ps1) {\textbf{PS~1}};
			
			\node[draw, ellipse, align=center, anchor=center] at (-.5,-2.5) (ps2) {\textbf{PS~2}};
			
			\node[draw, ellipse, align=center, anchor=center] at (4.5,-2.5) (ps3) {\textbf{PS~3}};
			
			\draw[->] (root) -- (a);
			
			\draw[->] (a) -| node[above left] {yes} (b);
			
			\draw[->] (a) -| node[above right] {no} (ps1);
			
			\draw[->] (b) -| node[above right] {no} (ps2);
			
			\draw[->] (b) -| node[above left] {yes} (ps3);
			
		\end{tikzpicture}
		\caption{Decision tree to determine the current processor state (PS) from the perspective of $\tau_k$.}\label{fig:dec_tree}
	\end{figure}
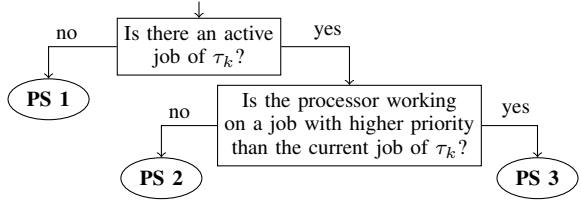

	We consider a schedule obtained by the EL scheduling algorithm with relative priority points $(\Pi_1, \dots, \Pi_n)$ for the task set $\Tbb = \{ \tau_1 , \dots, \tau_n\}$.
	Let $\tau_k \in \Tbb$ be a task. 
	Based on the terminology for different processor states outlined in Section~\ref{sec:system_model}, we distinguish all \textbf{Processor States} (PS) as depicted in Figure~\ref{fig:dec_tree}.
	\begin{itemize}
		\item \textbf{PS~1}: There is no active job of $\tau_k$.
		\item \textbf{PS~2}: There is an active job of $\tau_k$ and the processor is \emph{not} working on a job with higher priority than the current job of task $\tau_k$. 
		\item \textbf{PS~3}: There is an active job of $\tau_k$ and the processor is working on a job with higher priority than the current job of task $\tau_k$.
	\end{itemize}
	In particular, if the processor is in state PS~2, then it is either working on or suspended by the current job of $\tau_k$ since the underlying scheduling algorithm is work-conserving.
	
	\begin{lem}\label{lem:processor_states}
		At each time, the processor is in exactly one of the above three states.
	\end{lem}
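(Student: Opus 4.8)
The plan is to show that the conditions defining PS~1, PS~2, and PS~3 partition the time line, i.e., that they are pairwise mutually exclusive and jointly exhaustive; the statement then follows immediately by reading off the decision tree in Figure~\ref{fig:dec_tree}. Since each branch of the tree is a yes/no question, the only real work is to verify that both questions are well-posed at every time $t$.

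First I would fix a time $t$ and observe that the first question — whether there is an active job of $\tau_k$ at $t$ — has a definite answer. If the answer is \emph{no}, then by definition $t$ falls under PS~1, and it can fall under neither PS~2 nor PS~3, since both of those explicitly require an active job of $\tau_k$ at $t$. If the answer is \emph{yes}, I would argue that the current job of $\tau_k$ at $t$ is well-defined: the set of jobs of $\tau_k$ active at $t$ is nonempty, and because job releases of $\tau_k$ satisfy $r_{k,j+1} \geq r_{k,j} + T_k > r_{k,j}$, these release times are pairwise distinct, so there is a unique active job of $\tau_k$ with the earliest release.

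Next, with the current job of $\tau_k$ fixed, I would turn to the second question. Since we consider a uniprocessor under work-conserving scheduling, at time $t$ the processor is working on at most one job, and by Definition~\ref{defn:priority_rp} the priority comparison between that job (if any) and the current job of $\tau_k$ is decided — either by the priority points or, in case of a tie, by the fixed (arbitrary) tie-break. Hence ``the processor is working on a job with higher priority than the current job of $\tau_k$'' is a definite yes/no statement: if yes, $t$ falls under PS~3 and not PS~2; if no, $t$ falls under PS~2 and not PS~3. Combining the two questions, every $t$ lies in exactly one of PS~1, PS~2, PS~3, which is the claim.

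I expect no serious obstacle; the lemma is essentially a sanity check that the case distinction of Figure~\ref{fig:dec_tree} is exhaustive and non-overlapping. The only point requiring a moment's care is the well-definedness of the current job — specifically that among the active jobs of $\tau_k$ the one with the earliest release is unique — which is where I would invoke the minimum inter-arrival separation $T_k > 0$. As an aside, the ``in particular'' remark following the case distinction, namely that in PS~2 the processor is working on or suspended by the current job itself, is then a direct consequence of work-conservation, but it is not needed for this lemma.
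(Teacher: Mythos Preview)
Your proposal is correct and follows exactly the same approach as the paper: both argue that the decision tree in Figure~\ref{fig:dec_tree} determines a unique processor state at each time. The paper's proof is a one-liner that simply invokes the correctness of the decision tree, whereas you spell out why each branching question is well-posed (including the uniqueness of the current job via $T_k > 0$); this extra care is fine but not strictly required for the lemma as stated.
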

	\begin{proof}
		This directly follows from the correctness of the decision tree in Figure~\ref{fig:dec_tree}.
		At each time, the decision tree determines exactly one processor state.
	\end{proof}

	We describe the amount of time that the processor spends in the above states using the following notation.
	
		\begin{table}
		\centering{}
		\begin{tabular}{|l|l|}
			\hline
			Term & Amount of time where processor is ... \\
			\hline
			& \\[\dimexpr-\normalbaselineskip+1pt]
			$\Btilde_k$ & ... in PS~1.\\
			\hline
			& \\[\dimexpr-\normalbaselineskip+1pt]
			$B_k$ & ... in PS~2.\\
			$B_{k,\ell}$ & 
			\begin{tabular}{@{}l@{}}
				... working on/suspended by $\tau_{k,\ell}$ \\
				and not working on higher priority jobs.\\
			\end{tabular}
			\\
			\hline
			& \\[\dimexpr-\normalbaselineskip+1pt]
			$B^i_k$ & ... in PS~3 working on a job of $\tau_i$.\\
			$B^i_{k,\ell}$ & ... working on a job of $\tau_i$ with higher priorty than $\tau_{k, \ell}$.\\
			\hline
		\end{tabular}
		\caption{Notions for time in different states as described in Definition~\ref{def:B_interferences} and Definition~\ref{def:B_interferences_refinement}.}\label{table:summary_notation}
	\end{table}
	
	\begin{defn}\label{def:B_interferences}
		Let $[c,d)$ be any half opened interval with $c<d$.
		\begin{itemize}
			\item $\Btilde_k(c,d)$ denotes the amount of time during $[c,d)$ where the processor is in state \textbf{PS~1}.
			\item $B_k(c,d)$ is the amount of time during $[c,d)$ in \textbf{PS~2}.
			\item For $i \neq k$, $B_k^i(c,d)$ is the amount of time during $[c,d)$ where the processor is in \textbf{PS~3} working on $\tau_i$.
			More specifically, $\sum_{i\neq k}B_k^i(c,d)$ is the total time during $[c,d)$ in \textbf{PS~3}.
		\end{itemize}
		If $c\geq d$, we set all of them, $\Btilde_k(c,d), B_k(c,d), B_k^i(c,d)$, to $0$.
	\end{defn}
	
	The definitions are collected in Table~\ref{table:summary_notation}.
	In the following Lemma, the result from Lemma~\ref{lem:processor_states} is formalized using the previous notation.
	The time in PS~1, PS~2 and PS~3 during some interval $[c,d)$ adds up to the whole interval length.

	\begin{lem}\label{lem:distribution_ps_with_b}
		For any half opened interval $[c,d)$ with $c<d$ we have
		\begin{equation}
			\Btilde_k(c,d) + B_k(c,d) + \sum_{i \neq k} B^i_k(c,d) =(d-c).
		\end{equation}
	\end{lem}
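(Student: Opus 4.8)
The plan is to reduce the statement to Lemma~\ref{lem:processor_states} together with a further partition of state PS~3 according to which task is being executed. First I would fix the interval $[c,d)$ and, for $s \in \{1,2,3\}$, let $P_s \subseteq [c,d)$ be the set of time instants $t$ at which the processor is in state PS~$s$ from the perspective of $\tau_k$. By Lemma~\ref{lem:processor_states}, every $t \in [c,d)$ lies in exactly one of $P_1, P_2, P_3$, so these sets partition $[c,d)$ and the sum of their lengths is $d-c$. By Definition~\ref{def:B_interferences}, the length of $P_1$ is $\Btilde_k(c,d)$ and the length of $P_2$ is $B_k(c,d)$.

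Next I would decompose $P_3$. At each $t \in P_3$ the processor is working on a (unique) job with higher priority than the current job of $\tau_k$; say this job belongs to $\tau_i$. Such a job cannot belong to $\tau_k$: a job being executed is active, jobs of a task become eligible in FIFO order of release, and the current job of $\tau_k$ is by definition the active job of $\tau_k$ with the earliest release, hence with the smallest priority point among active jobs of $\tau_k$; so no active job of $\tau_k$ has higher priority than the current one, giving $i \neq k$. Therefore $P_3 = \coprod_{i \neq k} P_3^i$, where $P_3^i$ is the set of instants in $P_3$ at which the processor works on a job of $\tau_i$; this is a genuine partition because the processor works on at most one job at any instant. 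By Definition~\ref{def:B_interferences}, the length of $P_3^i$ is exactly $B^i_k(c,d)$, so the length of $P_3$ equals $\sum_{i\neq k} B^i_k(c,d)$.

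Combining the two steps, $\Btilde_k(c,d) + B_k(c,d) + \sum_{i\neq k} B^i_k(c,d)$ is the sum of the lengths of the pieces of the partition $\{P_1, P_2\} \cup \{P_3^i : i \neq k\}$ of $[c,d)$, which equals $d-c$.

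The only mild subtlety — not really an obstacle — is justifying that "length of a finite disjoint union equals the sum of the lengths," i.e., that these processor-state sets are well behaved. In the continuous-time model this is fine because each job starts, finishes, suspends, and resumes only finitely often in any bounded interval, so every set appearing above is a finite union of half-open intervals and additivity of length is immediate. Once this is observed, the lemma follows directly from the definitions.
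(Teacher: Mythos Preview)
Your proof is correct and follows the same approach as the paper, which simply says the lemma ``follows from Lemma~\ref{lem:processor_states}.'' You have merely unpacked the details---the partition of $[c,d)$ into the three states, the further decomposition of PS~3 by task (already asserted in Definition~\ref{def:B_interferences}), and the measurability remark---that the paper leaves implicit.
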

	
	\begin{proof}
		This follows from Lemma~\ref{lem:processor_states}.
	\end{proof}

	For analysis of the worst-case response time, the terms $B_k(c,d)$ and $B^i_k(c,d)$ are inconvenient because at each time, the \emph{current} job of $\tau_k$ has to be determined to calculate those values.
	In the following, we prove that they can be expressed by the more convenient terms $B_{k,\ell}^i(c,d)$ and $B_{k,\ell}(c,d)$ which take a specific job $\tau_{k,\ell}$ as current job into consideration.
	
	\begin{defn}\label{def:B_interferences_refinement}
		Let $[c,d)$ with $c<d$ be any half opened interval.
		\begin{itemize}
			\item $B_{k,\ell}^i(c,d)$ is the amount of time during $[c,d)$ that the processor is working on jobs of $\tau_i$ with higher priority than $\tau_{k,\ell}$.
			\item $B_{k,\ell}(c,d)$ is the amount of time during $[c,d)$ that the processor is working on or suspended by $\tau_{k,\ell}$ while it is not working on a higher-priority job.
		\end{itemize}
		If $c\geq d$, we set all of them to $0$ for simplicity.
	\end{defn}

	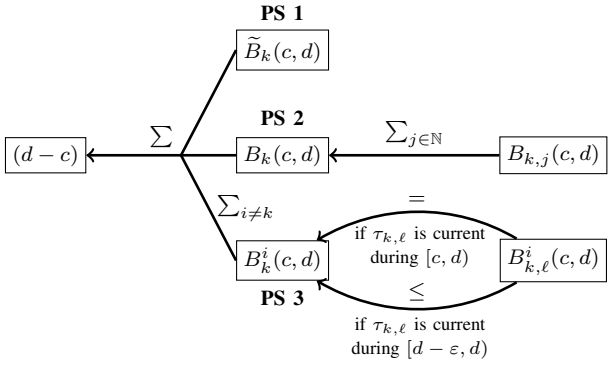
\begin{figure}
		\begin{tikzpicture}[xscale=0.9, yscale=0.7]
			\footnotesize
			\node[draw] (start) at (0.5,0) {$(d-c)$};

			\node[draw] (1) at (4,2) {$\Btilde_k(c,d)$}; 
			\draw (1) node[yshift=0.5cm] {\textbf{PS 1}}; 
			\node[draw] (2) at (4,0) {$B_k(c,d)$};
			\draw (2) node[yshift=0.5cm] {\textbf{PS 2}}; 
			\node[draw] (3) at (4,-2) {$B^i_k(c,d)$};
			\draw (3) node[yshift=-0.5cm] {\textbf{PS 3}}; 
			
			\node[draw] (21) at (8,0) {$B_{k,j}(c,d)$};
			\node[draw] (31) at (8,-2) {$B^i_{k,\ell}(c,d)$};
			
			\draw[<-, line width=1pt] (start) -- node[above] {$\sum$} (2);
			\draw[-, line width=1pt] (2.5,0) -- (1.west);
			\draw[-, line width=1pt] (2.5,0) -- node[right] {$\sum_{i\neq k}$} (3.west);
			
			\draw[<-, line width=1pt] (2) -- node[above] {$\sum_{j\in \Nbb}$} (21);
			
			\draw[<-, line width=1pt] (3) to [out=35, in=145] node[above] {$=$} node[below = 2.5pt, align=center]{\scriptsize{} if $\tau_{k,\ell}$ is current\\ \scriptsize{} during $[c,d)$} (31);
			\draw[<-, line width=1pt] (3) to [out=-35, in=-145] node[above] {$\leq$} node[below, align=center]{\scriptsize{} if $\tau_{k,\ell}$ is current\\ \scriptsize{} during $[d-\eps,d)$} (31);
		\end{tikzpicture}
		\caption{Relations proven in Lemma~\ref{lem:distribution_ps_with_b},~\ref{lem:refinement_Bk} and~\ref{lem:refinement_Bik}.}\label{fig:relations_from_lemmas}
	\end{figure}

	Again, the definition is summarized in Table~\ref{table:summary_notation}.
	In Lemma~\ref{lem:refinement_Bk} and~\ref{lem:refinement_Bik} we will prove the relations between the terms from Definition~\ref{def:B_interferences_refinement} and Definition~\ref{def:B_interferences} as summarized in Figure~\ref{fig:relations_from_lemmas}.
	
	Intuitively, the following lemma states that we can compute the total time in PS~2, by adding up the time in PS~2 for all possible current jobs $\tau_{k,j}$.
	 
	\begin{lem}\label{lem:refinement_Bk}
		For any half opened interval $[c,d)$ we have
		\begin{equation}
			B_k(c,d) = \sum_{j \in \Nbb} B_{k,j}(c,d).
		\end{equation}
	\end{lem}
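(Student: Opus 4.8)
The key observation is that the definitions of $B_k(c,d)$ and $B_{k,j}(c,d)$ both refer to PS~2, but $B_k$ uses the *current* job of $\tau_k$ (which changes over time) whereas $B_{k,j}$ fixes a *specific* job $\tau_{k,j}$. So the plan is to partition the time spent in PS~2 during $[c,d)$ according to which job of $\tau_k$ is the current job at each instant, and to show that the part of PS~2 attributed to the instants where $\tau_{k,j}$ is current is exactly $B_{k,j}(c,d)$.

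First I would handle the degenerate case $c \geq d$, where both sides are $0$ by Definition~\ref{def:B_interferences} and Definition~\ref{def:B_interferences_refinement}, so assume $c<d$. Next, recall that PS~2 by definition requires an active job of $\tau_k$; since the jobs of $\tau_k$ are executed one after another in FIFO order, at every instant in PS~2 there is a well-defined current job of $\tau_k$, namely the active job with the earliest release. Hence the set $P$ of time instants in $[c,d)$ that are in PS~2 decomposes as a disjoint union $P = \coprod_{j \in \Nbb} P_j$, where $P_j$ is the set of instants in $P$ at which $\tau_{k,j}$ is the current job of $\tau_k$. Taking (Lebesgue) measures, $B_k(c,d) = |P| = \sum_{j \in \Nbb} |P_j|$.

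It then remains to show $|P_j| = B_{k,j}(c,d)$ for every $j$. By Definition~\ref{def:B_interferences_refinement}, $B_{k,j}(c,d)$ is the time in $[c,d)$ that the processor is working on or suspended by $\tau_{k,j}$ while not working on a higher-priority job. So I must argue that an instant $t \in [c,d)$ lies in $P_j$ if and only if at $t$ the processor works on or is suspended by $\tau_{k,j}$ and is not working on a higher-priority job. For the forward direction: if $\tau_{k,j}$ is current at $t$ and the processor is in PS~2, then by the remark following the PS definitions (work-conserving scheduling) the processor is working on or suspended by the current job $\tau_{k,j}$, and PS~2 explicitly excludes working on a higher-priority job. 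For the converse: if at $t$ the processor works on or is suspended by $\tau_{k,j}$, then $\tau_{k,j}$ is active at $t$; and since it is being worked on or suspends itself (not merely sitting in the ready queue behind an earlier job of $\tau_k$), the FIFO rule forces $\tau_{k,j}$ to be the earliest-released active job of $\tau_k$, i.e.\ the current one; moreover, not working on a higher-priority job places us in PS~2 rather than PS~3. Hence $P_j$ and the set defining $B_{k,j}(c,d)$ coincide, giving $|P_j| = B_{k,j}(c,d)$ and the claim follows by summation.

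The main obstacle is the converse inclusion in the last step: one has to rule out the possibility that the processor works on/suspends $\tau_{k,j}$ while some earlier job $\tau_{k,j'}$ with $j'<j$ is still active. This is precisely excluded by the FIFO execution rule for jobs of the same task stated at the end of Section~\ref{sec:system_model} ($\tau_{k,j}$ is eligible for execution only after all earlier jobs of $\tau_k$ finish), so $\tau_{k,j}$ being worked on or self-suspending implies all earlier jobs of $\tau_k$ are already finished, making $\tau_{k,j}$ the current job. Everything else is bookkeeping on half-open intervals and countable additivity of the time measure.
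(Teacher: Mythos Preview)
Your proof is correct and follows essentially the same approach as the paper's: both arguments establish the pointwise equivalence between ``the processor is in PS~2'' and ``the processor is working on or suspended by some job $\tau_{k,j}$ of $\tau_k$ while not working on a higher-priority job,'' using work-conservation for one direction and the FIFO execution rule (so that a job being worked on or self-suspending must be the current one) for the other. The paper states this equivalence directly and more tersely, whereas you spell out the partition $P=\coprod_j P_j$ and the two inclusions explicitly, but there is no substantive difference.
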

	\begin{proof}
		The lemma follows from the equivalence between: 1)~the processor is in PS~2 and 
		2)~the processor is \emph{working on or suspended by some job of $\tau_k$ while it is not working on higher-priority jobs}.
		
		Since the processor is work-conserving, we obtain an equivalence between 1) and \emph{the processor is working on or suspended by the current job of $\tau_k$ while it is not working on a higher-priority job}.
		By the definition of the current job, if the processor is working on or suspended by a job, then this is the current job by default. 
		This concludes the equivalence between 1) and 2).
		
		By definition, the time of 1) during $[c,d)$ is $B_k(c,d)$, and the time of 2) during $[c,d)$ is $\sum_{j \in \Nbb} B_{k,j}(c,d)$.
	\end{proof}
	
	In the subsequent lemma, we approximate the time in PS~3 by determining the current job of $\tau_k$ at the end of the analysis window $[c,d)$.
	More specifically, we use that if a job of $\tau_i$ has higher priority than the current job of $\tau_k$ in the analysis window, then it has higher priority than the last current job in the analysis window as well.
	
	\begin{lem}\label{lem:refinement_Bik}
		Consider any half opened interval $[c,d)$.
		If $\tau_{k,\ell}$ is the current job of $\tau_k$ during the whole interval $[c,d)$, then
		\begin{equation}
			B^i_k(c,d) = B^i_{k,\ell}(c,d).
		\end{equation}
		If $\tau_{k,\ell}$ is the current job of $\tau_k$ during $[d-\eps, d)$ for some $\eps >0$, then
		\begin{equation}
			B^i_k(c,d) \leq B^i_{k,\ell}(c,d).
		\end{equation}
	\end{lem}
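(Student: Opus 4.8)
The plan is to compare, time instant by time instant, the two quantities $B^i_k(c,d)$ and $B^i_{k,\ell}(c,d)$ by analyzing what it means for the processor to contribute to each. Recall that $B^i_k(c,d)$ counts time in PS~3 where the processor works on a job of $\tau_i$, i.e.\ time where there is an \emph{active} job of $\tau_k$ and the processor works on a job of $\tau_i$ that has higher priority than the \emph{current} job of $\tau_k$ \emph{at that instant}; whereas $B^i_{k,\ell}(c,d)$ simply counts time where the processor works on a job of $\tau_i$ with higher priority than the \emph{fixed} job $\tau_{k,\ell}$. So the whole argument reduces to relating ``higher priority than the current job of $\tau_k$ at time $t$'' to ``higher priority than $\tau_{k,\ell}$''.

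For the first claim, suppose $\tau_{k,\ell}$ is the current job of $\tau_k$ throughout $[c,d)$. First I would observe that at every $t \in [c,d)$ where the processor contributes to $B^i_k$, the current job of $\tau_k$ is exactly $\tau_{k,\ell}$, so ``higher priority than the current job'' means literally ``higher priority than $\tau_{k,\ell}$''; hence such a $t$ also contributes to $B^i_{k,\ell}$. Conversely, if $t \in [c,d)$ contributes to $B^i_{k,\ell}$, the processor works on a job of $\tau_i$ with higher priority than $\tau_{k,\ell}$; since $\tau_{k,\ell}$ is active at $t$ (it is the current job, hence released and not finished), there is an active job of $\tau_k$, and the current job is $\tau_{k,\ell}$, so the processor is working on a job of higher priority than the current job of $\tau_k$ — this is precisely PS~3 working on $\tau_i$, so $t$ contributes to $B^i_k$. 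The two contributing sets of time instants coincide, giving equality of the measures.

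For the second claim, suppose $\tau_{k,\ell}$ is only the current job during $[d-\eps,d)$. I would show the inclusion ``contributes to $B^i_k$'' $\subseteq$ ``contributes to $B^i_{k,\ell}$'' still holds at every $t\in[c,d)$, which yields the inequality. Take $t\in[c,d)$ contributing to $B^i_k$: the processor works on a job $\tau_{i,j}$ of $\tau_i$ with higher priority than the current job $\tau_{k,\ell'}$ of $\tau_k$ at time $t$. The key point is that $\ell' \le \ell$: current jobs of $\tau_k$ are processed in FIFO order of release, and $\tau_{k,\ell}$ is current near the end of the window, so whatever job is current at the earlier time $t$ has release time $r_{k,\ell'}\le r_{k,\ell}$, hence priority point $\pi_{k,\ell'}=r_{k,\ell'}+\Pi_k \le r_{k,\ell}+\Pi_k=\pi_{k,\ell}$; thus $\tau_{k,\ell'}$ has priority at least that of $\tau_{k,\ell}$. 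Therefore $\pi_{i,j}<\pi_{k,\ell'}\le\pi_{k,\ell}$, so $\tau_{i,j}$ also has higher priority than $\tau_{k,\ell}$, and $t$ contributes to $B^i_{k,\ell}$. Since every contributing instant of $B^i_k$ is a contributing instant of $B^i_{k,\ell}$, we get $B^i_k(c,d)\le B^i_{k,\ell}(c,d)$.

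The one place demanding care is the monotonicity claim $\ell'\le\ell$, i.e.\ that the sequence of current jobs of $\tau_k$ over time is nondecreasing in the job index and that no later-released job of $\tau_k$ can become current before $\tau_{k,\ell}$ does. This follows from the FIFO discipline imposed on jobs of a single task (stated at the end of Section~\ref{sec:system_model}): $\tau_{k,\ell+1}$ is not eligible for execution until all earlier jobs of $\tau_k$ finish, and ``current job'' is defined as the active job with earliest release, so the current job only advances to higher indices as earlier jobs complete. I would also handle the degenerate cases ($c\ge d$, or no active job of $\tau_k$ at $t$) via Definition~\ref{def:B_interferences_refinement}, where all terms are set to $0$, so the claims hold trivially there.
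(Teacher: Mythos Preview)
Your proposal is correct and follows essentially the same approach as the paper's own proof: unpack the definition of $B^i_k$ in terms of the current job at each instant, then use that any job of $\tau_k$ current earlier in $[c,d)$ has priority at least that of $\tau_{k,\ell}$ (equivalently, index $\ell'\le\ell$) to transfer ``higher priority than the current job'' to ``higher priority than $\tau_{k,\ell}$''. Your version is more explicit about the monotonicity of the current-job index and about both inclusions in the equality case, which the paper simply asserts; the only minor slack is writing $\pi_{i,j}<\pi_{k,\ell'}$ without mentioning tie-breaking, but since $\ell'<\ell$ forces $\pi_{k,\ell'}<\pi_{k,\ell}$ strictly, the conclusion goes through regardless.
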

	\begin{proof}
		By definition, $B^i_k(c,d)$ is the amount of time during $[c,d)$ that the processor is working on a job of $\tau_i$ with higher priority than the current job of $\tau_k$.
		If $\tau_{k,\ell}$ is the current job of $\tau_k$ during the whole interval, then $B^i_k(c,d) = B^i_{k,\ell}(c,d)$.
		If $\tau_{k,\ell}$ is current during $[d-\eps,d)$, then any job of $\tau_k$ which is current at some time during $[c,d)$ has a priority higher than or equal to the priority of $\tau_{k,\ell}$.
		If the processor is working on a job of $\tau_i$ with higher priority than the current job of $\tau_k$ at that time, then that job of $\tau_i$ has also higher priority than $\tau_{k,\ell}$.
		As a result, $B^i_k(c,d) \leq B^i_{k,\ell}(c,d)$. 
	\end{proof}
	
	By definition, $B_k^i$, $B_k$, $\widetilde{B}_k$, $B_{k,\ell}^i$ and $B_{k,\ell}$ are \emph{non-negative} and \emph{additive}, in the sense that $B(c,d)\geq 0$ and $B(c, d) + B(d,e) = B(c,e)$ for all $B \in \{B_k^i, B_{k}, \widetilde{B}_k, B_{k,\ell}^i, B_{k, \ell}\}$ and $c\leq d \leq e$.
	The following two lemmas build a bridge between the processor state formulation and the ability of a job $\tau_{k,\ell}$ to be finished by the end of the analysis window $[c,d)$.
	
	\begin{lem}\label{lem:Bkl_finished}
		If $B_{k,\ell}(c,d) \geq C_k + S_k$, then the job $\tau_{k,\ell}$ is finished by time $d$.
	\end{lem}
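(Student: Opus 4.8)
The plan is to argue directly from the definitions of $B_{k,\ell}$ and the FIFO, work-conserving nature of EL scheduling. First I would observe that $B_{k,\ell}(c,d)$ counts, within $[c,d)$, exactly the time during which the processor is either working on $\tau_{k,\ell}$ or suspended by $\tau_{k,\ell}$, while not working on any higher-priority job. Crucially, whenever such time is accrued, $\tau_{k,\ell}$ must be \emph{active} (released but unfinished) and, moreover, must be the \emph{current} job of $\tau_k$: if it were already finished it could neither be executed nor suspending itself, and if an earlier job of $\tau_k$ were still unfinished then, by the FIFO rule, $\tau_{k,\ell}$ would not yet be eligible and the processor could not be working on or suspended by it. So every unit of time contributing to $B_{k,\ell}(c,d)$ is a unit of progress on $\tau_{k,\ell}$ itself — either draining its remaining execution budget (at most $C_k$) or draining its remaining suspension budget (at most $S_k$).

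The key step is therefore the budget bookkeeping: over the entire lifetime of $\tau_{k,\ell}$, the total time the processor spends working on $\tau_{k,\ell}$ is at most $C_k$ and the total time it spends suspended by $\tau_{k,\ell}$ is at most $S_k$, so the total time it spends in the "working on or suspended by $\tau_{k,\ell}$, not working on higher priority" mode — summed over \emph{all} intervals, not just $[c,d)$ — is at most $C_k + S_k$. I would then suppose for contradiction that $\tau_{k,\ell}$ is not finished by time $d$, i.e. $f_{k,\ell} > d$. Then all of the $B_{k,\ell}(c,d) \geq C_k + S_k$ time units accrued in $[c,d) \subseteq [c, f_{k,\ell})$ genuinely consumed execution-or-suspension budget of $\tau_{k,\ell}$, which means the job has received at least $C_k$ units of execution plus at least $S_k$ units of suspension in total (since the split of $C_k+S_k$ between the two categories cannot exceed $C_k$ on the execution side or $S_k$ on the suspension side, equality must hold in both). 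Hence $\tau_{k,\ell}$ has completed its execution requirement $c_{k,\ell} \leq C_k$ and exhausted its suspension, so it is finished — contradicting $f_{k,\ell} > d$.

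The main obstacle is making the "every contributing unit is current-job progress" claim airtight, because it is where the FIFO eligibility rule and the definition of the current job interact. I would want to state explicitly: if the processor is working on or suspended by $\tau_{k,\ell}$ at time $t$, then $\tau_{k,\ell}$ is active at $t$ (trivially, since a finished job is neither executed nor suspending), and it is the current job of $\tau_k$ at $t$ — otherwise some $\tau_{k,\ell'}$ with $\ell' < \ell$ is active, and by the FIFO constraint $\tau_{k,\ell}$ is not yet eligible for execution, so the processor cannot be working on it; and it cannot be suspended by $\tau_{k,\ell}$ either, since a job only suspends after having started executing, which again requires eligibility. A secondary subtlety is the direction of the budget inequality at the moment $C_k + S_k$ is reached: I should note that the processor cannot accrue more than $C_k$ units of execution on a job that needs only $c_{k,\ell} \leq C_k$, and cannot accrue more than $S_k$ units of suspension, so reaching a total of $C_k + S_k$ forces the execution part to be $\geq C_k \geq c_{k,\ell}$ and the suspension part to be maximal, at which point the job is complete. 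Once these two points are pinned down, the rest is immediate from additivity and non-negativity of $B_{k,\ell}$.
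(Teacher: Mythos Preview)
Your proposal is correct and uses essentially the same budget argument as the paper: since the suspension contribution to $B_{k,\ell}(c,d)$ is at most $S_k$, the execution contribution is at least $C_k \geq c_{k,\ell}$, so $\tau_{k,\ell}$ must have completed by $d$. The paper's proof is just this two-line observation; your discussion of FIFO eligibility and the current-job property, while not wrong, is unnecessary here because the execution and suspension budgets of $\tau_{k,\ell}$ alone already force the conclusion.
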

	\begin{proof}
		Since $B_{k,\ell}(c,d) \geq C_k + S_k$, the processor is working on the job $\tau_{k,\ell}$ for at least $C_k$ time units during $[c,d)$.
		As a result, the job finishes no later than at time $d$.
	\end{proof}
	
	\begin{lem}\label{lem:non-sched}
		Consider some interval $[c,d)$ with $c\leq d$.
		If $\tau_{k,\ell}$ is not finished by time d and the task $\tau_k$ is active during (the whole interval) $[c,d)$, then 
		\begin{equation}
			(C_k+S_k) + \sum_{i \neq k} B^i_{k,\ell}(c,d) + \sum_{j < \ell} B_{k,j}(c,d) > (d-c).
		\end{equation}
	\end{lem}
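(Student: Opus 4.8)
The plan is to split the interval length $d-c$ into its processor-state contributions via Lemma~\ref{lem:distribution_ps_with_b} and to bound each one separately. Since $\tau_k$ is active throughout $[c,d)$, the processor is never in state PS~1 there, so $\Btilde_k(c,d)=0$ and hence $d-c=B_k(c,d)+\sum_{i\neq k}B^i_k(c,d)$. I would then expand $B_k(c,d)=\sum_{j\in\Nbb}B_{k,j}(c,d)$ using Lemma~\ref{lem:refinement_Bk} and argue that $B_{k,j}(c,d)=0$ for every $j>\ell$: the jobs of $\tau_k$ are served FIFO, so as long as $\tau_{k,\ell}$ is unfinished (and $f_{k,\ell}>d$) no later job $\tau_{k,j}$, $j>\ell$, is eligible for execution before $d$, hence the processor neither works on nor is suspended by it during $[c,d)$. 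This reduces $B_k(c,d)$ to $B_{k,\ell}(c,d)+\sum_{j<\ell}B_{k,j}(c,d)$, and, because $\tau_{k,\ell}$ is not finished by $d$, the contrapositive of Lemma~\ref{lem:Bkl_finished} gives the strict bound $B_{k,\ell}(c,d)<C_k+S_k$.

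The one genuinely delicate step is bounding $\sum_{i\neq k}B^i_k(c,d)$ by $\sum_{i\neq k}B^i_{k,\ell}(c,d)$. I would take $\tau_{k,m}$ to be the current job of $\tau_k$ on $[d-\eps,d)$ for $\eps>0$ small enough that this job is constant there (well defined since $\tau_k$ is active near $d$ and only finitely many jobs of $\tau_k$ complete in a bounded interval), and show $m\le\ell$. If instead $m>\ell$, then $\tau_{k,\ell}$ is not the current job near $d$; as the current job is by definition the \emph{earliest-released} active job of $\tau_k$ and $\tau_{k,\ell}$ is released before $\tau_{k,m}$, this is only possible if $\tau_{k,\ell}$ is not active near $d$ at all, which together with $f_{k,\ell}>d$ forces $r_{k,\ell}\ge d$ and hence $r_{k,m}>r_{k,\ell}\ge d$ — contradicting that $\tau_{k,m}$ is active (thus released) before $d$. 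From $m\le\ell$ we get $\pi_{k,m}\le\pi_{k,\ell}$, so every job that outranks $\tau_{k,m}$ also outranks $\tau_{k,\ell}$; combined with the second case of Lemma~\ref{lem:refinement_Bik} this yields $B^i_k(c,d)\le B^i_{k,m}(c,d)\le B^i_{k,\ell}(c,d)$ for each $i\neq k$.

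Assembling the bounds,
\[
d-c \;=\; B_{k,\ell}(c,d)+\sum_{j<\ell}B_{k,j}(c,d)+\sum_{i\neq k}B^i_k(c,d) \;<\; (C_k+S_k)+\sum_{j<\ell}B_{k,j}(c,d)+\sum_{i\neq k}B^i_{k,\ell}(c,d),
\]
which is exactly the claimed inequality; the boundary case $c=d$ reduces to $C_k+S_k>0$, which holds since an unfinished job has outstanding execution or suspension demand. I expect locating this ``last current job'' $\tau_{k,m}$ and verifying $m\le\ell$ — i.e., correctly tying the tail of the analysis window back to $\tau_{k,\ell}$ — to be the only point that requires real care; everything else is routine bookkeeping with the lemmas already in place.
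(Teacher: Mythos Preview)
Your proposal is correct and follows essentially the same approach as the paper: both use $\Btilde_k(c,d)=0$ together with Lemmas~\ref{lem:distribution_ps_with_b} and~\ref{lem:refinement_Bk}, eliminate the $j>\ell$ terms via the FIFO execution of $\tau_k$'s jobs, bound $B_{k,\ell}(c,d)<C_k+S_k$ via Lemma~\ref{lem:Bkl_finished}, and bound $B^i_k(c,d)\le B^i_{k,\ell}(c,d)$ by identifying the last current job $\tau_{k,m}$ near $d$ and arguing $m\le\ell$ via Lemma~\ref{lem:refinement_Bik}. Your justification for $m\le\ell$ (through the release-time contradiction $r_{k,m}>r_{k,\ell}\ge d$) is slightly more explicit than the paper's, but the structure is identical.
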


	\begin{proof}
		Since $\tau_k$ is active during $[c,d)$, we have $\Btilde_k(c,d) = 0$.
		Using this together with Lemma~\ref{lem:distribution_ps_with_b} and Lemma~\ref{lem:refinement_Bk} leads to $\sum_{i\neq k} B^i_k(c,d) + \sum_{j\in\Nbb} B_{k,j}(c,d) = (d-c)$.
		We prove the lemma by showing that 
		\begin{enumerate}
			\item $\sum_{j \in \Nbb} B_{k,j}(c,d) < C_k+S_k + \sum_{j < \ell} B_{k,j}(c,d)$ and
			\item $B^i_k(c,d)\leq B^i_{k,\ell}(c,d)$ for all $i\neq k$.
		\end{enumerate}
		
		First, we prove 1). 
		Since $\tau_{k,\ell}$ is not finished by time $d$, we know that all $\tau_{k,j}$ with $j>\ell$ are not current before $d$.
		Hence, $B_{k,j}(c,d) = 0$ for all $j > \ell$ and $\sum_{j \in \Nbb} B_{k,j}(c,d) = \sum_{j \leq \ell} B_{k,j}(c,d)$.
		Moreover, due to Lemma~\ref{lem:Bkl_finished}, we have $B_{k,\ell}(c,d) < C_k+S_k$.
		This proves 1).
		
		Second, we prove 2).
		Since $\tau_{k}$ is active during $[c,d)$, by definition, at any time instant $t \in [c, d)$, there is always a job of $\tau_k$, which is current. Let the last job of $\tau_k$ that is current in this interval be $\tau_{k,\ell'}$. That is, there exists some $\eps >0$ such that $\tau_{k,\ell'}$ is current during $[d-\eps,d)$.  By Lemma~\ref{lem:refinement_Bik}, we have $B^i_k(c,d) \leq B^i_{k,\ell'}(c,d)$ for all $i\neq k$.

        According to the handling mechanism to deal with multiple
        active jobs of $\tau_k$ (for arbitrary-deadline task systems) at the end of Section~\ref{sec:system_model}, the jobs of
        $\tau_k$ are executed one after another. Since
        $\tau_{k,\ell'}$ is current in $[d-\eps,d)$, the job
        $\tau_{k,\ell}$ can not have higher priority than
        $\tau_{k, \ell'}$, i.e., $\ell' \leq \ell$ holds.
        Otherwise,
        $\tau_{k,\ell}$ has to be finished by time $d-\eps$.
        Therefore, $B^i_{k,\ell'}(c,d) \leq B^i_{k,\ell}(c,d)$ for all
        $i\neq k$. 
        We conclude that 2) holds.
	\end{proof}
	
	The negation of this lemma leads to a response-time upper bound of $\tau_{k,\ell}$ introduced by Proposition~\ref{prop:backbone_prep} and Corollary~\ref{cor:backbone}.
	
	\begin{prop}\label{prop:backbone_prep}
      Let $c\leq d \in \Rbb$ such that either 1) $\tau_{k,\ell}$ is
      released before $c$, i.e., $c \geq r_{k,\ell}$, or 2)~$c < r_{k,\ell}$ and task $\tau_k$ is active during
        $[c,r_{k,\ell})$. If
		\begin{equation}\label{eq:prop:backbone_prep}
			(C_k+ S_k) + \sum_{i \neq k} B^i_{k,\ell}(c,d) + \sum_{j < \ell} B_{k,j}(c,d) \leq (d-c),
		\end{equation}
		then $\tau_{k,\ell}$ is finished by time $d$, i.e., $d \geq f_{k,\ell}$.
	\end{prop}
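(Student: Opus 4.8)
The plan is to read Proposition~\ref{prop:backbone_prep} as the contrapositive of Lemma~\ref{lem:non-sched}: I would argue by contradiction, assuming that $\tau_{k,\ell}$ is \emph{not} finished by time $d$ (that is, $f_{k,\ell} > d$), and then establish the single hypothesis that Lemma~\ref{lem:non-sched} still needs, namely that $\tau_k$ is active during the whole interval $[c,d)$. Feeding that into Lemma~\ref{lem:non-sched} produces a strict inequality that contradicts the assumed Equation~\eqref{eq:prop:backbone_prep}, so $\tau_{k,\ell}$ must finish by $d$ after all.

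The only real work is showing $\tau_k$ is active on all of $[c,d)$. First I would note that if $r_{k,\ell} \le d$, then $f_{k,\ell} > d \ge r_{k,\ell}$ means $\tau_{k,\ell}$ itself is active on $[r_{k,\ell}, d)$, hence $\tau_k$ is active on $[\max\{c, r_{k,\ell}\}, d)$. In case~1 we have $c \ge r_{k,\ell}$, so $\max\{c,r_{k,\ell}\} = c$ and the claim is immediate (and here $r_{k,\ell} \le c \le d$ holds automatically). In case~2, the assumption already gives that $\tau_k$ is active on $[c, r_{k,\ell})$; concatenating this window with the activity of $\tau_{k,\ell}$ on $[r_{k,\ell}, d)$ --- or, in the degenerate situation $r_{k,\ell} > d$, simply using $[c,d) \subseteq [c, r_{k,\ell})$ --- covers $[c,d)$ with no gap.

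With $\tau_k$ active throughout $[c,d)$ and $\tau_{k,\ell}$ unfinished by $d$, I would invoke Lemma~\ref{lem:non-sched} on the interval $[c,d)$ for the job $\tau_{k,\ell}$ (the lemma does not require $\tau_{k,\ell}$ to be the current job anywhere in the window), obtaining $(C_k+S_k) + \sum_{i \neq k} B^i_{k,\ell}(c,d) + \sum_{j < \ell} B_{k,j}(c,d) > (d-c)$, which contradicts Equation~\eqref{eq:prop:backbone_prep}.

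I do not expect a genuine obstacle here; the argument is a few lines of interval bookkeeping. The one point to be careful about is the case split in~2 together with the corner cases where $r_{k,\ell}$ may exceed $d$ or where $C_k + S_k = 0$ lets $\tau_{k,\ell}$ finish at its release --- in each of these either $\tau_k$ is still active over the whole window so the contradiction goes through, or the conclusion already holds because $\tau_{k,\ell}$ is finished by $d$.
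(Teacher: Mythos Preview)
Your proposal is correct and follows exactly the same route as the paper: assume $\tau_{k,\ell}$ is not finished by $d$, argue that $\tau_k$ is then active on all of $[c,d)$ by splitting into the two hypotheses, and invoke Lemma~\ref{lem:non-sched} to contradict Equation~\eqref{eq:prop:backbone_prep}. Your treatment of the corner cases (e.g.\ $r_{k,\ell}>d$ in case~2) is in fact slightly more careful than the paper's own three-line proof.
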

	
	\begin{proof}
		We use an indirect proof strategy and assume that $\tau_{k,\ell}$ is not finished by time $d$.
		In this case, $\tau_{k,\ell}$ is active during $[r_{k,\ell},d)$.
		Since $r_{k,\ell}\leq c$ or $\tau_k$ is active during $[c,r_{k,\ell})$, we conclude that $\tau_k$ is also active during $[c,d)$.
		Lemma~\ref{lem:non-sched} states that Equation~\eqref{eq:prop:backbone_prep} does not hold.
		This contradicts our assumption.
	\end{proof}
	
	In Section~\ref{sec:estimation_B} we obtain the upper bounds for $B^i_{k,\ell}(c,d)$ and $B_{k,j}(c,d)$ by enlarging the window of interest to $[c, d_{k,\ell})$.
	To apply these upper bounds we formulate the response-time upper bound using $B^i_{k,\ell}(c, d_{k,\ell})$ and $B_{k,j}(c, d_{k,\ell})$ in the following corollary.

	\begin{cor}[Analysis Backbone]\label{cor:backbone}
    Let $c\leq d_{k,\ell} \in \Rbb$ such that either 1)~$\tau_{k,\ell}$ is
    released before $c$, i.e., $c \geq r_{k,\ell}$ or 2)~$c < r_{k,\ell}$ and task $\tau_k$ is active during
        $[c,r_{k,\ell})$.
		If 
		\begin{equation}\label{eq:cor_sched_test}
			\Rtilde_{k,\ell} \!:=\!
			(C_k+ S_k) + \!\sum_{i \neq k}\! B^i_{k,\ell}(c,d_{k,\ell}) + \!\sum_{j < \ell}\! B_{k,j}(c,d_{k,\ell}) + c - r_{k,\ell},
		\end{equation}
		is at most $D_k$, then $\Rtilde_{k,\ell}$ is an upper bound on the response time of $\tau_{k,\ell}$.
	\end{cor}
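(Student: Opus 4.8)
The plan is to reduce Corollary~\ref{cor:backbone} to Proposition~\ref{prop:backbone_prep} by instantiating the free right endpoint $d$ of the proposition at the candidate finish time $d := r_{k,\ell} + \Rtilde_{k,\ell}$. First I would record two elementary facts about this choice. Since $C_k + S_k \geq 0$ and every term $B^i_{k,\ell}(c,d_{k,\ell})$ and $B_{k,j}(c,d_{k,\ell})$ is non-negative, the definition of $\Rtilde_{k,\ell}$ in~\eqref{eq:cor_sched_test} gives $\Rtilde_{k,\ell} \geq c - r_{k,\ell}$, hence $d = r_{k,\ell} + \Rtilde_{k,\ell} \geq c$, so $[c,d)$ is a legitimate interval. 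On the other hand, the hypothesis $\Rtilde_{k,\ell} \leq D_k$ yields $d = r_{k,\ell} + \Rtilde_{k,\ell} \leq r_{k,\ell} + D_k = d_{k,\ell}$, so the window $[c,d)$ is contained in $[c,d_{k,\ell})$.

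Next I would verify that the premise of Proposition~\ref{prop:backbone_prep} holds for this $d$. By additivity and non-negativity of $B^i_{k,\ell}$ and $B_{k,j}$ (the property noted just after Lemma~\ref{lem:refinement_Bik}), shrinking the right endpoint from $d_{k,\ell}$ to $d$ can only decrease these quantities, i.e. $B^i_{k,\ell}(c,d) \leq B^i_{k,\ell}(c,d_{k,\ell})$ for every $i \neq k$ and $B_{k,j}(c,d) \leq B_{k,j}(c,d_{k,\ell})$ for every $j < \ell$. At the same time, unwinding the definition of $\Rtilde_{k,\ell}$ shows $d - c = r_{k,\ell} + \Rtilde_{k,\ell} - c = (C_k + S_k) + \sum_{i \neq k} B^i_{k,\ell}(c,d_{k,\ell}) + \sum_{j < \ell} B_{k,j}(c,d_{k,\ell})$. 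Combining the two observations yields $(C_k + S_k) + \sum_{i \neq k} B^i_{k,\ell}(c,d) + \sum_{j < \ell} B_{k,j}(c,d) \leq d - c$, which is exactly Inequality~\eqref{eq:prop:backbone_prep} for the endpoint $d$.

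Finally, the hypothesis on $c$ — that either $c \geq r_{k,\ell}$, or $c < r_{k,\ell}$ with $\tau_k$ active on $[c, r_{k,\ell})$ — is identical in the corollary and in the proposition, so it transfers verbatim, and $c \leq d$ was established above. Proposition~\ref{prop:backbone_prep} then applies and gives that $\tau_{k,\ell}$ is finished by time $d$, i.e. $f_{k,\ell} \leq d = r_{k,\ell} + \Rtilde_{k,\ell}$; subtracting $r_{k,\ell}$ gives $R_{k,\ell} = f_{k,\ell} - r_{k,\ell} \leq \Rtilde_{k,\ell}$, the claimed bound. There is no deep obstacle; the step needing the most care is the direction of the monotonicity argument — the proposition must be invoked on the \emph{small} window $[c,d)$, whereas the interference bounds of Section~\ref{sec:estimation_B} are stated on the \emph{large} window $[c,d_{k,\ell})$ — together with checking that $C_k + S_k \geq 0$ and $\Rtilde_{k,\ell} \leq D_k$ keep $d$ inside $[c, d_{k,\ell}]$.
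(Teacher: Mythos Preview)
Your proof is correct and follows essentially the same route as the paper's own proof: set $d := r_{k,\ell} + \Rtilde_{k,\ell}$, use non-negativity to get $c \le d$ and the hypothesis $\Rtilde_{k,\ell}\le D_k$ to get $d \le d_{k,\ell}$, then shrink the $B$-terms by monotonicity and apply Proposition~\ref{prop:backbone_prep}. If anything, your handling of the check $c \le d$ is slightly cleaner than the paper's contradiction argument.
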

	
	\begin{proof}
		This corollary follows from Proposition~\ref{prop:backbone_prep} by setting $d:=r_{k,\ell} + \Rtilde_{k,\ell}$.
		The case $c>d$ does not occur since otherwise $\Rtilde_{k,\ell} \geq  c-r_{k,\ell} > \Rtilde_{k,\ell}$ would hold by definition of $\Rtilde_{k,\ell}$ and $d$.
		We only consider $c \leq d$.
		
		Since $\Rtilde_{k,\ell} \leq D_k$ by assumption, we have $d\leq d_{k,\ell}$.
		Hence, $B^i_{k,\ell}(c,d)\leq B^i_{k,\ell}(c,d_{k,\ell})$ and $B_{k,j}(c,d) \leq B_{k,j}(c,d_{k,\ell})$ for all $i$ and all $j$.
		As a result, the left hand side of Equation~\eqref{eq:prop:backbone_prep} is less than or equal to $(C_k+ S_k) + \sum_{i \neq k} B^i_{k,\ell}(c,d_{k,\ell}) + \sum_{j < \ell} B_{k,j}(c,d_{k,\ell})$, which is the same as $\Rtilde_{k,\ell} + r_{k,\ell} - c = d-c$ by definition.
		
		Since all assumptions from Proposition~\ref{prop:backbone_prep} hold, the job $\tau_{k,\ell}$ is finished by time $d=\Rtilde_{k,\ell} + r_{k,\ell}$.
		Hence, $\Rtilde_{k,\ell}$ is an upper bound on the response time of $\tau_{k,\ell}$.
	\end{proof}

	To apply the response time upper bound from Equation~\eqref{eq:cor_sched_test} we need to answer the following two questions:
	\begin{itemize}
		\item \textbf{Question~1:} What are the values of $B^i_{k,\ell}(c,d_{k,\ell})$ and $B_{k,j}(c,d_{k,\ell})$?
		Since computing the values directly has high complexity, we use overapproximation to deal with this issue.
		In Section~\ref{sec:estimation_B} we derive upper bounds for $B^i_{k,\ell}(c,d_{k,\ell})$ and $B_{k,j}(c,d_{k,\ell})$ with $i\neq k$ and $j<\ell$.
		\item \textbf{Question~2:} Which are good values for $c$?
		Trying out all possible $c$ for the estimation would lead to very high complexity of our method.
		Therefore, we discuss two strategies to choose $c$ in Sections~\ref{sec:test_fixed} and~\ref{sec:test_arb}.
		More precisely, with the first procedure we restrict $c$ to be in the interval $[r_{k\ell}, d_{k,\ell})$, which has benefits on the runtime of our analysis due to the \emph{fixed analysis windows}.
		For the second strategy, we examine \emph{active intervals} for $\tau_k$, and gradually increase the analysis window.
		In Section~\ref{sec:dominance} we discuss that both methods do not dominate each other.
	\end{itemize}
	
	\subsection{Upper Bounds of $B^i_{k,\ell}(c,d_{k,\ell})$ and $B_{k,j}(c,d_{k,\ell})$}
	\label{sec:estimation_B}
	
	In this section, we bound the interference of higher-priority jobs during the interval $[c,d_{k,\ell})$ to provide upper bounds
	for $\sum_{j < \ell} B_{k,j}(c,d_{k,\ell})$ in Lemma~\ref{lem:upper_bound_sum_Bkj} and for $B_{k,\ell}^i(c, d_{k,\ell})$ in Lemma~\ref{lem:total_bound_interference}.
	We do this under the assumption that all jobs with higher priority than $\tau_{k,\ell}$ meet their deadline, since this is our induction hypothesis later used in Theorem~\ref{thm:suff_sched_test_fixed} and Lemma~\ref{lem:suff_sched_test_var_prep}.
	As mentioned earlier, the decision of $c$ will be further discussed in Sections~\ref{sec:test_fixed}~and~\ref{sec:test_arb}.
	For each task $\tau_i \neq \tau_k$ let $\Rtilde_i$ be an upper bound on the worst-case response time (WCRT) of jobs of $\tau_i$ with higher priority than $\tau_{k,\ell}$.
	If there is no upper bound available yet, we set $\Rtilde_i := D_i$, as all jobs with higher priority than $\tau_{k,\ell}$ meet their deadline.
    
	With arbitrary deadlines, there might be several active jobs of one task at the same time, which makes the analysis in general more complicated.
	Note that, according to the mechanism introduced in the end of Section~\ref{sec:system_model}, the jobs of $\tau_k$ must be executed one after another, i.e., even if the processor idles, a job of $\tau_k$ cannot start its execution unless all jobs of $\tau_k$ released prior to it are finished.
	
	\begin{lem}\label{lem:number_active_jobs}
		The number of jobs of $\tau_k$ with higher priority than $\tau_{k,\ell}$ which can be executed during $[c,d_{k,\ell})$ is upper bounded by $\ceiling{\frac{d_{k,\ell}-c}{T_k}}-1$.
	\end{lem}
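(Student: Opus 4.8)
The plan is as follows. First I would pin down which jobs of $\tau_k$ have higher priority than $\tau_{k,\ell}$. By Definition~\ref{defn:priority_rp} a job $\tau_{k,j}$ has higher priority than $\tau_{k,\ell}$ iff $\pi_{k,j}=r_{k,j}+\Pi_k < r_{k,\ell}+\Pi_k=\pi_{k,\ell}$, i.e.\ iff $r_{k,j}<r_{k,\ell}$; since $r_{k,m+1}\ge r_{k,m}+T_k$ with $T_k>0$, the release sequence is strictly increasing, so these are exactly the jobs with $j<\ell$ (and no priority tie ever occurs within $\tau_k$). Hence it suffices to bound the number of indices $j<\ell$ such that the processor works on $\tau_{k,j}$ at some instant of $[c,d_{k,\ell})$. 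If $c\ge d_{k,\ell}$ the interval is empty and there is nothing to prove, so I would assume $c<d_{k,\ell}$.

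The heart of the argument is a reparametrization $j=\ell-i$ with $i\ge1$, together with two opposing bounds on the release time $r_{k,\ell-i}$. On the one hand, if the processor works on $\tau_{k,\ell-i}$ at some time $t\in[c,d_{k,\ell})$, then $t<f_{k,\ell-i}$; invoking the standing hypothesis of Section~\ref{sec:estimation_B} that every job of higher priority than $\tau_{k,\ell}$ meets its deadline (this is exactly the assumption discharged by the induction in Theorem~\ref{thm:suff_sched_test_fixed} and Lemma~\ref{lem:suff_sched_test_var_prep}), we get $f_{k,\ell-i}\le d_{k,\ell-i}=r_{k,\ell-i}+D_k$, hence $r_{k,\ell-i}>t-D_k\ge c-D_k$. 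On the other hand, telescoping $r_{k,m+1}\ge r_{k,m}+T_k$ over $m=\ell-i,\dots,\ell-1$ gives $r_{k,\ell-i}\le r_{k,\ell}-iT_k=d_{k,\ell}-D_k-iT_k$. Chaining the two inequalities cancels $D_k$ and leaves $iT_k<d_{k,\ell}-c$, i.e.\ $i<(d_{k,\ell}-c)/T_k$.

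To finish, the set of indices $j<\ell$ whose job is executed during $[c,d_{k,\ell})$ is contained in $\{\,\ell-i : i\in\Nbb,\ i<(d_{k,\ell}-c)/T_k\,\}$, so its cardinality is at most the number of positive integers strictly below $(d_{k,\ell}-c)/T_k$, which equals $\ceiling{(d_{k,\ell}-c)/T_k}-1$ (immediate by distinguishing whether or not $(d_{k,\ell}-c)/T_k$ is an integer). This is precisely the claimed bound. I do not expect the arithmetic itself to be the obstacle; the only point that genuinely needs care is recognising that the deadline-meeting hypothesis is legitimately available at this stage, so that the step $f_{k,\ell-i}\le r_{k,\ell-i}+D_k$ is justified. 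Everything else is elementary and, pleasantly, requires no reasoning about processor states or about which job of $\tau_k$ is \emph{current} at a given time.
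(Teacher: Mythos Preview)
Your proof is correct and follows essentially the same idea as the paper's (very terse) argument: both use the standing hypothesis that higher-priority jobs meet their deadlines to restrict which jobs of $\tau_k$ can still be executing after time $c$, and then count using the minimum inter-arrival spacing $T_k$. The paper phrases the count as ``number of deadlines of $\tau_k$ in $(c,d_{k,\ell}]$'' and subtracts one for $\tau_{k,\ell}$ itself, whereas you phrase it via the index shift $i=\ell-j$ and the telescoped bound $r_{k,\ell-i}\le r_{k,\ell}-iT_k$; since $d_{k,j}=r_{k,j}+D_k$, these are the same count up to a translation by $D_k$, and you correctly make explicit the reliance on the deadline-meeting hypothesis that the paper leaves implicit.
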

	
	\begin{proof}
		The formula $\ceiling{\frac{d_{k,\ell}-c}{T_k}}$ counts the maximal number of deadlines of $\tau_k$ in $(c,d_{k,\ell}]$. Since $\tau_{k,\ell}$ has deadline at the end of the analysis window but it has no higher priority than itself, we remove one job.
	\end{proof}

	We use the previous lemma to conclude an upper bound for $\sum_{j < \ell} B_{k,j}(c,d_{k,\ell})$ as follows.

	\begin{lem}[Bound~B1]\label{lem:upper_bound_sum_Bkj}
		The total amount of time that the processor is working on or suspended by higher-priority jobs of $\tau_k$ while not working on higher priority jobs of other tasks is upper bounded by
		\begin{equation}
			\sum_{j < \ell} B_{k,j}(c,d_{k,\ell}) \leq \left(\ceiling{\frac{d_{k,\ell}-c}{T_k}}-1\right) \cdot (C_k+S_k).
		\end{equation}
	\end{lem}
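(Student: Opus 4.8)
The plan is to split the sum into a uniform per-job bound and a count of how many summands can be nonzero. First I would establish that each individual term satisfies $B_{k,j}(c,d_{k,\ell}) \leq C_k + S_k$. This is immediate from Definition~\ref{def:B_interferences_refinement}: the term $B_{k,j}$ only accumulates time during which the processor is working on or suspended by the single job $\tau_{k,j}$, and any job of $\tau_k$ is executed for at most $C_k$ time units and suspends itself for at most $S_k$ time units \emph{over its entire lifetime}; restricting attention to the sub-interval $[c,d_{k,\ell})$ can only decrease this.

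Next I would bound the number of indices $j < \ell$ for which $B_{k,j}(c,d_{k,\ell}) \neq 0$. If $B_{k,j}(c,d_{k,\ell}) > 0$, then at some instant $t \in [c,d_{k,\ell})$ the processor is working on or suspended by $\tau_{k,j}$, and in either case $\tau_{k,j}$ is active at $t$, so $\tau_{k,j}$ is in particular executed during $[c,d_{k,\ell})$ in the sense of Lemma~\ref{lem:number_active_jobs}. Moreover $j < \ell$ gives $r_{k,j} < r_{k,\ell}$ and hence $\pi_{k,j} = r_{k,j}+\Pi_k < r_{k,\ell}+\Pi_k = \pi_{k,\ell}$, so $\tau_{k,j}$ has higher priority than $\tau_{k,\ell}$. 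Therefore Lemma~\ref{lem:number_active_jobs} applies and the number of such indices is at most $\ceiling{\frac{d_{k,\ell}-c}{T_k}}-1$. This step also relies on the standing assumption of this section that all jobs of higher priority than $\tau_{k,\ell}$ meet their deadline, which is what makes the deadline-counting argument underlying Lemma~\ref{lem:number_active_jobs} valid: a higher-priority job of $\tau_k$ that is active during $[c,d_{k,\ell})$ and meets its deadline has that deadline inside $(c,d_{k,\ell})$.

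Combining the two steps, $\sum_{j<\ell} B_{k,j}(c,d_{k,\ell})$ has at most $\ceiling{\frac{d_{k,\ell}-c}{T_k}}-1$ nonzero summands, each of which is at most $C_k+S_k$, which yields the claimed inequality.

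I expect the main obstacle to be the second step, specifically being careful about the ``suspended by $\tau_{k,j}$'' case: one has to argue that even when the processor is merely suspended by $\tau_{k,j}$ and not actively executing it, this still forces $\tau_{k,j}$ to be counted among the jobs covered by Lemma~\ref{lem:number_active_jobs}, and to check that the interplay with the FIFO handling of multiple active jobs of $\tau_k$ (from the end of Section~\ref{sec:system_model}) and with the induction hypothesis on deadlines is consistent. The per-job bound and the final multiplication are routine.
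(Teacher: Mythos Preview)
Your proposal is correct and follows essentially the same approach as the paper: bound each term by $C_k+S_k$ and count the contributing jobs via Lemma~\ref{lem:number_active_jobs}. The paper's proof is a single sentence invoking exactly these two ingredients; your write-up simply unpacks the counting step more carefully (in particular, the observation that the ``suspended by'' case still forces the job's deadline into $(c,d_{k,\ell}]$ under the standing assumption that higher-priority jobs meet their deadlines), which is a reasonable elaboration rather than a different route.
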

	\begin{proof}
		This follows from Lemma~\ref{lem:number_active_jobs} and the upper bounds $C_k$ for execution and $S_k$ for suspension. 
	\end{proof}

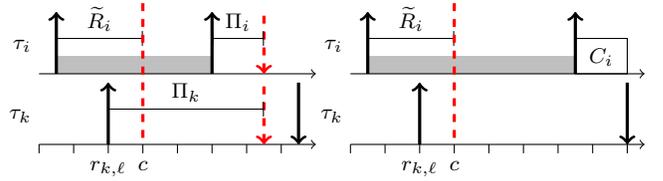
\begin{figure}
	\centering
	\begin{tikzpicture}[yscale=\figurestretchvert, xscale=0.23]
		\footnotesize{}   
		
%
%
%
%
%
%
		
		\begin{scope}[shift={(0,2)}] 
			
			\fill[black!25] (1,0) rectangle (10, 0.5);
			
			\taskname{$\tau_i$}
			
			\timeline{0}{16}{}
			
			\releases{1,10}
			
			
			\glprio{13}
			
			\draw[|-|] (1,1) -- node[above] {$\Rtilde_i$} (6,1);
			\draw[|-|] (10,1) -- node[above] {$\Pi_i$} (13,1);

		\end{scope}
		
		\begin{scope}[shift={(0,0)}] 
			\taskname{$\tau_k$}
			
			\timeline{0}{16}{}
			\grid{0}{15}{2}{0}
			
			\releases{4}
			\deadlines{15}
			
			
			
			\glprio{13}
			
			\node[below] at (4,-.25) {$r_{k,\ell}$};
			\node[below] at (6,-.25) {$c$};
			
			\draw[|-|] (4,1) -- node[above] {$\Pi_k$} (13,1);
		\end{scope}
	
		\draw[color=red, dashed, very thick] (6,4) -- (6,0);

		\begin{scope}[shift={(18,0)}]
			\begin{scope}[shift={(0,2)}] 
				
				\fill[black!25] (1,0) rectangle (13, 0.5);
				
				\taskname{$\tau_i$}
				
				\timeline{0}{17}{}
				
				\releases{1,13}
				
				\exec{13}{16}
				\node[above] at (14.5,0) {$C_i$}; 
				
				
				\draw[|-|] (1,1) -- node[above] {$\Rtilde_i$} (6,1);

			\end{scope}
			
			\begin{scope}[shift={(0,0)}] 
				\taskname{$\tau_k$}
				
				\timeline{0}{17}{}
				\grid{0}{17}{2}{0}
				
				\releases{4}
				\deadlines{16}
				
				
				
				
				\node[below] at (4,-.25) {$r_{k,\ell}$};
				\node[below] at (6,-.25) {$c$};
				
			\end{scope}
			
			\draw[color=red, dashed, very thick] (6,4) -- (6,0);
		\end{scope}

	\end{tikzpicture}
	\caption{Intuition for Lemma~\ref{lem:ineqk_1}~(left) and Lemma~\ref{lem:ineqk_2}~(right).
	The gray box depicts the interval to count job releases.}\label{fig:intuition}
\end{figure}

	For $B^i_{k,\ell}$, we bound the interference by estimating the number of job releases during a certain interval under analysis, as depicted by the gray boxes in Figure~\ref{fig:intuition}.

	\begin{lem}\label{lem:ineqk_1}
		There are at most $\max\left(\ceiling{\frac{\Pi_k-\Pi_i+\Rtilde_i+r_{k,\ell}-c}{T_i}}, 0\right)$ jobs of $\tau_i,~i\neq k$ with higher priority than $\tau_{k,\ell}$ that are executed during the interval $[c, d_{k,\ell})$.
	\end{lem}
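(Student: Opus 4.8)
The plan is to pin down the release time of any relevant job of $\tau_i$ inside a window whose length matches the numerator of the claimed bound, and then count releases using the minimum inter-arrival time. So I would fix a job $\tau_{i,j}$ with $i \neq k$ that has higher priority than $\tau_{k,\ell}$ and is executed by the processor at some instant $t \in [c, d_{k,\ell})$. From the priority comparison in Definition~\ref{defn:priority_rp}, $\pi_{i,j} = r_{i,j} + \Pi_i < r_{k,\ell} + \Pi_k = \pi_{k,\ell}$, which gives the upper bound $r_{i,j} < r_{k,\ell} + \Pi_k - \Pi_i$. For the matching lower bound, note that since the processor works on $\tau_{i,j}$ at time $t$, the job is not yet finished at $t$, so $c \leq t < f_{i,j}$; and since $\Rtilde_i$ is, by the standing assumption, an upper bound on the response time of \emph{every} job of $\tau_i$ with higher priority than $\tau_{k,\ell}$, we have $f_{i,j} \leq r_{i,j} + \Rtilde_i$, hence $c < r_{i,j} + \Rtilde_i$, i.e.\ $r_{i,j} > c - \Rtilde_i$.

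Thus every such job has its release time in the open interval $(c - \Rtilde_i,\ r_{k,\ell} + \Pi_k - \Pi_i)$, whose length is exactly $L := \Pi_k - \Pi_i + \Rtilde_i + r_{k,\ell} - c$. If $L \leq 0$ the interval is empty, so no such job exists and the bound $\max(\ceiling{L/T_i}, 0) = 0$ holds vacuously. If $L > 0$, I would use that consecutive releases of $\tau_i$ are at least $T_i$ apart ($r_{i,j+1} \geq r_{i,j} + T_i$), so at most $\ceiling{L / T_i}$ of $\tau_i$'s releases can lie in the open window identified above; since distinct jobs have distinct release times, this bounds the number of higher-priority jobs of $\tau_i$ executed during $[c, d_{k,\ell})$ by $\ceiling{\frac{\Pi_k - \Pi_i + \Rtilde_i + r_{k,\ell} - c}{T_i}}$, and taking the maximum with $0$ subsumes the degenerate case as well.

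Everything here is routine except for two small points I would make explicit. First, the use of $\Rtilde_i$ is not circular: it is defined precisely as a response-time bound for the jobs of $\tau_i$ of higher priority than $\tau_{k,\ell}$ — exactly the class being counted — and in the schedulability test it will be furnished either by the induction hypothesis (all higher-priority jobs meet their deadlines, so $\Rtilde_i := D_i$ works) or by a previously computed bound. Second, the counting fact ``$\ceiling{L/T_i}$ releases in a window of length $L$'' is applied to an \emph{open} window; this is the correct bound for an open (or half-open) interval, so the exact status of the endpoints is immaterial and need not be tracked. I do not expect a genuine obstacle; the only thing to be careful about is getting the signs in $L$ and the directions of the two release-time inequalities right, which is why I would separate the priority bound (giving the upper endpoint) from the $\Rtilde_i$ bound (giving the lower endpoint) cleanly.
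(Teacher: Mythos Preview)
Your proposal is correct and follows essentially the same approach as the paper: bound the release time of any relevant job of $\tau_i$ from below via the response-time bound $\Rtilde_i$ and from above via the priority-point comparison, then count releases in the resulting window of length $L$ using the minimum inter-arrival time. The paper's proof is terser but identical in structure; your added remarks on the non-circular use of $\Rtilde_i$ and on endpoint conventions are helpful but not strictly needed (one tiny nit: since ties in Definition~\ref{defn:priority_rp} are broken arbitrarily, the upper endpoint should in principle be $r_{i,j} \leq r_{k,\ell}+\Pi_k-\Pi_i$, making the window half-open rather than open, but the $\lceil L/T_i\rceil$ count is unaffected).
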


	\begin{proof}
		A job of $\tau_i$ can only be executed during $[c,d_{k,\ell})$, if it is released after $c-\Rtilde_i$.
		Moreover, it can have higher priority than $\tau_{k,\ell}$, if it is released no later than $r_{k,\ell} + \Pi_k - \Pi_i$.
		We count the maximal number of releases of $\tau_i$ during $(c-\Rtilde_i, r_{k,\ell} + \Pi_k - \Pi_i)$ by $\ceiling{\frac{r_{k,\ell} + \Pi_k - \Pi_i - c + \Rtilde_i}{T_i}}$ if the interval has non-negative length.
		Otherwise, the number of releases is $0$.
	\end{proof}
	
	The processor is working on each job of $\tau_i$ for at most $C_i$ time units.
	Hence, we have an upper bound of $B_{k,\ell}^i(c,d_{k,\ell}) \leq \max\left(\ceiling{\frac{\Pi_k-\Pi_i+\Rtilde_i+r_{k,\ell}-c}{T_i}}, 0\right) \cdot C_i$.
	In the above proof we only compare if the job can work \emph{after} $c$.
	We do not check if the job can work \emph{before} $d_{k,\ell}$.
	As a result, the estimation can become very loose if $\Pi_k$ is very high compared to $\Pi_i$, e.g., when applying the schedulability test to TFP scheduling.
	Therefore, we combine the estimation with a second approach.
	
	\begin{lem}\label{lem:ineqk_2}
		For $i\neq k$,  $B_{k,\ell}^i(c,d_{k,\ell})$ is upper bounded
        by
        $\max\left\{\ceiling{\frac{D_k-C_i+r_{k,\ell}-c+\Rtilde_i}{T_i}}C_i, 0\right\}$.
	\end{lem}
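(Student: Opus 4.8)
The plan is to prove Lemma~\ref{lem:ineqk_2} in exactly the same spirit as Lemma~\ref{lem:ineqk_1}: bound $B^i_{k,\ell}(c,d_{k,\ell})$ by (an upper bound on the number of jobs of $\tau_i$ that can contribute to it) times $C_i$, the per‑job processing cap. The only thing that changes is which interval a contributing job must be released in. In Lemma~\ref{lem:ineqk_1} the latest admissible release is pinned down via the priority‑point difference $\Pi_k-\Pi_i$; here I would instead pin it down via the right endpoint $d_{k,\ell}$ of the analysis window, which stays useful even when $\Pi_k$ is much larger than $\Pi_i$. Both bounds being valid, they are combined (by taking the smaller one) in the subsequent Lemma~\ref{lem:total_bound_interference}.

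First I would fix the relevant jobs: enumerate the jobs $\tau_{i,j_1},\dots,\tau_{i,j_N}$ of $\tau_i$ that have higher priority than $\tau_{k,\ell}$ and that the processor works on at some instant of $[c,d_{k,\ell})$, listed in release order. For the \emph{left} endpoint: if the processor works on such a job at time $t\ge c$, then $f_{i,j}>t\ge c$, and since its response time is at most $\Rtilde_i$ we get $r_{i,j}=f_{i,j}-R_{i,j}>c-\Rtilde_i$; hence every contributing job is released after $c-\Rtilde_i$. For the \emph{right} endpoint I would use the FIFO‑within‑$\tau_i$ execution rule from the end of Section~\ref{sec:system_model}: the processor works on the jobs of $\tau_i$ in release order, so for $m<N$ the job $\tau_{i,j_m}$ finishes before $\tau_{i,j_{m+1}}$ is first executed, and $\tau_{i,j_{m+1}}$ is executed before $d_{k,\ell}$; consequently $\tau_{i,j_m}$ finishes before $d_{k,\ell}$, and (again by FIFO) none of these is executed before $c$. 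Thus each of $\tau_{i,j_1},\dots,\tau_{i,j_{N-1}}$ is released no later than $d_{k,\ell}-T_i$ and contributes at most $C_i$, while the last job $\tau_{i,j_N}$, which cannot start before $r_{i,j_N}$, contributes at most $\min\{C_i,\,d_{k,\ell}-r_{i,j_N}\}$.

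With this structure in hand I would do the counting arithmetic: the jobs $\tau_{i,j_1},\dots,\tau_{i,j_N}$ are separated by at least $T_i$ and all released in $(c-\Rtilde_i,\,d_{k,\ell})$, and together with the partial‑credit cap on the last job this yields $B^i_{k,\ell}(c,d_{k,\ell})\le \ceiling{\tfrac{(d_{k,\ell}-C_i)-(c-\Rtilde_i)}{T_i}}\cdot C_i$ whenever this quantity is positive, and $B^i_{k,\ell}(c,d_{k,\ell})=0$ otherwise, since no job can be simultaneously released after $c-\Rtilde_i$ and executed for a positive amount before $d_{k,\ell}$. Substituting $d_{k,\ell}=r_{k,\ell}+D_k$ turns the numerator into $D_k-C_i+r_{k,\ell}-c+\Rtilde_i$, giving exactly the claimed $\max\{\ceiling{\tfrac{D_k-C_i+r_{k,\ell}-c+\Rtilde_i}{T_i}}C_i,\,0\}$.

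The step I expect to be the main obstacle is making the right‑endpoint/partial‑last‑job accounting tight enough. A crude count of $\tau_i$‑releases in $(c-\Rtilde_i,\,d_{k,\ell})$ overshoots the claimed bound by up to one job, so the proof has to genuinely exploit that the boundary job near $d_{k,\ell}$ is executed for at most $d_{k,\ell}-r_{i,j_N}<C_i$ inside the window and balance this deficit against the $-C_i$ shift in the numerator — most cleanly through a short case distinction on whether the last contributing job finishes by $d_{k,\ell}$ (equivalently, whether $r_{i,j_N}\le d_{k,\ell}-C_i$), invoking $C_i\le T_i$ (which holds for any task with $U_i\le 1$). Everything else is the same release‑counting argument already used for Lemma~\ref{lem:ineqk_1}.
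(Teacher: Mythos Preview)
Your overall structure matches the paper's: isolate the last contributing job $\tau_{i,j_N}$, cap its contribution by $\min\{C_i,\,d_{k,\ell}-r_{i,j_N}\}$, and count the remaining releases in $(c-\Rtilde_i,\cdot\,]$. Case~A of your split (namely $r_{i,j_N}\le d_{k,\ell}-C_i$) goes through by straight release counting. The gap is Case~B. There the only bound your outline produces is $(N-1)C_i+\delta$ with $\delta=d_{k,\ell}-r_{i,j_N}$ and $(N-1)T_i< L-\delta$ for $L=d_{k,\ell}-c+\Rtilde_i$; but this intermediate bound can strictly exceed $\lceil (L-C_i)/T_i\rceil\,C_i$. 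For instance, with $T_i=10$, $C_i=5$, $L=14$, $\delta=1$ one gets $N\le 2$ and $(N-1)C_i+\delta=6$, whereas the lemma demands $\le 5$. Invoking $C_i\le T_i$ does not rescue this: it only yields $(N-1)C_i+\delta<\tfrac{C_i}{T_i}(L-C_i)+C_i$, which is weaker than the ceiling bound. (Also, ``$\tau_{i,j_N}$ finishes by $d_{k,\ell}$'' is \emph{not} equivalent to ``$r_{i,j_N}\le d_{k,\ell}-C_i$''; preemption decouples the two.)

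What is missing is exactly the device the paper uses: with $C^*$ the execution of $\tau_{i,j_N}$ inside the window, it shifts the \emph{left} endpoint of the counting interval by $C_i-C^*$, i.e., $B(c,r_{i,j_N})+C^*\le B\bigl(c+(C_i-C^*),\,r_{i,j_N}\bigr)+C_i$, which is free because $B(c,\,c+(C_i-C^*))\le C_i-C^*$. This shift makes the $C^*$ cancel against the $C^*$ hidden in $r_{i,j_N}+C^*\le d_{k,\ell}$, so the release-counting interval has length at most $d_{k,\ell}-C_i-c+\Rtilde_i$ regardless of $\delta$. Without an analogous tightening of the carry-in (not just of the last job), the ``balance the deficit'' step in your Case~B cannot be closed.
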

	
	Intuitively, the maximal interference from task $\tau_i$ is obtained when the \emph{last} interfering job of $\tau_i$ is released at $d_{k,\ell}-C_i$ and executed for $C_i$ time units during $[d_{k,\ell}-C_i, d_{k,\ell})$.
	The maximum interference is then calculated by the number of releases during $(c-\Rtilde_i, d_{k,\ell}-C_i]$ multiplied with $C_i$.
	In the following, we provide a formal proof as well.

	\begin{proof}
		If there is no job of $\tau_i$ executed during $[c,d_{k,\ell})$, then $B^i_{k,\ell}(c,d_{k,\ell})=0$ and the lemma is proven.
		
		Otherwise, let $\tau_{i,j'}$ be the last job of $\tau_i$ being executed during $[c,d_{k,\ell})$.
		We isolate that job in the following way.
		Let $r_{i,j'}$ be the release time of $\tau_{i,j'}$ and let $C^*$ be the amount of time that the processor is working on $\tau_{i,j'}$ during the interval $[c,d_{k,\ell})$.
		We have $B(c,d_{k,\ell}) \leq B(c,r_{i,j'}) + C^*.$
		
		If $r_{i,j'}-c < (C-C^*)$, then $B(c,r_{i,j'}) + C^* \leq C$.
		Moreover, $\ceiling{\frac{D_k-C_i+r_{k,\ell}-c+\Rtilde_i}{T_i}}C_i \geq \ceiling{\frac{D_k+r_{k,\ell}-c}{T_i}}C_i \geq C_i$ since there is a job being executed during $[c, d_{k,\ell})$, i.e., $c < D_k+r_{k,\ell}$.
		In this case the lemma is proven.
		
		If $r_{i,j'}-c \geq (C-C^*)$, then $B(c,r_{i,j'}) + C^* \leq B(c + (C_i-C^*),r_{i,j'}) + C_i$, i.e., we over approximate be replacing $(C_i-C^*)$ time units at the beginning by execution time.
		Next we compute $B(c + (C_i-C^*),r_{i,j'})$ by counting the number of releases during the interval $(c+(C_i-C^*)-\Rtilde_i, r_{i,j'}-T_i]$.
		It is 
		$
			\ceiling{\frac{r_{i,j'}-T_i-c-C_i+C*+\Rtilde_i}{T_i}} 
			\leq \ceiling{\frac{r_{k,\ell}-T_i-c-C_i+D_k+\Rtilde_i}{T_i}}
			= \ceiling{\frac{r_{k,\ell}-c-C_i+D_k+\Rtilde_i}{T_i}} -1
		$
		where we use that $r_{i,j'} + C^* \leq r_{k,\ell} + D_k$.
		In total, the value of $B(c,d_{k,\ell})$ is then upper bounded by $\ceiling{\frac{r_{k,\ell}-c-C_i+D_k+\Rtilde_i}{T_i}} \cdot C_i$ in this case.
	\end{proof}
	
	We combine the above two lemmas, to obtain the following bound for $B_{k,\ell}^i(c,d_{k,\ell})$.
	
	\begin{lem}[Bound~B2]\label{lem:total_bound_interference}
		For $i\neq k$ the value of  $B_{k,\ell}^i(c,d_{k,\ell})$ is upper bounded by
		\begin{equation}\label{eq:total_bound_interference}
			B_{k,\ell}^i (c,d_{k,\ell}) \leq
				\max\left(\ceiling{\frac{G_{k}^i+\Rtilde_i+r_{k,\ell}-c}{T_i}}, 0\right)C_i
		\end{equation}
		where $G_k^i:= \min(D_k-C_i, \Pi_k-\Pi_i)$.
	\end{lem}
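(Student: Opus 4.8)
The plan is to intersect the two overapproximations already at hand. The remark following Lemma~\ref{lem:ineqk_1} shows that $B_{k,\ell}^i(c,d_{k,\ell})$ is at most $\max\!\left(\ceiling{\frac{\Pi_k-\Pi_i+\Rtilde_i+r_{k,\ell}-c}{T_i}},0\right) C_i$, and Lemma~\ref{lem:ineqk_2} shows that it is at most $\max\!\left(\ceiling{\frac{D_k-C_i+\Rtilde_i+r_{k,\ell}-c}{T_i}},0\right) C_i$ (using $C_i\ge 0$ to pull the factor $C_i$ out of the outer maximum in the statement of Lemma~\ref{lem:ineqk_2}). Hence $B_{k,\ell}^i(c,d_{k,\ell})$ is bounded by the minimum of these two quantities, and what remains is to verify that this minimum coincides with the claimed right-hand side of Equation~\eqref{eq:total_bound_interference}.

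For that I would first factor out $C_i\ge 0$, reducing the task to the purely arithmetic identity $\min\!\big(\max(\ceiling{a},0),\max(\ceiling{b},0)\big)=\max\!\big(\ceiling{\min(a,b)},0\big)$ with $a=\frac{\Pi_k-\Pi_i+\Rtilde_i+r_{k,\ell}-c}{T_i}$ and $b=\frac{D_k-C_i+\Rtilde_i+r_{k,\ell}-c}{T_i}$. Two elementary facts suffice. First, $\min(\max(x,0),\max(y,0))=\max(\min(x,y),0)$ for all reals $x,y$, which is checked by the three cases $0\le x,y$ / exactly one of $x,y$ negative / both negative. Second, $\min(\ceiling{a},\ceiling{b})=\ceiling{\min(a,b)}$, which holds because $t\mapsto\ceiling{t}$ is nondecreasing. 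Chaining these two facts gives $\min(\max(\ceiling a,0),\max(\ceiling b,0))=\max(\ceiling{\min(a,b)},0)$.

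Finally, since $T_i>0$ the minimum passes through the common denominator: $\min(a,b)=\frac{\min(\Pi_k-\Pi_i,\,D_k-C_i)+\Rtilde_i+r_{k,\ell}-c}{T_i}=\frac{G_k^i+\Rtilde_i+r_{k,\ell}-c}{T_i}$ by the definition $G_k^i=\min(D_k-C_i,\Pi_k-\Pi_i)$, and multiplying back by $C_i$ yields Equation~\eqref{eq:total_bound_interference}. There is essentially no hard step here; the only points to be careful about are the case distinction in the first arithmetic identity and the observation that the strictly positive denominator $T_i$ is what legitimizes moving the minimum inside the fraction — a negative $T_i$ would break the monotonicity argument, but $T_i>0$ is guaranteed by the task model.
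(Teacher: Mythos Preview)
Your proposal is correct and follows exactly the approach the paper intends: the paper's proof is a single line stating that the bound is ``a combination of the results from Lemma~\ref{lem:ineqk_1} and Lemma~\ref{lem:ineqk_2},'' and you have simply spelled out the routine arithmetic (pulling $C_i$ through the outer $\max$, the identity $\min(\max(x,0),\max(y,0))=\max(\min(x,y),0)$, monotonicity of $\lceil\cdot\rceil$, and $T_i>0$) that makes this combination precise.
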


	\begin{proof}
		This is a combination of the results from Lemma~\ref{lem:ineqk_1} and Lemma~\ref{lem:ineqk_2}.
	\end{proof}

	\subsection{Fixed Analysis Window}
\label{sec:test_fixed}	
	\begin{algorithm}[t]
		\small
		\caption{Schedulability test with fixed analysis window.}
		\label{alg:sched_test_fixed}
		\begin{algorithmic}[1]
			\item[] \textbf{Input:} $\Tbb = \{\tau_1,\dots,\tau_n\}$, $(\Pi_1,\dots,\Pi_n)$, $\eta$, $depth$
			\item[] \textbf{Output:} True: schedulable, False: no decision
			\item[]
			
			\State Order $\tau_1 ,\dots,\tau_n$, s.th. $D_1\geq \dots \geq D_n$.
			\State Set $\Rtilde_i := D_i$ for all $i$.
			
			\For{$i=1, 2, \dots, depth$}
			\State $solved := True$
			\For{$k = 1, 2, \dots,n$}
			\State $cand := [~]$; $step := \eta \cdot {D_k}$
			\Comment{Preparation.}
			\For{$b = 0, step, 2\cdot step, \cdots < D_k$}
			\Comment{Compute.}
			\State $cand.append(\Rtilde_k(b))$ using Equation~\eqref{eq:suff_sched_test_fixed}.
			\EndFor
			\State $\Rtilde_k := \min\limits (cand)$
			\Comment{Compare candidates.}
			\If{$\Rtilde_k>D_k$} 
			\Comment{Check condition.}
			\State $solved := False$; $\Rtilde_k := D_k$; \textbf{break}
			\EndIf
			\EndFor
			\EndFor
			\State \Return $solved$
		\end{algorithmic}
	\end{algorithm}
	
	In this section, we fix the analysis window, i.e., the possible range of $c$ from the previous sections, to the interval $[r_{k,\ell}, d_{k,\ell})$.
	First, we utilize the upper bounds on $B^i_{k,\ell}(c,d_{k,\ell})$ and $B_{k,j}(c,d_{k,\ell})$ provided in the previous section, to obtain the following schedulability test for this scenario with fixed analysis interval.
	
	\begin{thm}[Sufficient Schedulability Test]\label{thm:suff_sched_test_fixed}
		Let $\Tbb = \{\tau_1,\dots.\tau_n\}$ be an arbitrary-deadline task set with relative priority points $(\Pi_1, \dots, \Pi_n)$.
		If for all $k=1, \dots, n$ there exists some $b_k \in [0,D_k)$ such that
		\begin{equation}\label{eq:suff_sched_test_fixed}
			\Rtilde_k(b_k)
			\leq D_k,
		\end{equation}
		where $\Rtilde_k(b_k):=\ceiling{\frac{D_k-b_k}{T_k}}(C_k + S_k) + b_k + \sum\limits_{i\neq k} \max\left(\ceiling{\frac{G_k^i+ \Rtilde_i-b_k}{T_i}}, 0\right) C_i$
		and 
		$G^i_k = \min(D_k-C_i, \Pi_k-\Pi_i)$,
		then the task set is schedulable by EL scheduling with the given relative priority points and the worst-case response time of $\tau_k$ is upper bounded by $\Rtilde_k := \Rtilde_k(b_k)$.
	\end{thm}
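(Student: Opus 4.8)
The plan is to run a minimal-counterexample argument over job priorities, using Corollary~\ref{cor:backbone} as the skeleton and feeding it the interference bounds B1 (Lemma~\ref{lem:upper_bound_sum_Bkj}) and B2 (Lemma~\ref{lem:total_bound_interference}) evaluated at the fixed left endpoint $c := r_{k,\ell} + b_k$.

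First I would fix a legal job sequence of $\Tbb$ and suppose for contradiction that some job misses its deadline. I would then select a deadline-missing job $\tau_{k,\ell}$ all of whose higher-priority jobs meet their deadlines. Such a job exists: the priority points have the form $\pi_{i,j}=r_{i,j}+\Pi_i$, there are only $n$ tasks, and consecutive releases of a task are at least $T_i>0$ apart, so only finitely many jobs have priority point below any given bound; hence the set of priority points of deadline-missing jobs attains its infimum, and from the finitely many jobs attaining it one extracts a job $\tau_{k,\ell}$ of the desired kind. For this $\tau_{k,\ell}$, every job of a task $\tau_i\neq\tau_k$ that has higher priority than $\tau_{k,\ell}$ has response time at most $D_i$, so taking $\Rtilde_i:=D_i$ makes the hypotheses of Lemma~\ref{lem:ineqk_1}, Lemma~\ref{lem:ineqk_2} and hence Lemma~\ref{lem:total_bound_interference} valid in this situation. (When the test is iterated -- the outer loop of Algorithm~\ref{alg:sched_test_fixed} -- one carries along the already-certified bounds $\Rtilde_i$ instead of $D_i$; the ordering $D_1\ge\cdots\ge D_n$ is what keeps that bootstrapping acyclic, but the argument below is unchanged.)

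Next I would invoke Corollary~\ref{cor:backbone} with $c:=r_{k,\ell}+b_k$, where $b_k\in[0,D_k)$ is supplied by the hypothesis. Then $c\ge r_{k,\ell}$, so case~(1) of the corollary applies, and $c<d_{k,\ell}$ because $b_k<D_k$. It remains to verify the corollary's premise $\Rtilde_{k,\ell}\le D_k$ for the quantity $\Rtilde_{k,\ell}=(C_k+S_k)+\sum_{i\neq k}B^i_{k,\ell}(c,d_{k,\ell})+\sum_{j<\ell}B_{k,j}(c,d_{k,\ell})+(c-r_{k,\ell})$ built from the true processor-state times. Using $c-r_{k,\ell}=b_k$ and $d_{k,\ell}-c=D_k-b_k$: Lemma~\ref{lem:upper_bound_sum_Bkj} bounds $\sum_{j<\ell}B_{k,j}(c,d_{k,\ell})$ by $\big(\ceiling{\tfrac{D_k-b_k}{T_k}}-1\big)(C_k+S_k)$, which together with the leading $(C_k+S_k)$ gives $\ceiling{\tfrac{D_k-b_k}{T_k}}(C_k+S_k)$; and Lemma~\ref{lem:total_bound_interference} bounds each $B^i_{k,\ell}(c,d_{k,\ell})$ by $\max\!\big(\ceiling{\tfrac{G^i_k+\Rtilde_i-b_k}{T_i}},0\big)C_i$ once $r_{k,\ell}-c=-b_k$ is substituted. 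Summing these contributions gives $\Rtilde_{k,\ell}\le\Rtilde_k(b_k)\le D_k$ by hypothesis, so Corollary~\ref{cor:backbone} certifies that $\Rtilde_{k,\ell}$ -- and therefore also the larger value $\Rtilde_k(b_k)$ -- is an upper bound on $R_{k,\ell}$. Hence $R_{k,\ell}\le D_k$, contradicting the choice of $\tau_{k,\ell}$; thus $\Tbb$ is schedulable. Running the same estimate for an arbitrary job of each task $\tau_k$ (now licit, since all jobs meet their deadlines) yields $R_k=\sup_\ell R_{k,\ell}\le\Rtilde_k(b_k)=\Rtilde_k$.

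I expect the one genuinely delicate point to be the legitimacy of the inductive hypothesis -- that a deadline-missing job whose higher-priority jobs all meet their deadlines can be picked -- since without it Lemma~\ref{lem:total_bound_interference} cannot be applied with $\Rtilde_i=D_i$; the separation $T_i>0$ of releases is exactly what rescues this. Everything after that selection is the routine substitution $c=r_{k,\ell}+b_k$ into results already established in the excerpt, so no further obstacles are anticipated.
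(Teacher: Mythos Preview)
Your overall structure --- Corollary~\ref{cor:backbone} with $c=r_{k,\ell}+b_k$, bounded via Lemmas~\ref{lem:upper_bound_sum_Bkj} and~\ref{lem:total_bound_interference} --- is exactly the paper's route. But there is a genuine gap in the invariant you induct on.

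Your minimal counterexample is a job that \emph{misses its deadline} while all higher-priority jobs \emph{meet theirs}. That only entitles you to plug $\Rtilde_i=D_i$ into Lemma~\ref{lem:total_bound_interference}. The quantity you then obtain as an upper bound for $\Rtilde_{k,\ell}$ is
\[
\ceiling{\tfrac{D_k-b_k}{T_k}}(C_k+S_k)+b_k+\sum_{i\neq k}\max\!\left(\ceiling{\tfrac{G^i_k+D_i-b_k}{T_i}},0\right)C_i,
\]
which is \emph{at least} the theorem's $\Rtilde_k(b_k)$ (since the latter uses the possibly smaller values $\Rtilde_i=\Rtilde_i(b_i)\le D_i$), not at most. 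So the step ``$\Rtilde_{k,\ell}\le\Rtilde_k(b_k)\le D_k$ by hypothesis'' does not follow: the hypothesis bounds the expression with the self-referential $\Rtilde_i$, not the one with $D_i$. Your parenthetical about Algorithm~\ref{alg:sched_test_fixed}'s outer loop does not rescue this --- the theorem is stated for the coupled system $\Rtilde_k=\Rtilde_k(b_k)$, and the deadline ordering there is a heuristic, not a device that makes the dependence acyclic.

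The fix is small but essential: strengthen the invariant. Instead of picking the first deadline-missing job, pick the first job (in priority order) whose response time exceeds its task's bound $\Rtilde_k$. Then every higher-priority job of $\tau_i$ has response time at most $\Rtilde_i$, Lemma~\ref{lem:total_bound_interference} applies with those values, and your sum is bounded by the theorem's $\Rtilde_k(b_k)\le D_k\,$; since $\Rtilde_k\le D_k$, this also rules out deadline misses. This is precisely the paper's induction over the priority-ordered sequence $Seq$, just phrased contrapositively.
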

	
	\begin{proof}
		Assume we have found $b_k,~k=1,\dots,n$ such that Equation~\eqref{eq:suff_sched_test_fixed} holds.
		We consider some schedule obtained by this task set and denote by $Seq$ the sequence of all jobs in the schedule ordered by their priority.
		Via induction, we prove that the first $\xi$ jobs in $Seq$ have the required response time upper bound, for all $\xi \in \Nbb_0$.
		Consequently, $\Rtilde_k$ is an upper bound on the worst-case response time of $\tau_k$ for all $k$ and the task set is schedulable.
		
		\textbf{Initial case:} $\xi = 0$.
		In this case, the set of the first $\xi$ jobs in $Seq$ is $\{\}$ the empty set.
		Trivially, all of them have the required response time upper bound.
		
		\textbf{Induction step:} $\xi \to \xi+1$.
		By assumption, the first $\xi$ jobs in $Seq$ have the required response time upper bound.
		We denote the $(\xi+1)$-th job in $Seq$ by $\tau_{k,\ell}$.
		We aim to use Corollary~\ref{cor:backbone} to prove that the response time of $\tau_{k,\ell}$ is upper bounded by $\Rtilde_k$.
		By definition, we have $\Rtilde_{k,\ell} = (C_k+ S_k) + \sum_{i \neq k} B^i_{k,\ell}(c,d_{k,\ell}) + \sum_{j < \ell} B_{k,j}(c,d_{k,\ell}) + c - r_{k,\ell}$.
		Since all higher priority jobs have the required response time upper bound, we can use the estimation from Section~\ref{sec:estimation_B}.
		Using Lemma~\ref{lem:upper_bound_sum_Bkj} and Lemma~\ref{lem:total_bound_interference}, we obtain $\Rtilde_{k,\ell} \leq \Rtilde_k$ when choosing $c:= b_k+r_{k,\ell}$.
		Due to Equation~\eqref{eq:suff_sched_test_fixed}, even
		$\Rtilde_{k,\ell} \leq \Rtilde_k \leq D_k$.
		We use Corollary~\ref{cor:backbone} to conclude that $\Rtilde_{k,\ell}$ is an upper bound on the response time of $\tau_{k,\ell}$ and therefore $\Rtilde_k$ is an upper bound on the response time of $\tau_{k,\ell}$ as well.
	\end{proof}
    Although the test in Theorem~\ref{thm:suff_sched_test_fixed} looks like a classical mechanism extended from time demand analysis (TDA)~\cite{joseph86responsetimes,lehoczky89}, implementing an efficient schedulability test based on it requires some efforts since the values of $\Rtilde_k$ for every task $\tau_k$ are dependent on each other. The critical part to apply this schedulability is two folded:
	\begin{enumerate}
		\item Find good values for $b_k$ with low complexity. Without an efficient mechanism, there are $D_k$ options for $b_k$, provided that all input parameters are integers, and infinitely many options in general.
		\item Compute the dependent values of $\Rtilde_k$ for every task $\tau_k$ correctly and efficiently.
	\end{enumerate}
    For $b_k$, we discretize the search space into $\frac{1}{\eta}$ values with a step size $\eta \cdot D_k$, for a user-specified parameter $\eta$. For computing $\Rtilde_k$, we go through the task set several times to compute upper bounds for the values of $\Rtilde_k$.
	Improvement of the search algorithm will be part of future work and is out of scope for this paper.

	The search algorithm is depicted in pseudocode in Algorithm~\ref{alg:sched_test_fixed}.
	It takes as input the task set $\Tbb$, the relative priority points $(\Pi_1, \dots, \Pi_n)$, a step size parameter $\eta \in (0,1]$ and $depth$ to indicate the number improving runs of the search algorithm.
	It returns \emph{True} if the task set is schedulable by EL scheduling with the given relative priority points.
	We start by setting $\Rtilde_k = D_k$ for all $k=1, \dots, n$, and go $depth$-times through the task set ordered by the relative deadline, as we obtained the best results with this ordering.
	With  a step size of $step = \eta \cdot D_k$, i.e., a certain share of $D_k$ like $1$ percent, we compute $\Rtilde_k(b_k)$ for $b_k=0,step,2\cdot step, 3\cdot step, \dots$ until $b \geq D_k$ is reached.
	We then take the minimal value of all these candidates and define it as the new $\Rtilde_k$.
    The time complexity of Algorithm~\ref{alg:sched_test_fixed} is $\mathcal{O}\left(\frac{depth\cdot n^2}{\eta}\right)$.
	
	Please note that the computed values of $\Rtilde_k$ are in fact only upper bounds of $\Rtilde_k$ from Theorem~\ref{thm:suff_sched_test_fixed}.
	A reduction of $\Rtilde_i,~i\neq k$ in subsequent iterations reduces the actual value of $\Rtilde_k$ as well, since $\Rtilde_k$ is monotonically increasing with respect to $\Rtilde_i,$ for all $i\neq k$.

\subsection{Variable Analysis Window}
	\label{sec:test_arb}

	In the following, we show a different approach based on \emph{active intervals}.
	More specifically, if all jobs finish until the next job release is reached, i.e., $R_k \leq T_k$, then no previous jobs contribute interference to the job under analysis and they can be safely removed from the computation of the worst-case response-time upper bound.
	However, if $R_k > T_k$ then interference from previous jobs has to be considered in the following way.
	We utilize that a job $\tau_{k,\ell-a}$ can only interfere with $\tau_{k,\ell}$, if $\tau_k$ is active during $[r_{k,\ell-a}, r_{k,\ell})$.
	For the schedulability test with variable analysis window, we gradually increase the length of the active interval, i.e., $a = 0,1,2,\dots$ and analyze the window $[c,d_{k,\ell})$ with $c \in [r_{k,\ell-a}, d_{k,\ell})$.
	
	With this approach, the pessimism of the interference estimation from higher-priority jobs of the same task is reduced in some cases.
	Please note that this approach only differs from the case with fixed analysis window when considering arbitrary deadline tasks.
	For constrained deadline task sets, the variable analysis window approach coincides with the fixed analysis window approach, as the algorithm stops at $a=0$ without enlarging the analysis window.
	We start be formally defining active intervals.
	
	\begin{defn}\label{def:active_int}
		Let $a\in \Nbb_0$.
		A job $\tau_{k,\ell}$ is the $(a+1)$-th job in an \emph{active interval} of $\tau_k$, if the following two conditions hold.
		\begin{itemize}
			\item $\tau_k$ is active during $[r_{k,\ell-a},f_{k,\ell})$.
			\item At time $r_{k,\ell-a}$ there is no active job which is released \emph{before} $r_{k,\ell-a}$.
		\end{itemize}
	\end{defn}

	If $\tau_{k,\ell}$ is the $(a+1)$-th job in an active interval of $\tau_k$, then only $\tau_{k,\ell-a}, \dots, \tau_{k,\ell}$ are current jobs of $\tau_k$ during $[r_{k,\ell-a},f_{k,\ell})$.
	More specifically, in this case the value of $B_{k,j}(c,d_{k,\ell})$ is $0$ if $c \geq r_{k,\ell-a}$ and $j<\ell-a$. 
	We formalize this by the following lemma.
	
	\begin{lem}\label{lem:suff_sched_test_var_prep}
		Let $\tau_{k,\ell}$ be the $(a+1)$-th job in an active interval of $\tau_k$ and let all higher-priority jobs meet their deadline.
		Let $\Rtilde_i,~i\neq k$ be an upper bound on the response time of all higher-priority jobs of $\tau_i$.
		If there exists some $c \in [r_{k,\ell-a},d_{k,\ell})$ such that 
		\begin{equation}\label{eq:lem_var_window}
			\begin{aligned}
				&\min \left( a+1, \ceiling{\frac{d_{k,\ell}-c}{T_k}} \right) (C_k+S_k)
				\\&+ \sum_{i \neq k} \max\left( \ceiling{\frac{G^i_k+\Rtilde_i+r_{k,\ell}-c}{T_i}}, 0 \right) C_i
				+ c - r_{k,\ell}
			\end{aligned}
		\end{equation}
		is at most $D_k$,
		with $G^i_k := \min(D_k-C_i, \Pi_k-\Pi_i)$, then \eqref{eq:lem_var_window} is an upper bound on the response time of $\tau_{k,\ell}$.
	\end{lem}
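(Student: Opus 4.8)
The plan is to feed the given $c$ and the deadline $d_{k,\ell}$ into the Analysis Backbone (Corollary~\ref{cor:backbone}) and to overapproximate the three summands of $\Rtilde_{k,\ell}$ in Equation~\eqref{eq:cor_sched_test} so that their sum is exactly the quantity in Equation~\eqref{eq:lem_var_window}. Since that quantity is assumed to be at most $D_k$, the corollary then yields that $\Rtilde_{k,\ell}$ — and hence the (at least as large) quantity \eqref{eq:lem_var_window} — is an upper bound on the response time of $\tau_{k,\ell}$.

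First I would check that Corollary~\ref{cor:backbone} is applicable with this $c$: we have $c < d_{k,\ell}$, and since $c \in [r_{k,\ell-a}, d_{k,\ell})$ either $c \ge r_{k,\ell}$, which is case~1 of the corollary, or $c \in [r_{k,\ell-a}, r_{k,\ell})$, in which case Definition~\ref{def:active_int} gives that $\tau_k$ is active throughout $[r_{k,\ell-a}, f_{k,\ell})$ and hence on the sub-interval $[c, r_{k,\ell})$, which is case~2. For the cross-task interference, the two hypotheses of the lemma (all higher-priority jobs meet their deadlines, and $\Rtilde_i$ bounds the response times of the higher-priority jobs of $\tau_i$) are precisely the standing assumptions of Section~\ref{sec:estimation_B}, so Bound~B2 (Lemma~\ref{lem:total_bound_interference}) directly gives $B^i_{k,\ell}(c,d_{k,\ell}) \le \max\!\left(\ceiling{\frac{G^i_k + \Rtilde_i + r_{k,\ell} - c}{T_i}}, 0\right) C_i$ with $G^i_k = \min(D_k - C_i, \Pi_k - \Pi_i)$; summing over $i \ne k$ reproduces the middle term of \eqref{eq:lem_var_window}.

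The step carrying the novelty of this lemma — and the one I expect to need the most care — is the bound on $(C_k+S_k) + \sum_{j<\ell} B_{k,j}(c,d_{k,\ell})$. Here I would use Definition~\ref{def:active_int}: at time $r_{k,\ell-a}$ no job of $\tau_k$ released before $r_{k,\ell-a}$ is still active, so every job $\tau_{k,j}$ with $j < \ell-a$ is already finished by $r_{k,\ell-a} \le c$; the processor is therefore never working on or suspended by such a job during $[c,d_{k,\ell})$, i.e., $B_{k,j}(c,d_{k,\ell}) = 0$ for $j < \ell-a$. Hence the sum has at most $a$ nonzero terms, each at most $C_k+S_k$ by the per-job bound used in Lemma~\ref{lem:upper_bound_sum_Bkj}, so $\sum_{j<\ell} B_{k,j}(c,d_{k,\ell}) \le a(C_k+S_k)$; combining this with the bound $\left(\ceiling{\frac{d_{k,\ell}-c}{T_k}}-1\right)(C_k+S_k)$ from Lemma~\ref{lem:upper_bound_sum_Bkj} gives $(C_k+S_k) + \sum_{j<\ell} B_{k,j}(c,d_{k,\ell}) \le \min\!\left(a+1, \ceiling{\frac{d_{k,\ell}-c}{T_k}}\right)(C_k+S_k)$, which is the first term of \eqref{eq:lem_var_window}.

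Adding the common term $c - r_{k,\ell}$ verbatim, we obtain that $\Rtilde_{k,\ell}$ is at most the quantity \eqref{eq:lem_var_window}, which in turn is at most $D_k$ by assumption; so all hypotheses of Corollary~\ref{cor:backbone} are met and $\Rtilde_{k,\ell}$, hence also \eqref{eq:lem_var_window}, bounds the response time of $\tau_{k,\ell}$. The only genuinely delicate point in the whole argument is the justification of $B_{k,j}(c,d_{k,\ell}) = 0$ for $j < \ell-a$, which relies on both the active-interval definition and the FIFO execution order of the jobs of $\tau_k$ from the end of Section~\ref{sec:system_model}; everything else is substitution into already-established results.
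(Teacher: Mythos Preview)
Your proposal is correct and follows essentially the same route as the paper's proof: verify the hypotheses of Corollary~\ref{cor:backbone} for the given $c$, bound the cross-task interference via Lemma~\ref{lem:total_bound_interference}, use the active-interval definition to kill the terms $B_{k,j}(c,d_{k,\ell})$ for $j<\ell-a$ and bound the remaining $a$ terms by $C_k+S_k$ each, then combine with Lemma~\ref{lem:upper_bound_sum_Bkj} to obtain the $\min$ and invoke the backbone. Your explicit case split for $c\ge r_{k,\ell}$ versus $c\in[r_{k,\ell-a},r_{k,\ell})$ is slightly more detailed than the paper's one-line justification, but the argument is the same.
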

	\begin{proof}
		For the proof, we apply Corollary~\ref{cor:backbone}.
		Since $\tau_{k,\ell}$ is the $(a+1)$-th job in an active interval, $\tau_k$ is active during $[r_{k,\ell-a}, r_{k,\ell})$.
		Hence, the restriction on $c$ in Corollary~\ref{cor:backbone} are fulfilled by default, when $c$ is chosen from $[r_{k,\ell-a}, d_{k,\ell})$.
		Moreover, since $\tau_{k,\ell}$ is the $(a+1)$-th job in an active interval, the jobs $\tau_{k,1}, \dots, \tau_{k,\ell-a-1}$ are finished by time $r_{k,\ell-a}$.
		We obtain $\sum_{j < \ell-a} B_{k,j} (c,d_{k,\ell}) \leq \sum_{j < \ell-a} B_{k,j} (r_{k,\ell-a},d_{k,\ell}) = 0$.
		Hence, 
		$
			\sum_{j < \ell} B_{k,j} (c,d_{k,\ell}) 
			= \sum_{j = \ell-a}^{\ell-1} B_{k,j} (c,d_{k,\ell})
			\leq \sum_{j = \ell-a}^{\ell-1} (C_k+S_k)
			= a\cdot (C_k+S_k).
		$
		We combine this with the results from Lemma~\ref{lem:upper_bound_sum_Bkj} and Lemma~\ref{lem:total_bound_interference}, and obtain that $\Rtilde_{k,\ell}$ from Equation~\eqref{eq:cor_sched_test} is less than or equal to the value in Equation~\eqref{eq:lem_var_window}.
		If \eqref{eq:lem_var_window} is at most $D_k$, then $\Rtilde_{k,\ell}\leq D_k$.
		Corollary~\ref{cor:backbone} states that $\Rtilde_{k,\ell}$ is an upper bound on the response time of $r_{k,\ell}$ and therefore, also \eqref{eq:lem_var_window} is an upper bound on the response time.
	\end{proof}
	
	In the following theorem, we replace $r_{k,\ell}-c$ by $aT_k-x$.

	\begin{thm}[Sufficient Schedulability Test]\label{thm:suff_sched_test_var}
		Let $\Tbb=\{\tau_1, \dots, \tau_n\}$ be an arbitrary-deadline task set with relative priority points $(\Pi_1, \dots, \Pi_n)$.
		We define the function $\Rtilde^a_k : \Rbb_{\geq 0} \to \Rbb_{\geq 0}$ by the assignment 
		\begin{equation}\label{eq:thm:suff_sched_test_var}
		\begin{aligned}
		x \mapsto &\min\left(a+1, \ceiling{\frac{D_k-x+aT_k}{T_k}}\right)(C_k+S_k)
		\\&+ \sum_{i \neq k} \max \left( \ceiling{\frac{G^k_i + \Rtilde_i-x+aT_k}{T_i}},0 \right) C_i
		+ x-aT_k.
		\end{aligned}
		\end{equation}
		If for all $k=1,\dots,n$ there exists $\tilde a_k \in \Nbb_0$, such that for all $a = 0,\dots,\tilde a_k$ there exists $b^a_k \in [0,a T_k + D_k)$, such that 
		\begin{equation}
			\Rtilde^a_k(b^a_k) \leq D_k,
			\text{ and furthermore }
			\Rtilde^{\tilde a_k}_k(b^{\tilde a_k}_k) \leq T_k,
		\end{equation}
		then the task set is schedulable by EL scheduling with the given relative priority points and $\Rtilde_k := \max_{a = 0, \dots, \tilde a} \Rtilde^a_k(b^a_k)$ is an upper bound on the WCRT of $\tau_k$ for all $k$. 
	\end{thm}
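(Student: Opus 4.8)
The plan is to run the same prefix induction over the priority‑ordered job sequence $Seq$ that is used in the proof of Theorem~\ref{thm:suff_sched_test_fixed}, but to choose the analysis window $[c,d_{k,\ell})$ according to the active interval that contains the job under analysis, and to exploit the extra condition $\Rtilde^{\tilde a_k}_k(b^{\tilde a_k}_k)\leq T_k$ to guarantee that no active interval of $\tau_k$ contains more than $\tilde a_k+1$ jobs. Fix an arbitrary schedule of $\Tbb$ under the EL scheduler and order all released jobs by priority into $Seq$; as in Theorem~\ref{thm:suff_sched_test_fixed} it suffices to show, by induction on $\xi\in\Nbb_0$, that the first $\xi$ jobs of $Seq$ have the claimed response‑time bound. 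When we reach the $(\xi+1)$-th job $\tau_{k,\ell}$, the induction hypothesis says every earlier (hence higher‑priority) job $\tau_{i,j}$ satisfies $R_{i,j}\leq \Rtilde_i\leq D_i$; in particular all higher‑priority jobs meet their deadlines and the quantities $\Rtilde_i$ appearing in~\eqref{eq:thm:suff_sched_test_var} are legitimate response‑time upper bounds for the higher‑priority jobs of $\tau_i$ (as in Section~\ref{sec:estimation_B}), so the hypotheses of Lemma~\ref{lem:suff_sched_test_var_prep} are available.

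Next I would unwind the substitution announced before the theorem. By Definition~\ref{def:active_int}, $\tau_{k,\ell}$ is the $(a+1)$-th job of an active interval of $\tau_k$ for some $a\in\Nbb_0$; I take $a$ minimal, so the interval starts at $r_{k,\ell-a}$, and since the releases of $\tau_k$ are at least $T_k$ apart we have $r_{k,\ell-a}\leq r_{k,\ell}-aT_k$. For a parameter $x\in[0,aT_k+D_k)$ put $c:=r_{k,\ell}-aT_k+x$; then $c\in[r_{k,\ell-a},d_{k,\ell})$, $r_{k,\ell}-c=aT_k-x$ and $d_{k,\ell}-c=D_k+aT_k-x$, so plugging this $c$ into Equation~\eqref{eq:lem_var_window} turns the left‑hand side into exactly $\Rtilde^a_k(x)$. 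Consequently, whenever $\Rtilde^a_k(x)\leq D_k$, Lemma~\ref{lem:suff_sched_test_var_prep} yields $R_{k,\ell}\leq \Rtilde^a_k(x)$.

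The crux is the claim $a\leq\tilde a_k$, which I would prove by contradiction. Suppose $a\geq\tilde a_k+1$ and set $m:=\ell-a+\tilde a_k<\ell$, so $\tau_{k,m}$ is the $(\tilde a_k+1)$-th job of the same active interval (conditions (i) and (ii) of Definition~\ref{def:active_int} transfer from $\tau_{k,\ell}$ because same‑task jobs finish FIFO, hence $f_{k,m}\leq f_{k,\ell}$). Since $\tau_{k,m}$ precedes $\tau_{k,\ell}$ in $Seq$, the previous paragraph applied with index $\tilde a_k$ and $x:=b^{\tilde a_k}_k$ gives $R_{k,m}\leq \Rtilde^{\tilde a_k}_k(b^{\tilde a_k}_k)\leq T_k$, hence $f_{k,m}\leq r_{k,m}+T_k\leq r_{k,m+1}$. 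All of $\tau_{k,1},\dots,\tau_{k,m}$ are therefore finished by $r_{k,m+1}$ (the ones released before $r_{k,\ell-a}$ by condition (ii), the rest by $f_{k,m}$ and FIFO). If $f_{k,m}<r_{k,m+1}$, there is an instant just before $r_{k,m+1}$, still inside $[r_{k,\ell-a},f_{k,\ell})$, at which $\tau_k$ has no active job — contradicting that $[r_{k,\ell-a},f_{k,\ell})$ is an active interval; if $f_{k,m}=r_{k,m+1}$, then $r_{k,m+1}$ itself satisfies condition (ii), so $\tau_{k,\ell}$ also belongs to the active interval starting at $r_{k,m+1}$ with a strictly smaller index, contradicting the minimality of $a$. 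Hence $a\leq\tilde a_k$; the hypothesis then supplies $b^a_k$ with $\Rtilde^a_k(b^a_k)\leq D_k$, and the substitution paragraph gives $R_{k,\ell}\leq \Rtilde^a_k(b^a_k)\leq \Rtilde_k\leq D_k$, closing the induction. Since every job of every $\tau_k$ is covered, $R_k\leq \Rtilde_k\leq D_k$ for all $k$, i.e.\ $\Tbb$ is schedulable.

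I expect the third paragraph — bounding the number of jobs per active interval by $\tilde a_k+1$ — to be the main obstacle: it is the only place where the second condition $\Rtilde^{\tilde a_k}_k(b^{\tilde a_k}_k)\leq T_k$ and the FIFO handling of multiple active jobs of the same task (from the end of Section~\ref{sec:system_model}) are genuinely used, and the boundary case $f_{k,m}=r_{k,m+1}$ has to be dispatched either by the minimal‑$a$ convention above or by a suitably precise reading of ``the'' active interval of a job. Everything else reduces to the additivity of the $B$-terms and the interference estimates of Section~\ref{sec:estimation_B}, both of which are already packaged inside Lemma~\ref{lem:suff_sched_test_var_prep}.
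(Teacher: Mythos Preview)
Your proposal is correct and follows essentially the same route as the paper: induction over the priority-ordered sequence $Seq$, choice of $c:=r_{k,\ell}-aT_k+b^a_k$ so that Equation~\eqref{eq:lem_var_window} becomes $\Rtilde^a_k(b^a_k)$, and a contradiction argument via the job $\tau_{k,m}=\tau_{k,\ell-(a-\tilde a_k)}$ to force $a\leq\tilde a_k$. The only structural difference is that the paper carries a second invariant in its induction hypothesis (``the $(\tilde a_k{+}1)$-th job of any active interval has response time $\leq T_k$'') and appeals to it directly, whereas you re-derive that bound for $\tau_{k,m}$ on the spot by invoking Lemma~\ref{lem:suff_sched_test_var_prep} once more; your version is slightly more explicit about the range check $c\in[r_{k,\ell-a},d_{k,\ell})$ and the boundary case $f_{k,m}=r_{k,m+1}$, which the paper leaves implicit.
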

	
	\begin{proof}
		The proof is similar to the one of Theorem~\ref{thm:suff_sched_test_fixed}.
		Let $Seq$ be the sequence of all jobs in the schedule ordered by their priority.
		By induction we show that the following response time upper bound holds for the first $\xi \in \Nbb_0$ jobs in $Seq$:
		\begin{enumerate}
			\item $\Rtilde_k$ for all jobs of $\tau_k$ for each task $\tau_k$.
			\item $T_k$ for all $\tilde a_k$-th jobs in an active interval of $\tau_k$ for each task $\tau_k$.
		\end{enumerate}
		
		\textbf{Initial case:} $\xi_0$. The initial case is again trivially fulfilled, since there has nothing to be checked when there are no jobs.
		
		\textbf{Induction step:} $\xi \to \xi+1$.
		The first $\xi$ jobs in $Seq$ have the required response time upper bounds 1) and 2) by induction.
		We denote by $\tau_{k,\ell}$ the $(\xi+1)$-th job of $Seq$.
		Let $a$ be the lowest value in $\Nbb_0$, such that $\tau_{k,\ell}$ is the $(a+1)$-th job in an active interval of $\tau_k$.
		We first show that $a\leq \tilde a_k$ by contraposition.
		If $a > \tilde a$, then we consider the job $\tau_{k,\ell-(a-\tilde a)}$.
		This is the $(\tilde a+1)$-th job in an active interval of $\tau_k$.
		Moreover, this job is one of the first $\xi$ jobs in $Seq$ and therefore has a response time of at most $T_k$ due to 2).
		Hence, $\tau_{k,\ell-(a-\tilde a)}$ is finished by time $r_{k,\ell-(a-\tilde a)+1}$.
		We conclude that $\tau_{k,\ell}$ is the $(a-\tilde a_k)$-th job in an active interval of $\tau_k$, which contradicts the minimality of $a$.
		We now choose $c:=b^a_k-a\cdot T_k + r_{k,\ell}$.
		Applying Lemma~\ref{lem:suff_sched_test_var_prep} with this $c$ shows that $\Rtilde^a_k(b^a_k)$ is a response time upper bound of $\tau_{k,\ell}$ as required.
	\end{proof}

We apply a similar search strategy as for the case with fixed analysis window. 
However, the values of $\Rtilde_k$ are computed through an additional loop over the values of $a$ until $\Rtilde^a_k \leq T_k$.
Algorithm~\ref{alg:sched_test_var} depicts an implementation of the schedulability test in pseudocode.
As the value of $\Rtilde^a_k$ can be between $T_k$ and $D_k$ for all iterations of $a$, the program may never return a result. 
To make the schedulability test deterministic, we introduce an additional parameter $max\_a$ which aborts the loop when even $a=max\_a$ gives no result.

\begin{algorithm}[t]
			\small
			\caption{Schedulability test with var.\@ analysis window.}
			\label{alg:sched_test_var}
			\begin{algorithmic}[1]
				\item[] \textbf{Input:} $\Tbb = \{\tau_1,\dots,\tau_n\}$, $(\Pi_1,\dots,\Pi_n)$, $\eta$, $max\_a$, $depth$
				\item[] \textbf{Output:} True: schedulable, False: no decision
				\item[]
				
				\State Order $\tau_1 ,\dots,\tau_n$, s.th. $D_1\geq \dots \geq D_n$.
				\State Set $\Rtilde_i := D_i$ for all $i$.
				
				\For{$i=1,2, \dots, depth$}
				\State $solved := True$
				\For{$k = 1, 2, \dots,n$}
				\For{$a=0,1, \dots, max\_a$} \Comment{Different $a$.}
				\State $cand := [~]$; $step := \eta \cdot D_k$
				\Comment{Preparation.}
				\For{$b=0,step, 2\cdot step, \dots <aT_k + D_k$}
				\State $\triangleright$ Compute candidate:
				\State $cand.append(\Rtilde^a_k(b))$ from Equation~\eqref{eq:thm:suff_sched_test_var}
				\EndFor
				\State $\Rtilde^a_k := \min\limits (cand)$
				\Comment{Compare candidates.}
				\If{$\Rtilde^a_k\leq T_k$}\Comment{Check~cond.~1.}
				\State $\triangleright$ WCRT upper bound:
				\State $\tilde{a} := a$; $\Rtilde_k := \min_{a=0,\dots,\tilde a} \Rtilde^a_k$; 
				\textbf{break} 
				\EndIf
				\If{$\Rtilde^a_k>D_k$ or $a=max\_a$} \Comment{Check cond.~2.}
				\State $solved:=False$; $\Rtilde_k := D_k$; \textbf{break}
				\EndIf
				\EndFor
				\EndFor
				\EndFor
				\State \Return $solved$
			\end{algorithmic}
		\end{algorithm}

\subsection{Dominance of Fixed and Variable Analysis Window}
\label{sec:dominance}

At first glance, the analysis derived in Section~\ref{sec:test_arb} with the variable analysis window seems to improve the analysis from Section~\ref{sec:test_fixed} with fixed analysis window in all cases:
When setting $x = b_k + a\cdot T_k$ in Theorem~\ref{thm:suff_sched_test_var}, then the result is lower bounded by $\Rtilde_k(b_k)$ from Theorem~\ref{thm:suff_sched_test_fixed}.
However, both methods do not dominate each other, as demonstrated in the discussion of Figure~\ref{fig:eval_arbdl_dm} in Section~\ref{sec:evaluation}, due to the following reasons.
First, the analysis with variable analysis window can only analyze schedules where the length of active intervals is bounded.
More specifically, if the response-time upper bound $\Rtilde^a_k$ is in the interval $(T_k, D_k)$ for all $a$, then the schedulability test with the variable analysis window never deems the task schedulable.
Second, by setting $max\_a$ this effect is even intensified:
The analysis with variable analysis window has to find $\Rtilde^a_k \leq T_k$ even for some $a \leq max\_a$.
Third, the discretization using $\eta$ in Algorithm~\ref{alg:sched_test_fixed} and~\ref{alg:sched_test_var} ensures the same number of points for each analysis interval.
As a result, not all points $b+aT_k$ with $b$ from Algorithm~\ref{alg:sched_test_fixed} are checked during Algorithm~\ref{alg:sched_test_var} as well.

\section{Realization of EL Scheduling}
\label{sec:remark+realization}

To implement EDF-Like (EL) scheduling algorithms, we can exploit the existing EDF scheduling mechanisms to integrate the proposed relative priority points.
	Here we demonstrate the integration on two well-known real-time operating systems (RTOSes), i.e., RTEMS \cite{rtems} and {$\text{LITMUS}^{\text{RT}}$} \cite{litmusrt}, which officially support EDF scheduling.
	In general, the workflow of an EDF mechanism can be abstracted as follows:
	\begin{enumerate}
		\item At each job release, its priority should be identified with its absolute deadline, and necessary priority mapping operations should be performed. 
		\item The job context is placed into the ready queue according to its priority, which is commonly realized by a dedicated data structure, e.g., a red-black tree in RTEMS and a binomial heap in {$\text{LITMUS}^{\text{RT}}$}.
		\item One highest priority job should be executed. In case there are deadline ties, a regulation should be performed. This step also takes place if a job finishes its execution.
	\end{enumerate}
	Since EL scheduling only affects the job priority, the realization can be achieved by performing the priority mapping with the \emph{priority point} instead of the absolute deadline. This implies that  any underlying data structure for EDF scheduling, which sorts the jobs according to their absolute deadlines, can be directly adapted.
	Suppose the relative priority point for each task is given, we demonstrate the integration of these relative priority points for RTEMS (version 5.1) and {$\text{LITMUS}^{\text{RT}}$}:
	
	\paragraph{RTEMS} 
	In \texttt{Thread\_Configuration} we add a field \texttt{rel\_pp} to hold the given relative priority point.
	Two new fields \texttt{rel\_pp} and \texttt{release\_time} are added in \texttt{Thread\_Control} structure to keep the relevant modification minimal.
	To avoid corrupting the dependency of original priority mapping, we reuse the original mapping macro \texttt{SCHEDULER\_PRIORITY\_MAP()} and replace the original input \emph{deadline} by the priority point.
	
	\paragraph{{$\text{LITMUS}^{\text{RT}}$}} We add a new field \texttt{priority\_point} in \texttt{rt\_job} structure and a new field \texttt{rel\_pp} in \texttt{rt\_task} structure. 
	The calculation of \texttt{priority\_point} is performed at \texttt{setup\_release()}. 
	A new macro \texttt{higher\_priority\_point} is used to compare \texttt{priority\_point} over jobs.
	To ensure that the light-weight event tracing toolkit~\cite{Brandenburg07feather-trace:a} still functions correctly, we did not touch any relevant functions and macros for the absolute deadline.
	
	The functionality of both realizations are validated successfully on real platforms. The corresponding patches will be publicly available once the paper is accepted.
	Since the realization on both RTOSes only requires slight modifications,
	we conjecture that every RTOS supporting EDF scheduling can also realize EL scheduling. 
	Please note that the discussion is to demonstrate the applicability, and we do not recommend to replace TFP or FIFO with EL scheduling because of unnecessary operation overhead.

\section{Evaluation}
\label{sec:evaluation}

	\begin{figure}[t]
		\centering
		\begin{subfigure}{.5\linewidth}
			\centering
			\includegraphics[width=.95\linewidth]{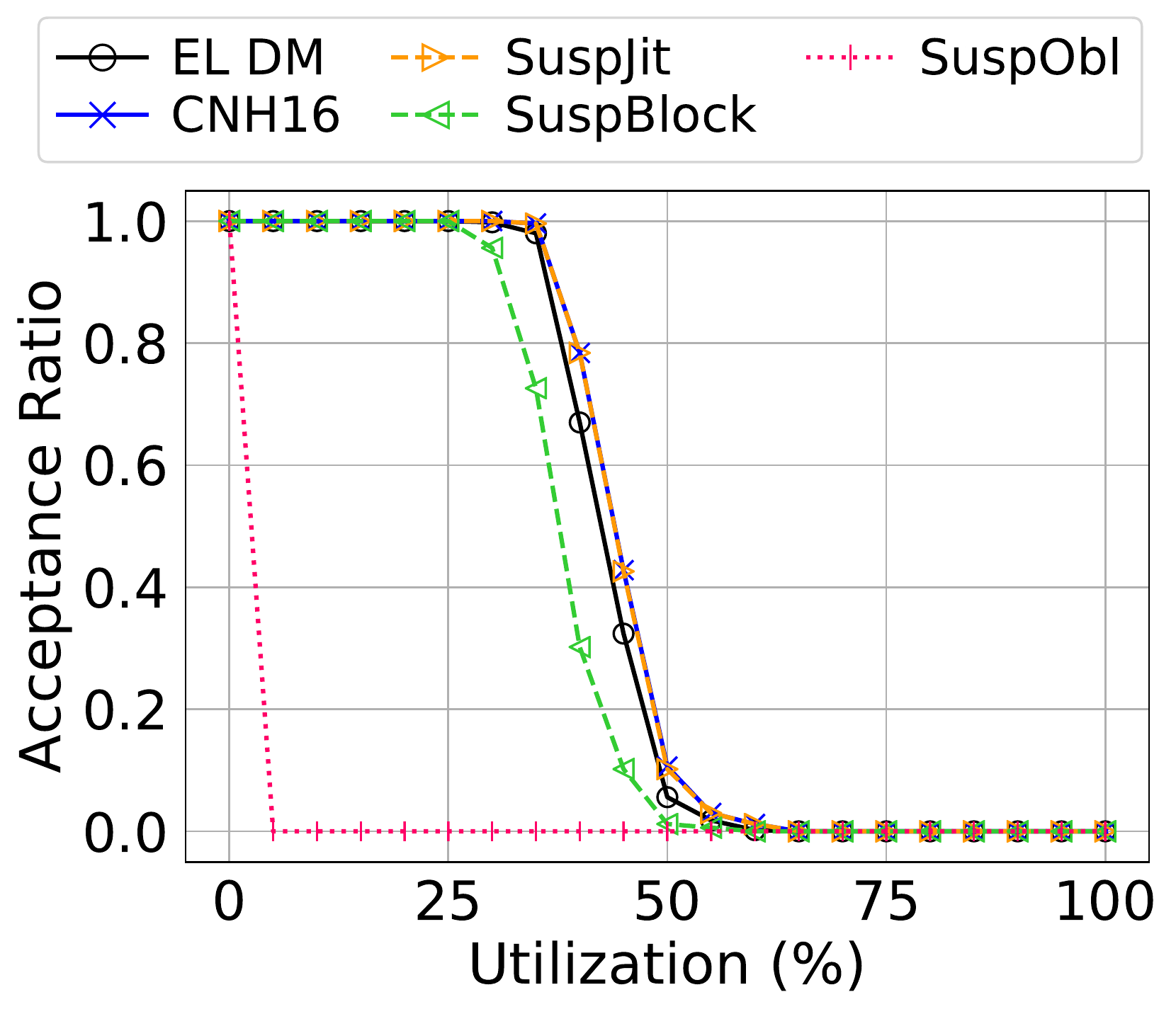}
			\caption{Deadline-Monotonic (DM).\label{fig:eval_dm}}
		\end{subfigure}%
		\begin{subfigure}{.5\linewidth}
			\centering
			\includegraphics[width=.95\linewidth]{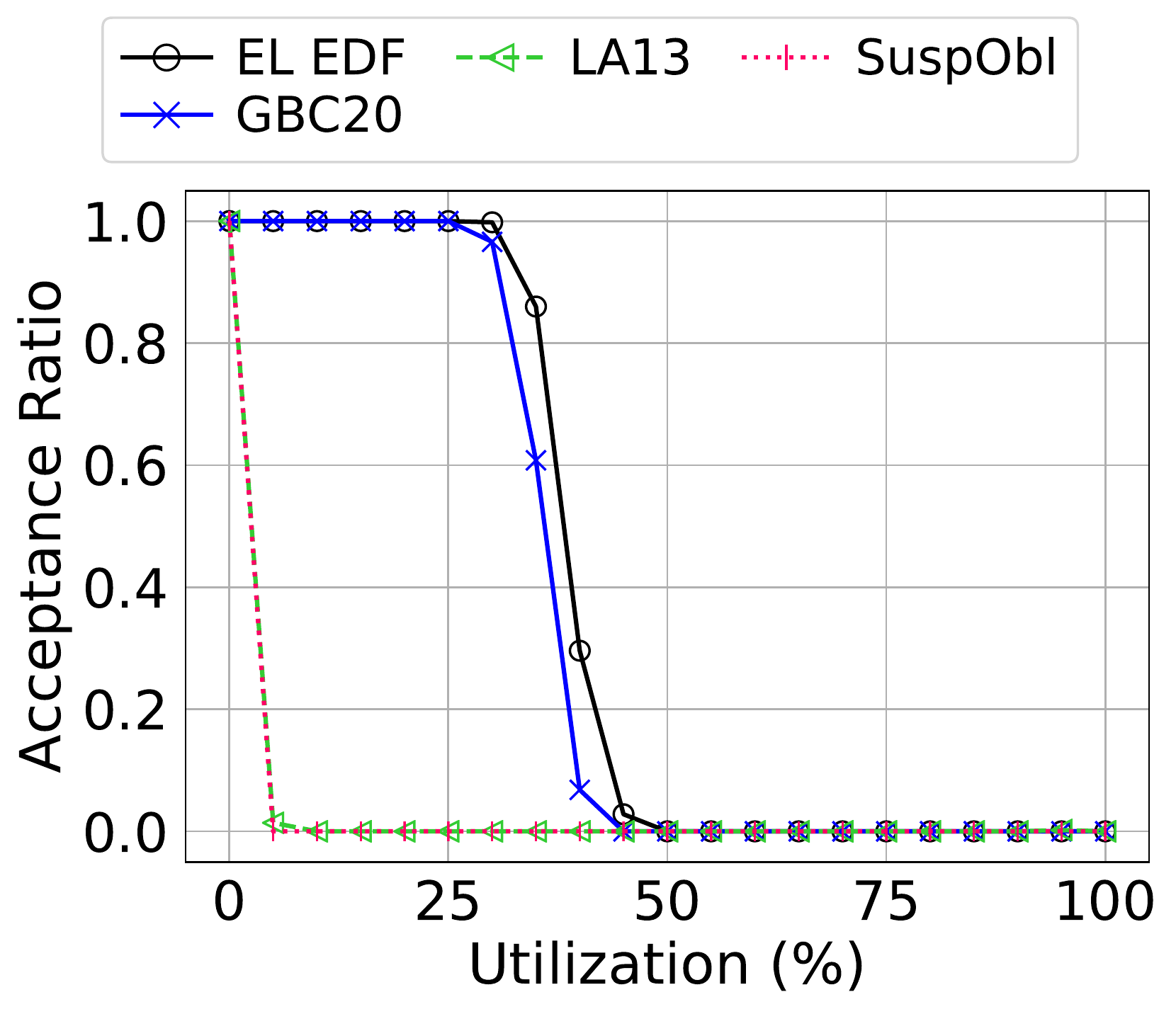}
			\caption{Earliest-Deadline-First (EDF).\label{fig:eval_edf}}
		\end{subfigure}
		\caption{Acceptance ratio of different schedulability tests. Our EDF-Like (EL) schedulability test (EL DM and EL EDF, black curve) performs similar to the state of the art. }
		\label{fig:dm_and_edf}
	\end{figure}
	
	\begin{figure}[t]
		\centering
		\begin{subfigure}{.5\linewidth}
			\centering
			\includegraphics[width=.95\linewidth]{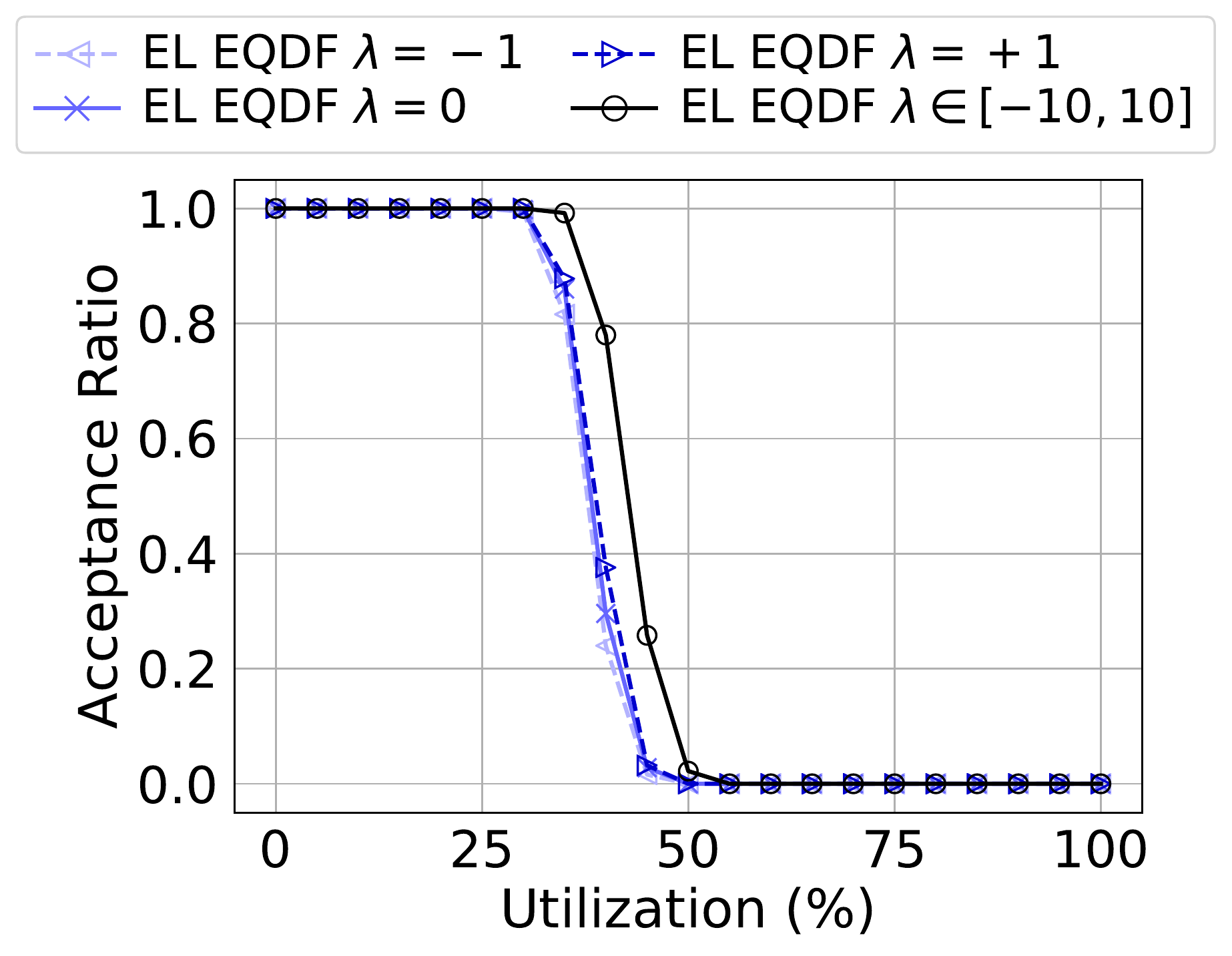}
			\caption{EQDF ($\Pi_i = D_i + \lambda C_i$).\label{fig:eval_eqdf}}
		\end{subfigure}%
		\begin{subfigure}{.5\linewidth}
			\centering
			\includegraphics[width=.95\linewidth]{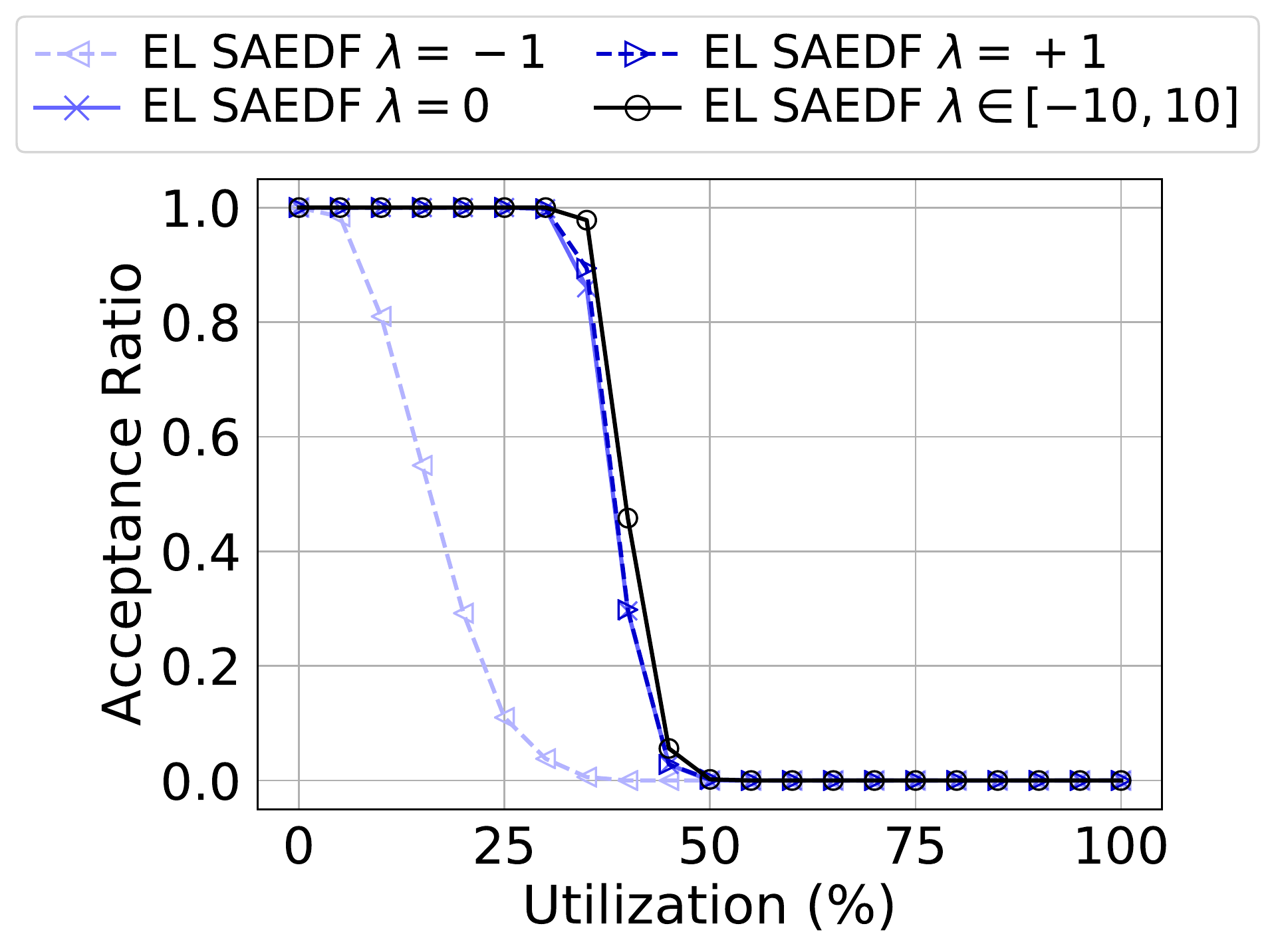}
			\caption{SAEDF ($\Pi_i = D_i + \lambda S_i$).\label{fig:eval_saedf}}
		\end{subfigure}
		\caption{Acceptance ratio of variants of EDF using our EDF-Like (EL) schedulability test. Choosing the best $\lambda \in [-10,10]$ for each task set (black line) improves standard EDF ($\lambda=0$).}
		\label{fig:eqdf_and_saedf}
	\end{figure}

	In this section, we evaluate the performance of our schedulability tests \textbf{(EL)} presented in Algorithm~\ref{alg:sched_test_fixed} for the fixed analysis window and in Algorithm~\ref{alg:sched_test_var} for the variable analysis window.
	More precisely we show that:
	\begin{enumerate}
		\item Our schedulability test performs similar to already existing schedulability tests for Deadline-Monotonic (DM) and Earliest-Deadline-First (EDF) scheduling.
		\item Our schedulability test can be used to compare different configurations of Earliest-Quasi-Deadline-First (EQDF) and suspension-aware EDF (SAEDF), presented in Section~\ref{sec:capabilities_and_limitations}.
		\item Our schedulability test exploits the optimism introduced when the deadline of tasks is extended over their minimum inter-arrival time.
	\end{enumerate}
	Please note that for 1) and 2) we do not distinguish between fixed and variable analysis window, since both schedulability tests coincide in the constrained deadline case, as explained in  Section~\ref{sec:test_arb}.
	To apply our schedulability test we use the configuration $\eta=0.01, depth=5$ and $max\_a=10$.
	In each figure, we present the \emph{acceptance ratio} which is the share of task sets that are deemed schedulable by the schedulability test under consideration.
	
	For the experiments, we synthesize $500$ task sets for each utilization from $0\%$ to $100\%$ in steps of $5\%$.
	For each task set we generate $50$ tasks.
	We first generate $50$ utilization values $U_i$ using the UUniFast~\cite{DBLP:journals/rts/BiniB05} method with the given total utilization goal, and then adopt the suggestion by Emberson~et~al.~\cite{emberson2010techniques} to pull the minimum inter-arrival time $T_i$ according to a log-uniform distribution from the interval $[1,100] [ms]$.
	The worst-case execution time is computed by $C_i = T_i \cdot U_i$ and the deadline is set to the minimum inter-arrival time $D_i = T_i$.
	For each task, we draw the maximum suspension time $S_i$ uniformly at random from $[0,0.5(T_i-C_i)]$.
	We assume that the tasks of each task set are ordered by their deadline.
	
	In Figure~\ref{fig:eval_dm} we apply EL with relative priority points $\Pi_i = \sum_{j=1}^i D_j$ to obtain a schedulability test for DM scheduling \textbf{(EL DM)}.
	We compare with the methods \emph{Suspension as Jitter} \textbf{(SuspJit)}~\cite[Page~163]{suspension-review-jj} and \emph{Suspension as Blocking} \textbf{(SuspBlock)}~\cite[Page~165]{suspension-review-jj}.
	Moreover, we compare with the Suspension-Oblivious Analysis \textbf{(SuspObl)}~\cite[Page~162]{suspension-review-jj} and the Unifying Analysis Framework from Chen, Nelissen and Huang \textbf{(CNH16)}~\cite{ChenECRTS2016-suspension} configured with three vectors according to Eq. (27), Lemma~15 and Lemma~16 of their paper.
	As depicted, our schedulability test performs similar to the state-of-the-art methods.
	
	In Figure~\ref{fig:eval_edf}, we compare our schedulability test \textbf{(EL EDF)} with state-of-the-art methods for EDF.
	We compare with the method by Liu and Anderson \textbf{(LA13)}~\cite{DBLP:conf/ecrts/LiuA13}.
	Moreover, we present the schedulability test by Günzel, von der Brüggen and Chen \textbf{(GBC20)}~\cite{guenzel2020sched_test_edf} and the Suspension-Oblivious Analysis \textbf{(SuspObl)}~\cite[Section~III.A]{guenzel2020sched_test_edf}.
	The method from Dong and Liu~\cite{DBLP:conf/rtss/DongL16} is not presented as it is dominated by \textbf{SuspObl}, as shown in~\cite{guenzel2020sched_test_edf}.
	\textbf{EL EDF} improves the state of the art.

    	\begin{figure}[t]
		\centering
		\begin{subfigure}{.5\linewidth}
			\centering
			\includegraphics[width=.95\linewidth]{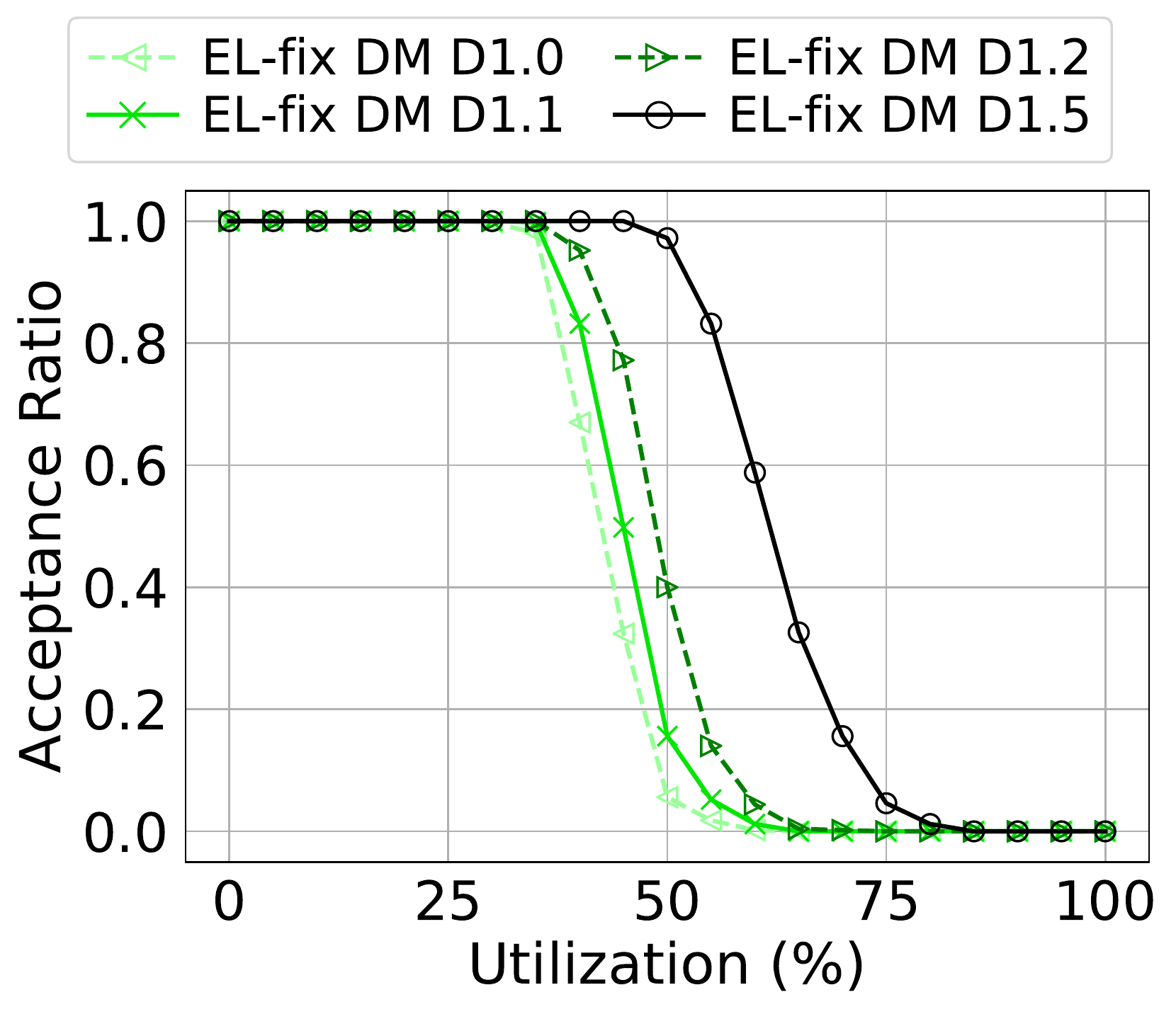}
			\caption{Fixed analysis window.\label{fig:eval_arbdl_dm_fix}}
		\end{subfigure}%
		\begin{subfigure}{.5\linewidth}
			\centering
			\includegraphics[width=.95\linewidth]{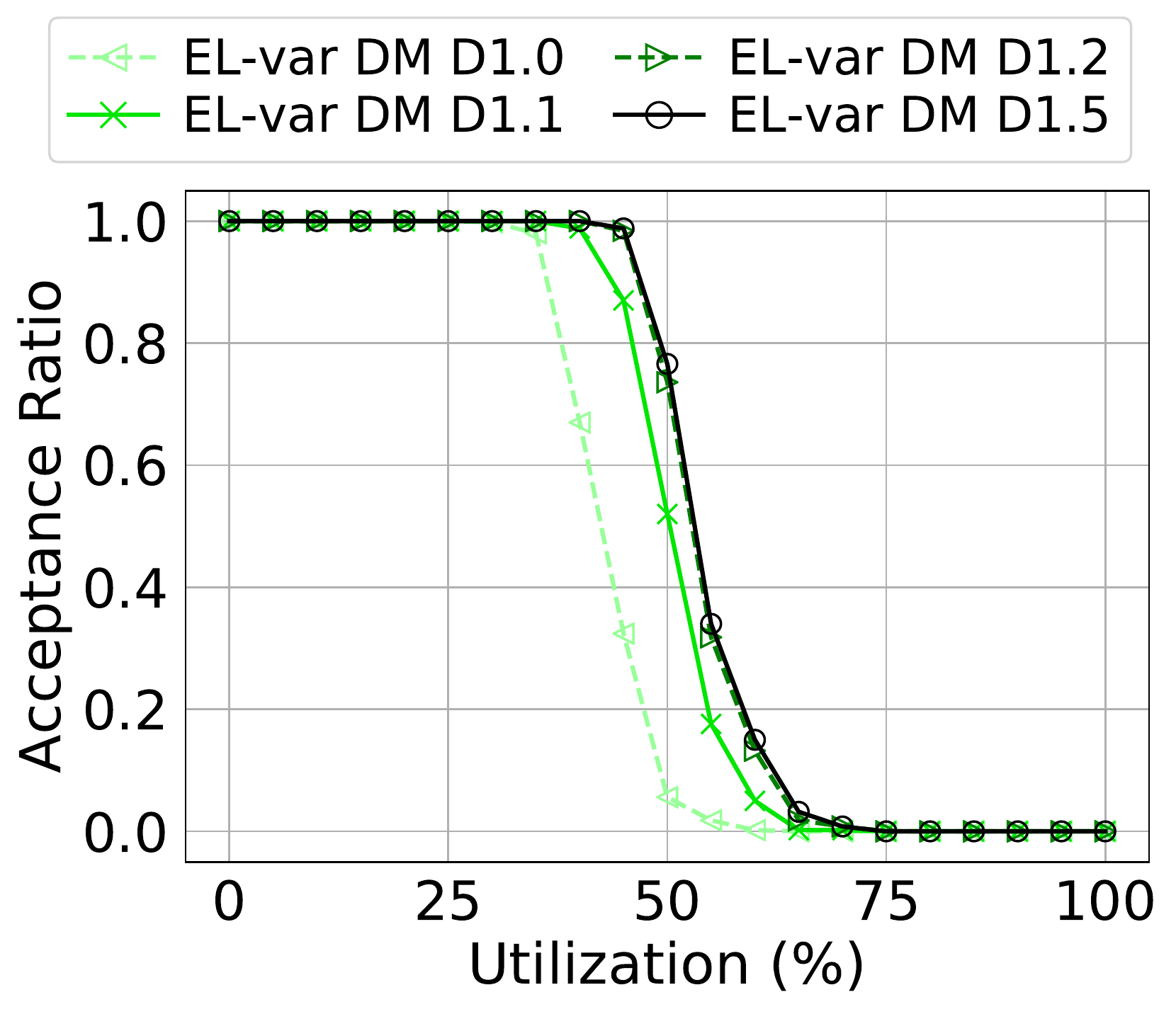}
			\caption{Variable analysis window.\label{fig:eval_arbdl_dm_var}}
		\end{subfigure}
		\caption{Arbitrary deadline evaluation for Deadline-Monotonic~(DM) scheduling.}
		\label{fig:eval_arbdl_dm}
	\end{figure}
	
	In Figure~\ref{fig:eqdf_and_saedf} the performance of our schedulability test is presented for different configurations for Earliest-Quasi-Deadline-First (EQDF) ($\Pi_i = D_i + \lambda C_i)$ and for suspension-aware EDF (SAEDF) ($\Pi_i = D_i + \lambda S_i$).
	As depicted, choosing $\lambda$ to be the best integer in $[-10,10]$, improves acceptance ratio compared to the standard EDF with $\lambda= 0$, especially for EQDF.
	
	In Figures~\ref{fig:eval_arbdl_dm} and~\ref{fig:eval_arbdl_edf}, 
	the performance of our schedulability test for arbitrary deadlines is presented.
	More specifically, we set the deadline to $x=1.0,1.1,1.2,1.5$ times the minimum inter-arrival time \textbf{(Dx)} and apply our schedulability test.
	We see that both the fixed and the variable analysis window lead to better acceptance ratios in certain scenarios, depending on the size of $x$ and the scheduling algorithm under analysis.
	The non-dominance discussion from Section~\ref{sec:dominance} can be observed in Figure~\ref{fig:eval_arbdl_dm} for D1.2 and D1.5.

	\begin{figure}[t]
		\centering
		\begin{subfigure}{.5\linewidth}
			\centering
			\includegraphics[width=.95\linewidth]{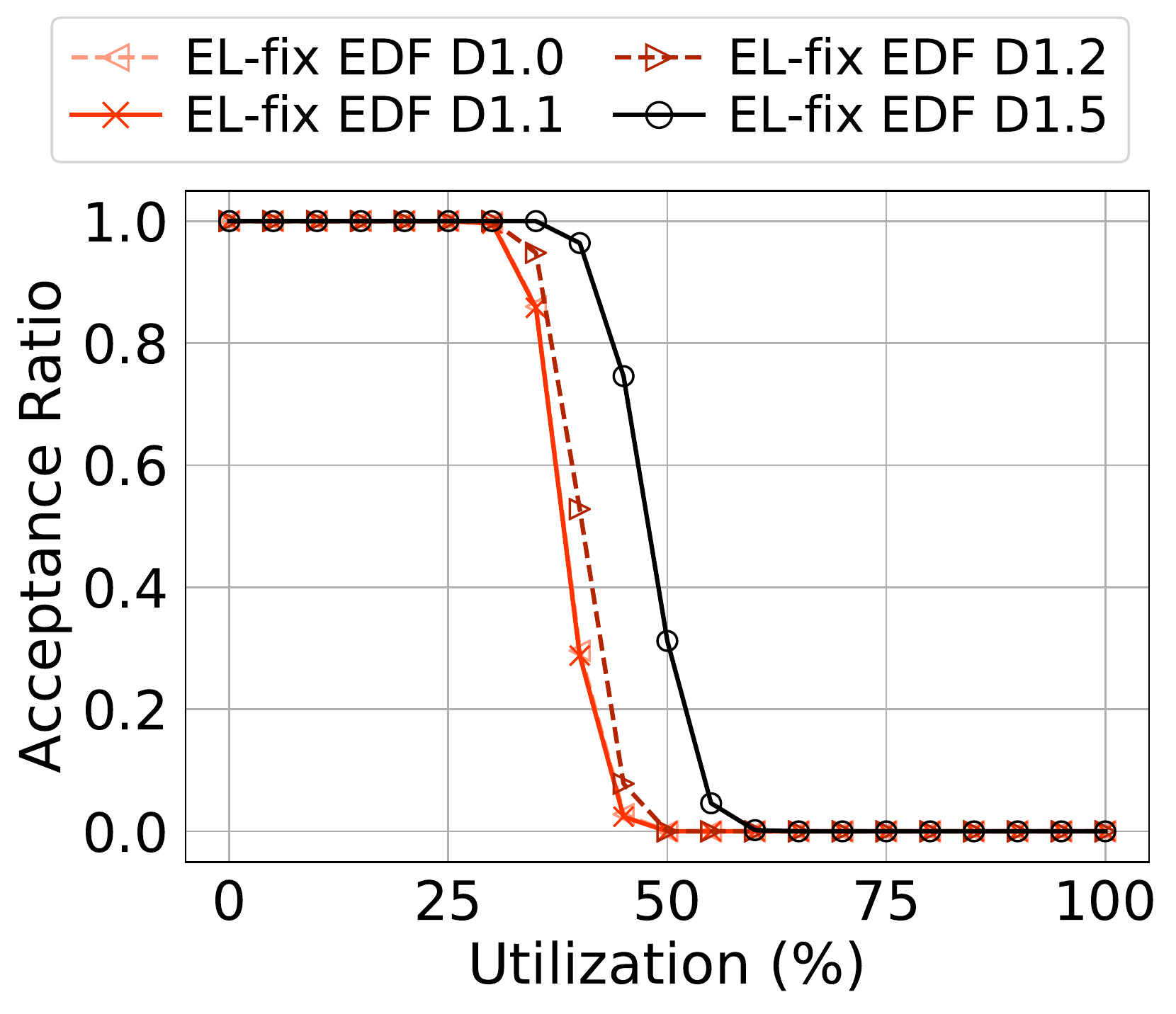}
			\caption{Fixed analysis window.\label{fig:eval_arbdl_edf_fix}}
		\end{subfigure}%
		\begin{subfigure}{.5\linewidth}
			\centering
			\includegraphics[width=.95\linewidth]{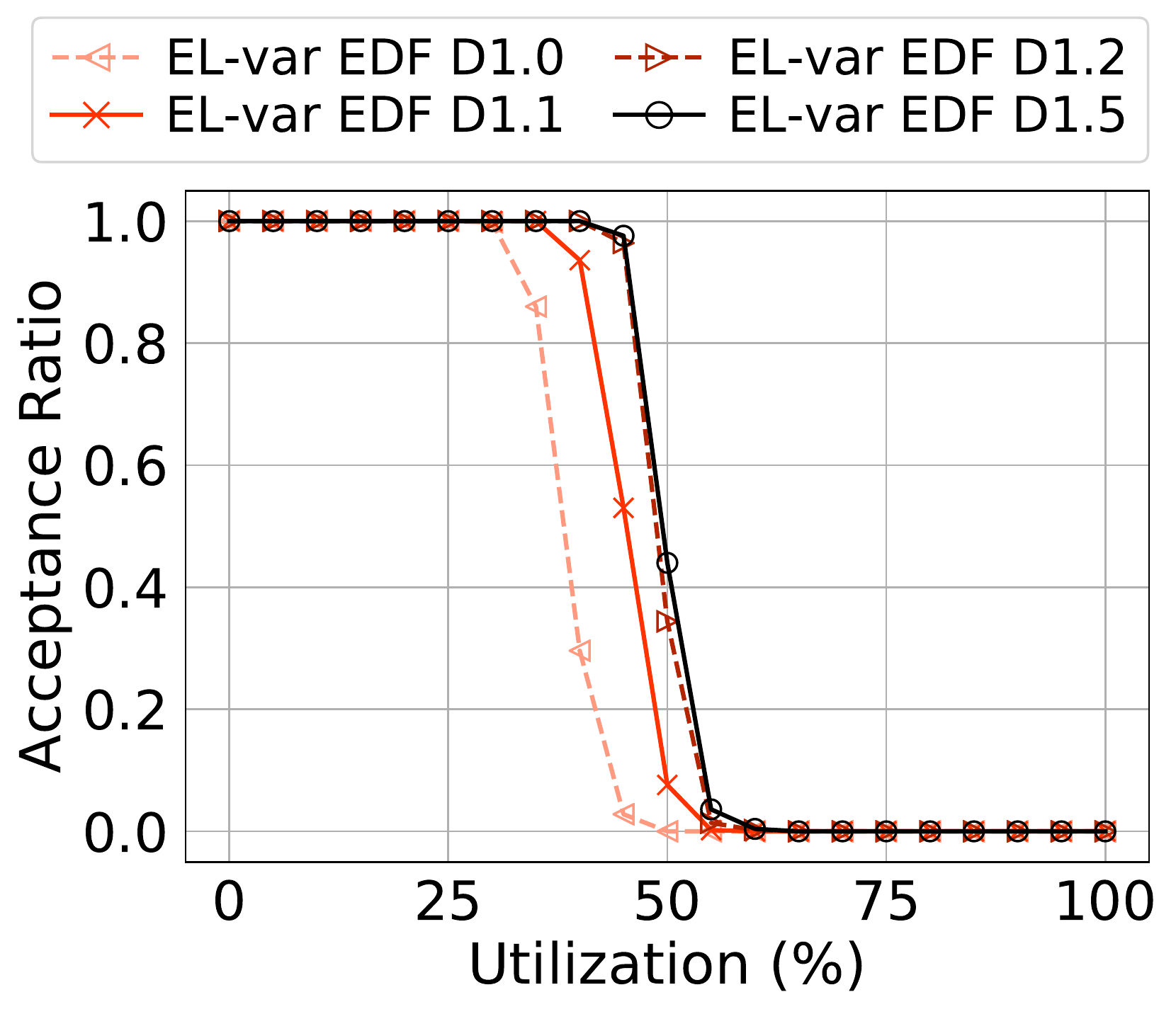}
			\caption{Variable analysis window.\label{fig:eval_arbdl_edf_var}}
		\end{subfigure}
		\caption{Arbitrary deadline evaluation for Earliest-Deadline-First (EDF) scheduling.}
		\label{fig:eval_arbdl_edf}
	\end{figure}

\begin{figure}[t]
	\centering
	\begin{subfigure}{.5\linewidth}
		\centering
		\includegraphics[width=.95\linewidth]{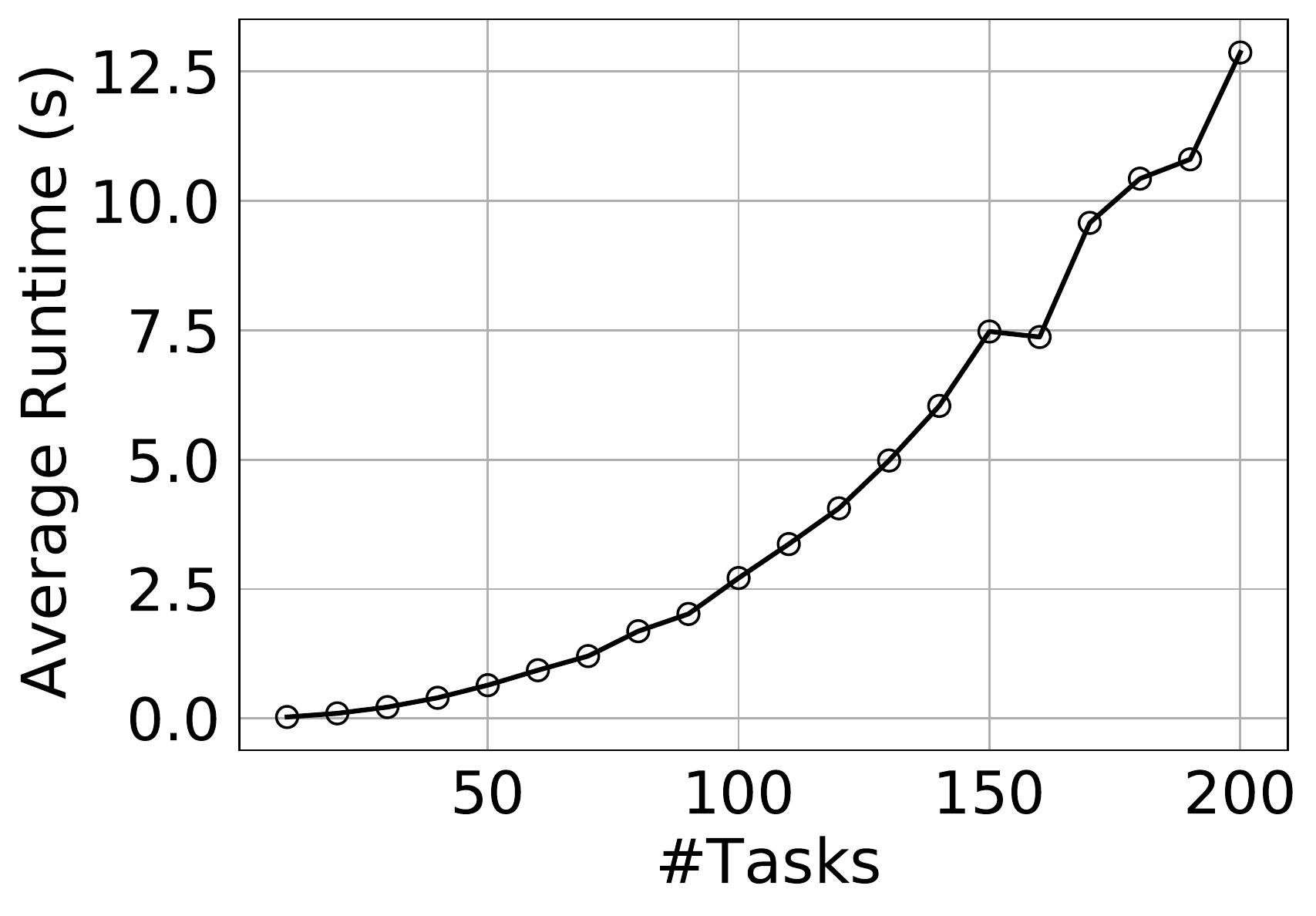}
		\caption{Average runtime.}
	\end{subfigure}%
	\begin{subfigure}{.5\linewidth}
		\centering
		\includegraphics[width=.95\linewidth]{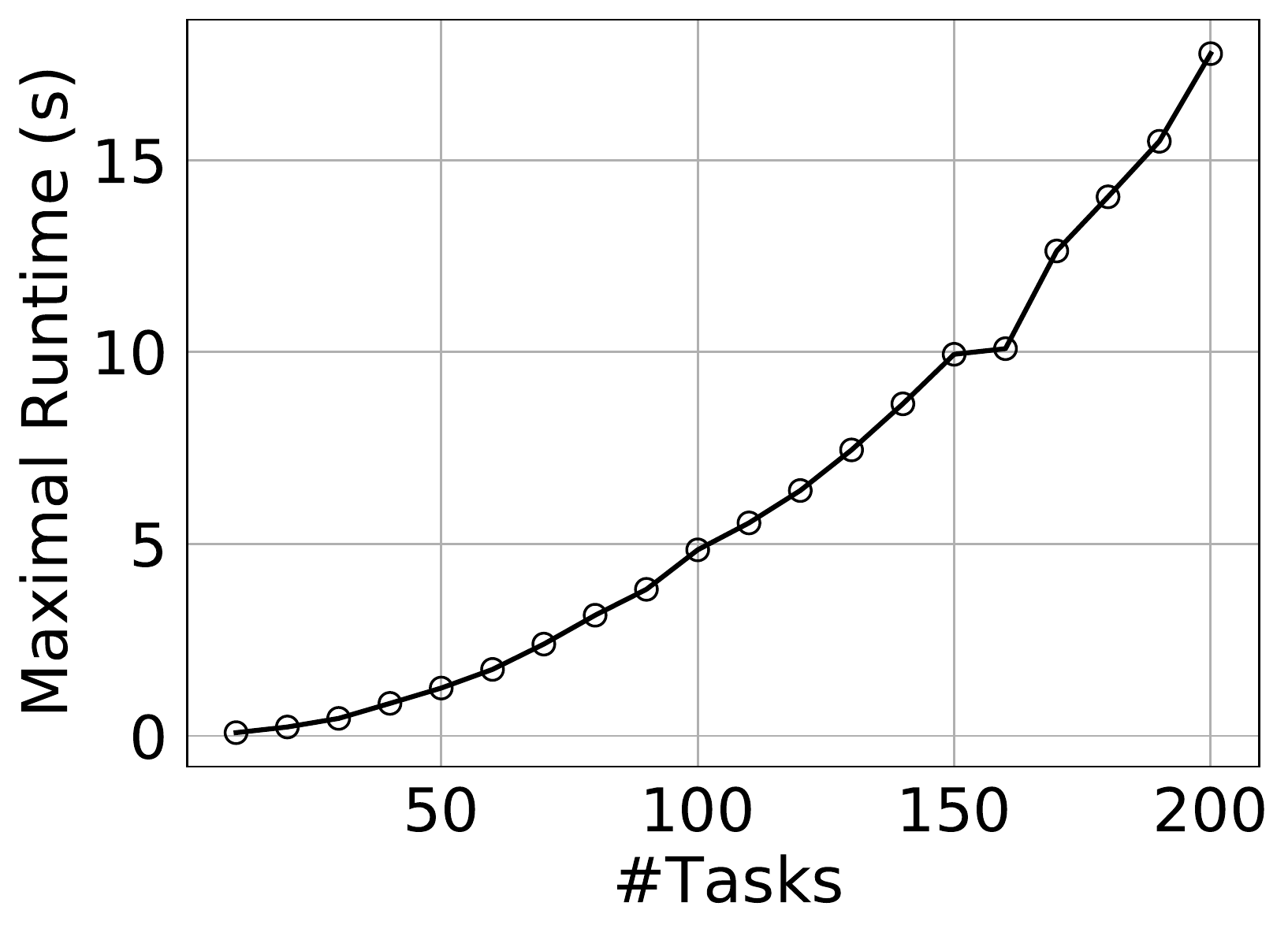}
		\caption{Maximal runtime.}
	\end{subfigure}
	\caption{Runtime (seconds) per task set of our schedulability test (EL EDF).}
	\label{fig:runtime_results}
\end{figure}

Furthermore, we study the impact of the number of tasks on the \textbf{runtime} of our analysis.
In this regard, we create 100 task sets for each utilization in $0\%$ to $100\%$ in steps of $10\%$ and measure the runtime that it takes to receive a schedulability decision.
As an example we show the results for EDF scheduling $(\Pi_i = D_i)$.
For other relative priority points, the runtime is comparable.
To obtain the measurements, we run an implementation with Python3 on a machine with 2x AMD EPYC 7742 running Linux, i.e., in total we have 256 threads with 2,25GHz and 256GB RAM.
Each of the measurements runs on one independent thread.
The results are presented in Figure~\ref{fig:runtime_results}.
We observe that the runtime of our method grows fast. 
However, even with $200$ tasks per task set, our schedulability test takes on average $12.87$ seconds and at most $17.77$ seconds to return the result for one task set.

\section{Conclusion}
\label{sec:conclusion}

In this work, we study EDF-Like (EL) scheduling algorithms.
Through an examination of different analysis intervals we provide two versions of a suspension-aware schedulability test, valid for all EL scheduling algorithms, even for arbitrary-deadline tasks.
We provide the first suspension-aware schedulability test to handle EL scheduling for arbitrary-deadline tasks.
In particular, this is also the first suspension-aware schedulability test for arbitrary-deadline tasks under First-In-First-Out (FIFO) scheduling, Earliest-Quasi-Deadline-First (EQDF) scheduling and Suspension-Aware EDF~(SAEDF) scheduling.

\bibliographystyle{abbrv}
\bibliography{real-time}

\begin{thebibliography}{10}

\bibitem{rtems}
{RTEMS: Real-Time executive for multiprocessor systems}.
\newblock \url{https://github.com/RTEMS/rtems/tree/5}, 2013.
\newblock last accessed on 2021-03-02.

\bibitem{ECRTS-AudsleyB04}
N.~C. Audsley and K.~Bletsas.
\newblock Fixed priority timing analysis of real-time systems with limited
  parallelism.
\newblock In {\em 16th Euromicro Conference on Real-Time Systems {(ECRTS)}},
  pages 231--238, 2004.

\bibitem{DBLP:conf/rtas/BackCS12}
H.~Back, H.~S. Chwa, and I.~Shin.
\newblock Schedulability analysis and priority assignment for global job-level
  fixed-priority multiprocessor scheduling.
\newblock In {\em {IEEE} Real-Time and Embedded Technology and Applications
  Symposium}, pages 297--306. {IEEE} Computer Society, 2012.

\bibitem{baruah_sporadic}
S.~K. Baruah, A.~K. Mok, and L.~E. Rosier.
\newblock Preemptively scheduling hard-real-time sporadic tasks on one
  processor.
\newblock In {\em proceedings Real-Time Systems Symposium (RTSS)}, pages
  182--190, Dec 1990.

\bibitem{DBLP:journals/rts/BiniB05}
E.~Bini and G.~C. Buttazzo.
\newblock Measuring the performance of schedulability tests.
\newblock {\em Real-Time Systems}, 30(1-2):129--154, 2005.

\bibitem{RTCSA-BletsasA05}
K.~Bletsas and N.~C. Audsley.
\newblock Extended analysis with reduced pessimism for systems with limited
  parallelism.
\newblock In {\em 11th {IEEE} International Conference on Embedded and
  Real-Time Computing Systems and Applications (RTCSA)}, pages 525--531, 2005.

\bibitem{litmusrt}
B.~Brandenburg.
\newblock {$\mbox{LITMUS}^{RT}$: Linux Testbed for Multiprocessor Scheduling in
  Real-Time Systems}.
\newblock \url{https://github.com/LITMUS-RT/litmus-rt}, 2006.
\newblock last accessed on 2021-03-02.

\bibitem{Brandenburg07feather-trace:a}
B.~Brandenburg and J.~H.
\newblock Feather-trace: A light-weight event tracing toolkit.
\newblock In {\em In Proceedings of the Third International Workshop on
  Operating Systems Platformsfor Embedded Real-Time Applications (OSPERT'07)},
  pages 61--70, 2007.

\bibitem{DBLP:conf/ecrts/ChenHHMB19}
J.~Chen, T.~Hahn, R.~Hoeksma, N.~Megow, and G.~von~der Br{\"{u}}ggen.
\newblock Scheduling self-suspending tasks: New and old results.
\newblock In S.~Quinton, editor, {\em 31st Euromicro Conference on Real-Time
  Systems, {ECRTS}}, volume 133, pages 16:1--16:23, 2019.

\bibitem{RTSS2016-suspension}
J.-J. Chen.
\newblock Computational complexity and speedup factors analyses for
  self-suspending tasks.
\newblock In {\em Real-Time Systems Symposium (RTSS)}, pages 327--338, 2016.

\bibitem{RTSS-ChenL14}
J.-J. Chen and C.~Liu.
\newblock Fixed-relative-deadline scheduling of hard real-time tasks with
  self-suspensions.
\newblock In {\em Real-Time Systems Symposium (RTSS)}, pages 149--160, 2014.

\bibitem{ChenECRTS2016-suspension}
J.-J. Chen, G.~Nelissen, and W.-H. Huang.
\newblock A unifying response time analysis framework for dynamic
  self-suspending tasks.
\newblock In {\em Euromicro Conference on Real-Time Systems (ECRTS)}, 2016.

\bibitem{suspension-review-jj}
J.-J. Chen, G.~Nelissen, W.-H. Huang, M.~Yang, B.~Brandenburg, K.~Bletsas,
  C.~Liu, P.~Richard, F.~Ridouard, Neil, Audsley, R.~Rajkumar, D.~de~Niz, and
  G.~von~der Br\"uggen.
\newblock Many suspensions, many problems: A review of self-suspending tasks in
  real-time systems.
\newblock {\em Real-Time Systems}, 2018.
\newblock \url{https://link.springer.com/article/10.1007/s11241-018-9316-9}.

\bibitem{DBLP:conf/ecrts/Devi03}
U.~C. Devi.
\newblock An improved schedulability test for uniprocessor periodic task
  systems.
\newblock In {\em 15th Euromicro Conference on Real-Time Systems ({ECRTS})},
  pages 23--32, 2003.

\bibitem{DBLP:conf/rtss/DongL16}
Z.~Dong and C.~Liu.
\newblock Closing the loop for the selective conversion approach: {A}
  utilization-based test for hard real-time suspending task systems.
\newblock In {\em {RTSS}}, pages 339--350. {IEEE} Computer Society, 2016.

\bibitem{emberson2010techniques}
P.~Emberson, R.~Stafford, and R.~I. Davis.
\newblock Techniques for the synthesis of multiprocessor tasksets.
\newblock In {\em International Workshop on Analysis Tools and Methodologies
  for Embedded and Real-time Systems (WATERS 2010)}, pages 6--11, 2010.

\bibitem{DBLP:conf/ecrts/EricksonA12}
J.~P. Erickson and J.~H. Anderson.
\newblock Fair lateness scheduling: Reducing maximum lateness in g-edf-like
  scheduling.
\newblock In {\em {ECRTS}}, pages 3--12. {IEEE} Computer Society, 2012.

\bibitem{DBLP:journals/rts/GunzelC20}
M.~G{\"{u}}nzel and J.-J. Chen.
\newblock Correspondence article: Counterexample for suspension-aware
  schedulability analysis of {EDF} scheduling.
\newblock {\em Real Time Syst.}, 56(4):490--493, 2020.

\bibitem{gunzel2021note}
M.~G{\"u}nzel and J.-J. Chen.
\newblock A note on slack enforcement mechanisms for self-suspending tasks.
\newblock {\em Real-Time Syst.}, 2021.

\bibitem{guenzel2020sched_test_edf}
M.~{G\"unzel}, G.~{von der Br\"uggen}, and J.-J. {Chen}.
\newblock Suspension-aware earliest-deadline-first scheduling analysis.
\newblock {\em IEEE Transactions on Computer-Aided Design of Integrated
  Circuits and Systems}, 39(11):4205--4216, 2020.

\bibitem{Huang:multiseg}
W.-H. Huang and J.-J. Chen.
\newblock Schedulability and priority assignment for multi-segment
  self-suspending real-time tasks under fixed-priority scheduling.
\newblock Technical report, Technical University of Dortmund, Dortmund,
  Germany, 2015.

\bibitem{WC16-suspend-DATE}
W.-H. Huang and J.-J. Chen.
\newblock Self-suspension real-time tasks under fixed-relative-deadline
  fixed-priority scheduling.
\newblock In {\em Design, Automation, and Test in Europe (DATE)}, pages
  1078--1083, 2016.

\bibitem{huangpass:dac2015}
W.-H. Huang, J.-J. Chen, H.~Zhou, and C.~Liu.
\newblock {PASS}: Priority assignment of real-time tasks with dynamic
  suspending behavior under fixed-priority scheduling.
\newblock In {\em Proceedings of the 52nd Annual Design Automation Conference
  (DAC)}, pages 154:1--154:6, 2015.

\bibitem{joseph86responsetimes}
M.~Joseph and P.~Pandya.
\newblock {Finding Response Times in a Real-Time System}.
\newblock {\em The Computer Journal}, 29(5):390--395, May 1986.

\bibitem{RTCSA-KimCPKH95}
I.~Kim, K.~Choi, S.~Park, D.~Kim, and M.~Hong.
\newblock Real-time scheduling of tasks that contain the external blocking
  intervals.
\newblock In {\em {RTCSA}}, pages 54--59, 1995.

\bibitem{Kim2016}
J.~Kim, B.~Andersson, D.~de~Niz, J.-J. Chen, W.-H. Huang, and G.~Nelissen.
\newblock Segment-fixed priority scheduling for self-suspending real-time
  tasks.
\newblock Technical Report CMU/SEI-2016-TR-002, CMU/SEI, 2016.

\bibitem{lehoczky-1989}
J.~Lehoczky, L.~Sha, and Y.~Ding.
\newblock The rate monotonic scheduling algorithm: exact characterization and
  average case behavior.
\newblock In {\em RTSS}, pages 166--171, 1989.

\bibitem{lehoczky89}
J.~P. Lehoczky, L.~Sha, and Y.~Ding.
\newblock The rate monotonic scheduling algorithm: Exact characterization and
  average case behavior.
\newblock In {\em IEEE Real-Time Systems Symposium'89}, pages 166--171, 1989.

\bibitem{DBLP:conf/rtss/LeontyevA07}
H.~Leontyev and J.~H. Anderson.
\newblock Generalized tardiness bounds for global multiprocessor scheduling.
\newblock In {\em {RTSS}}, pages 413--422. {IEEE} Computer Society, 2007.

\bibitem{DBLP:conf/rtss/LeontyevCA09}
H.~Leontyev, S.~Chakraborty, and J.~H. Anderson.
\newblock Multiprocessor extensions to real-time calculus.
\newblock In {\em {RTSS}}, pages 410--421. {IEEE} Computer Society, 2009.

\bibitem{DBLP:conf/ecrts/LiuA13}
C.~Liu and J.~H. Anderson.
\newblock Suspension-aware analysis for hard real-time multiprocessor
  scheduling.
\newblock In {\em 25th Euromicro Conference on Real-Time Systems, {ECRTS}},
  pages 271--281, 2013.

\bibitem{LiuChen:rtss2014}
C.~Liu and J.~Chen.
\newblock Bursty-interference analysis techniques for analyzing complex
  real-time task models.
\newblock In {\em Real-Time Systems Symposium (RTSS)}, pages 173--183, 2014.

\bibitem{liu73scheduling}
C.~L. Liu and J.~W. Layland.
\newblock Scheduling algorithms for multiprogramming in a hard-real-time
  environment.
\newblock {\em Journal of the ACM}, 20(1):46--61, 1973.

\bibitem{Liu:2000:RS:518501}
J.~W.~S. Liu.
\newblock {\em {Real-Time Systems}}.
\newblock Prentice Hall PTR, 1st edition, 2000.

\bibitem{ecrts15nelissen}
G.~Nelissen, J.~Fonseca, G.~Raravi, and V.~N\'elis.
\newblock {Timing Analysis of Fixed Priority Self-Suspending Sporadic Tasks}.
\newblock In {\em Euromicro Conference on Real-Time Systems (ECRTS)}, pages
  80--89, 2015.

\bibitem{DBLP:conf/rtcsa/PengF16}
B.~Peng and N.~Fisher.
\newblock Parameter adaption for generalized multiframe tasks and applications
  to self-suspending tasks.
\newblock In {\em 22nd {IEEE} International Conference on Embedded and
  Real-Time Computing Systems and Applications, {RTCSA}}, pages 49--58. {IEEE}
  Computer Society, 2016.

\bibitem{vdBrueggen-RTCSA2017}
G.~von~der Br\"uggen, W.-H. Huang, and J.-J. Chen.
\newblock Hybrid self-suspension models in real-time embedded systems.
\newblock In {\em International Conference on Real-Time Computing Systems and
  Applications (RTCSA)}, 2017.

\end{thebibliography}

\end{document}